\newcommand{\N}{\mathbb{N}}
\newcommand{\Z}{\mathbb{Z}}
\newcommand{\R}{\mathbb{R}}
\newcommand{\C}{\mathbb{C}}
\renewcommand{\P}{\mathbb{P}}
\newcommand{\E}{\mathbb{E}}
\newcommand{\ve}{\varepsilon}
\newtheorem{thm}{Theorem}[section]
\newtheorem{thmcite}[thm]{Theorem}
\newtheorem{lem}[thm]{Lemma}
\newtheorem{cor}[thm]{Corollary}
\newtheorem{remm}[thm]{Remark}
\newtheorem{deffo}[thm]{Definition}
\newtheorem{prop}[thm]{Proposition}
\begin{document}

	\title[Lifting method for exponential large deviation estimates]{A ``lifting'' method for exponential large deviation estimates and an application to certain non-stationary 1D lattice
		Anderson models}
	% Force line breaks with \\
	\author{Omar Hurtado}
	
	\address{University of California, Irvine\\
		ohurtad1@uci.edu}

	\date{\today}% It is always \today, today,
	%  but any date may be explicitly specified
	
	\begin{abstract}
		Proofs of localization for random Schr\"odinger operators with sufficiently regular distribution of the potential can take advantage of the fractional moment method introduced by Aizenman-Molchanov, or use the classical Wegner estimate as part of another method, e.g. the multi-scale analysis introduced by Fr\"ohlich-Spencer and significantly developed by Klein and his collaborators. When the potential distribution is singular, most proofs rely crucially on exponential estimates of events corresponding to finite truncations of the operator in question; these estimates in some sense substitute for the classical Wegner estimate. We introduce a method to ``lift'' such estimates, which have been obtained for many stationary models, to certain closely related non-stationary models. As an application, we use this method to derive Anderson localization on the 1-D lattice for certain non-stationary potentials along the lines of the non-perturbative approach developed by Jitomirskaya-Zhu in 2019.
	\end{abstract}
	
	\maketitle
	
	\section{Introduction}
	The study of large deviations has featured prominently in many proofs of localization for ergodic Schr\"odinger operators, both in the random and quasi-periodic context. For the random case, the first proof of Anderson localization for the Anderson model with singular potentials on the one-dimensional lattice in \cite{Carmona1987} by Carmona-Klein-Martinelli used large deviation estimates coming from a study of the Lyapunov exponent; later proofs in e.g. \cite{Shubin1998}, \cite{Bucaj2019LocalizationFT}, \cite{GORODETSKI}, and \cite{SJZhu19} either introduced new approaches or made simplifications, but still relied on large deviation estimates in some way.
	
	For the two-dimensional Anderson model with singular potentials, Ding-Smart in \cite{Ding-Smart} recently obtained localization at the edges of the spectrum, and a key unique continuation result relied crucially on large deviation estimates in the form of the Azuma inequality.   This unique continuation enabled them to use the strategy introduced by Bourgain-Kenig in \cite{Bourgain2005} and elucidated by Germinet-Klein in \cite{Germinet2001}.  The result in \cite{Ding-Smart} has since been improved upon in \cite{Li20}, and was combined with a $\Z^3$ unique continuation result to obtain the same localization result in the three-dimensional context in \cite{Li-Zhang}. In the quasi-periodic context, analysis of the large deviation sets is crucial for e.g. the method introduced by Bourgain and his co-authors in \cite{BG00}, detailed in \cite{BourgainBook}. Recent work in this vein includes \cite{Jitomirskaya2020} and \cite{LiuAPDE}.
	
	To be more explicit, when we speak of discrete Schr\"odinger operators in dimension $d$ we mean operators on $\ell^2(\Z^d)$ of the form
	\begin{equation}\label{schrodgen} [H \psi](n) = \sum_{|m-n|=1} \psi(m) + V_n\psi(n)
	\end{equation}
	In both the random and quasi-periodic cases, when large deviation estimates are under consideration, they are often of the form
	\begin{equation}\label{largedev}\P[\text{A square of length } \ell\text{ is ``bad''}] \leq e^{-\eta \ell}
	\end{equation} for some $\eta > 0$ and some contextual notion of ``bad''. In general, most proofs of localization require statements like the above, but with $e^{-\eta \ell}$ possibly replaced by some other $f(\ell)$ vanishing as $\ell \rightarrow \infty$. For the special case in (\ref{largedev}) of an exponential decay, we've produced a method by which to ``lift'' estimates of this form from some joint distribution $\P_1$ of the values $\{V_n(\omega)\}$ to another distribution $\P_0$ which is closely related. We note that our method admits some generalization to arbitrary $f(\ell)$, but both the constraints on $\P_0$ and the losses in the ``lifted'' rate can be severe for slower rates of decay, motivating our emphasis on exponential estimates. 
	
	In particular, the method is still somewhat robust for ``almost exponential'' decays like $f(\ell) = e^{-\eta\ell^q}$ (where $0< q < 1$), and not very robust for power law decays, i.e. $f(\ell) = \ell^{-q}$ with $q > 0$. The method requires estimates on how ``close'' $\P_0$ is to $\P_1$ when both are restricted to finitely many coordinates, so we expect applications to be mostly (or exclusively) in the random context, where the potential at any given site has no dependence on the potential at other sites; in such cases, said estimates can be obtained in fairly natural circumstances. We note however that the underlying probabilistic results apply in all contexts where the requisite estimates hold.
	
	Specifically, we let $\mu$ be a non-trivial probability distribution on $\R$ satisfying an appropriate moment condition. (By non-trivial, we mean $\mu$ is supported on at least two points.) When $d=1$ and $V_n$ are identically distributed with law $\mu$, the transfer matrices associated to the random operator
	\begin{equation*}[H \psi](n) = \psi(n+1) + \psi(n-1) + V_n\psi(n) \end{equation*}
	satisfy large deviation estimates as a consequence of the work in \cite{LePage}, which should be understood in the wider context of the Furstenberg theory of random matrix products. More explicitly, we let $\Omega = \R^\Z$, $\P = \mu^\Z$, and $V_n:\Omega\rightarrow \R$ be projection onto the $n$-th coordinate. Then the family of operators
	\begin{equation} \label{schrod}
		[H_\omega\psi](n) = \psi(n+1)+\psi(n-1)+V_n(\omega)\psi(n)
	\end{equation}
	are precisely distributed in this way. This explicit description is necessary both in comparing such families to families with slightly different distribution, and in the method used by \cite{SJZhu19} specifically which we hope to adapt to our context.
	
	For a system where $V_n$ are independently distributed with laws of the form $g_n\mu$, with $g_n > 0$, $\int g_nd\mu = 1$ and $g_n \in L^\infty(d\mu)$, we derive the existence of similar large deviation estimates for the transfer matrices associated to said system so long as the key condition below holds: 
	\begin{equation}\label{logmom} \lim_{N\rightarrow \infty} \frac{1}{N} \sum_{n=-N}^N \log \|g_n\|_\infty = 0 
	\end{equation}
	i.e. $\sum_{n=-N}^N \log \|g_n\|_\infty = o(N)$. Further, this system will show more or less the same Lyapunov behavior as if $V_n(\omega)$ were i.i.d. with law $\mu$. By this we mean that the ``approximate'' system with potential at site $n$ distributed with law $g_n\mu$ has a Lyapunov exponent $\tilde{\gamma}(E)$ which describes the asymptotics of the associated transfer matrices, and at all energies this coincides with the Lyapunov exponent $\gamma(E)$ describing the transfer matrices associated to the ``exact'' system where the potential at each site is distributed with law $\mu$. This probabilistic result can loosely be understood as a multiplicative version of the standard result found in e.g. \cite{dupuis_ellis} that large deviation principles for stochastic processes are preserved under certain forms of super-exponential approximation. 
	
	These estimates are then used to derive almost sure Anderson localization for the associated random family of Schr\"odinger operators, using a strategy for proving localization along the lines of \cite{SJZhu19} and \cite{Rangamani19}. In the case where $\mu$ has unbounded support, we will require further technical assumptions, the most salient of which is the strengthening of (\ref{logmom}) to
	\begin{equation}\label{logmomunif}
		\lim_{N\rightarrow \infty} \frac{1}{N} \left[\sup_{k\in\Z}\sum_{n=k-N}^{k+N}\log\|g_n\|_\infty \right] = 0
	\end{equation}

	To our knowledge, the only work which has obtained large deviation estimates for non-stationary matrix products is from Gorodetski-Kleptsyn in \cite{GORODETSKI2}; \cite{Goldsheid2022} also analyzes the asymptotics of non-stationary random matrix products, but does not produce large deviation estimates. We note the the large deviation estimates found in \cite{GORODETSKI2} hold in an overall more general setting than ours, forgoing any kind of convergence condition on the sequence of distributions and the requirement that they all be absolutely continuous with respect to some base distribution $\mu$. 
	
	However, in this considerably more general setting, a Lyapunov exponent cannot be expected to exist for all the non-stationary matrix products satisfying the relevant assumptions. In particular, to our knowledge the results in \cite{GORODETSKI2} work cannot be combined with the Craig-Simon subharmonicity result in \cite{CS83}, which uses subharmonicity of the Lyapunov exponent $\gamma(z)$ to achieve a certain kind of uniformity in energy.
	
	The general large deviations they obtain consequently do not seem to be compatible with the approach of \cite{SJZhu19}, which we use to derive localization. We came to know while preparing this work that forthcoming work of Gorodetski-Kleptsyn will prove Anderson localization in the context of distributions satisfying the requirements in \cite{GORODETSKI2} via completely different methods, using a purely dynamical approach which builds upon the approach used in \cite{GORODETSKI}.
	
	We emphasize that the results in \cite{GORODETSKI2} are on the whole considerably more general in the context of large deviation estimates corresponding to non-stationary matrix products. Similarly, the forthcoming localization result from Gorodetski-Kleptsyn will be far more general then the localization results we obtain here. There are certain rare examples for which our results are applicable and those in \cite{GORODETSKI2} are not, but ``most'' examples satisfying the hypotheses of our work satisfy the hypotheses in \cite{GORODETSKI2}. 
	
	At the same time, the method presented there is one-dimensional in an essential way, as the transfer matrix method which allows the study of generalized eigenfunctions in terms of matrix products is only available in one-dimensional and quasi-one-dimensional contexts. In contrast, our lifting method, while imposing much harsher restrictions on the distributions considered, is fundamentally an abstract probabilistic result about product measures, and is easily adapted to the study of non-stationary potentials on $\Z^d$ for any $d$, or any reasonably ``tame'' lattice. 
	
	Moreover, while the large deviation work in \cite{GORODETSKI2} does not require compact support of the distribution and makes morally the same mild moment assumptions as we do in this work and as are made in e.g. \cite{Carmona1987}, \cite{Rangamani19}, it is our understanding that the forthcoming localization proof from Gorodetski-Kleptsyn will further the approach of \cite{GORODETSKI} and in particular will require the potentials $V_n(\omega)$ to have a uniform bound. We point out that our localization result holds for unbounded distributions under our more stringent assumptions.
	
	These results are of special interest in the case where $\mu$ is singular. With sufficient regularity, the study of the problem becomes amenable to the fractional moment method pioneered in \cite{Aizenman1993}; such methods are able to handle non identically distributed distributions in considerably higher generality than our methods under the requisite assumptions. The methods of \cite{ks81}, developed earlier, are also available for the one-dimensional problem specifically. In the continuum setting, Klein obtained localization for certain non-stationary Anderson models with no regularity assumptions made in \cite{Klein2013}. This was a consequence of Wegner estimates derived more generally using the quantitative unique continuation introduced by Bourgain-Kenig in \cite{Bourgain2005} and further elaborated upon by Bourgain-Klein in \cite{Bourgain2013}.
	
	In addition, the requirement that the laws of $V_n(\omega)$ are all absolutely continuous with respect to some base measure $\mu$, becomes considerably more natural in the context of finite valued potentials specifically. If e.g. $\mu = \sum_{m=1}^M \alpha_m \delta_{x_m}$ then absolute continuity implies that $\mu_n = \sum_{m=1}^M \beta_{n,m} \delta_{x_m}$, and $\|g_n\|_\infty= \max\{\frac{\beta_{n,m}}{\alpha_m}\}$. Physically, the situation corresponds to pockets of zero natural density where the probability differs from what it should be more than any given $\ve > 0$; (\ref{logmom}) is in this context precisely equivalent to $\beta_{n,m} \xrightarrow{d} \alpha_m$ for all $m$, where the convergence is in the sense of natural density, or more precisely its obvious analogue on $\Z$.
	
	We also prove another result under even stronger assumptions than (\ref{logmomunif}) that essentially makes all the statistics of a system with $V_n(\omega)$ distributed as $g_n\mu$ identical to those of i.i.d. $V_n(\omega)$ with law $\mu$; specifically if
	\begin{equation}\label{logsum}
		\sum_{n\in\Z} \log\|g_n\|_\infty < \infty
	\end{equation}
	then the joint distribution $\P_0$ corresponding to the non-stationary case is absolutely continuous with respect to the joint distribution $\P_1$ with $V_n(\omega)$ i.i.d. This argument, under more stringent conditions, essentially allows a great number of results, like e.g. the dynamical localization results from \cite{SJZhu19}, to be carried over wholesale from stationary contexts to non-stationary contexts.
	
	We mention in brief that our model is just one of many where the potential incorporates randomness but is not just given by i.i.d. random variables. A comprehensive review is well beyond the scope of this work; we briefly mention the random polymer models studied in e.g. \cite{Jitomirskaya2003} and \cite{Rangamani2022} which have potential purely driven by randomness but allow some ``local'' dependence among the variables determining the potential, mixed models studied in e.g. \cite{Cai2022}, \cite{Damanik2022} which consider potentials given by random terms together with terms which are either quasi-periodic  or periodic, and ``trimmed'' models considered in e.g. \cite{Elgar2014} and \cite{Elgar2017}.
	
	The rest of the paper is organized as follows: in Section \ref{prelim} we introduce the necessary probabilistic definitions and our results, introducing our three main probabilistic results and their consequences for localization of non-stationary Anderson models. In Section \ref{prob}, we prove our probabilistic results. In Section \ref{schrodd} we recall basic facts about Schr\"odinger operators and results regarding stationary Anderson models. In Section \ref{ldeland} we use our probabilistic results to derive large deviation estimates and prove important consequences thereof, namely identical Lyapunov behavior for the non-stationary approximate system and applicability of the Craig-Simon subharmonicity result for said system. In Section \ref{lemma4evr}, we prove technical lemmas necessary to prove our main localization result concerning unbounded potentials, Theorem \ref{speclocunb}, and comment on the small changes necessary to prove the similar Theorem \ref{specloc} which allows weaker hypotheses in the case that the potential is bounded. Finally, in Section \ref{fini4evr}, we prove Theorem \ref{speclocunb}.

	\section{Preliminaries and statements of results}\label{prelim}
	
	Throughout, $\P_0$ and $\P_1$ will denote distinct probability distributions on the same measure space $(X, \mathcal{F})$. When the discussion is specified to Schr\"odinger operators, $(X,\mathcal{F})$ will be $\R^\Z$ and the product Borel algebra. In this case, $\P_0$ and $\P_1$ can also be understood as joint distributions of variables $V_n(\omega)$, so that the $V_n$ are projection maps. The expectation with respect to $\P_0$ and $\P_1$ will be denoted $\E_0$ and $\E_1$ respectively. Whenever $\mathcal{G} \subset \mathcal{F}$ are $\sigma$-algebras and $X$ is an $L^1(\P_i,\mathcal{F})$ random variable, we denote its conditional expectation with respect to  $\mathcal{G}$ and $\P_i$ by $\E_i[X\,|\,\mathcal{G}]$. That is, $\E_i[X\,|\,\mathcal{G}]$ is the unique (up to $\P_i$-a.e. equivalence) $\mathcal{G}$ measurable variable which satisfies
	\[ \E_i[\chi_A\cdot X] = \E_i[\chi_A\cdot \E_i[X\,|\,\mathcal{G}]]\]
	for all $A \in \mathcal{G}$. Throughout, $\P_1$ can be considered an ``exact'' distribution, and $\P_0$ an ``approximate'' distribution. 
	
	For two measures $\nu$ and $\mu$ on a measure space, $\nu \ll \mu$ denotes absolute continuity of $\nu$ with respect to $\mu$. When we specify the conversation to product measures, we will fix a probability measure $\mu$ on $\R$, and consider non-negative $g_n \in L^\infty(d\mu)$ with $g \geq 0$ and $\int g d\mu = 1$. In a natural way these functions correspond to probability measures $\nu \ll \mu$ with essentially bounded Radon-Nikodym derivatives. In some sense these measures are the more relevant objects, but we identify them with their densities for notational simplicity, so that we may write e.g. $g_n$ and $g_n\mu$ instead of $\frac{d\mu_n}{d\mu}$ and $\mu_n$.
	
	Given $(\Omega, \mathcal{F})$ some measure space, $\mathcal{F}_n$ a filtration and $\P_0$, $\P_1$ two probability measures such that
	$\P_0|_{\mathcal{F}_n} \ll \P_1|_{\mathcal{F}_n}$, we will define
	\[ H_n:= \frac{d(\P_0|_{\mathcal{F}_n})}{d(\P_1|_{\mathcal{F}_n})}\]
	Sometimes we will consider families of filtrations, i.e. collections $(\mathcal{F}_n^k)$ where for any fixed $k \in \Z$ $(\mathcal{F}_n^{k_0})$ is a filtration. In this case, we set
	\[ H_n^k:= \frac{d(\P_0|_{\mathcal{F}_n^k})}{d(\P_1|_{\mathcal{F}_n^k})}\] We introduce notions relevant only to the study of Schr\"odinger operators later in the paper.
	
	\begin{deffo}
		We let $(X,\mathcal{M})$ be a measurable space, and $(A^E_n)_{(n,E) \in \N \times \R}$ a collection of measurable sets. (This should be understood as a collection of sequences indexed by $E \in \R$.) We say that the collection decays exponentially uniformly in $E$ with respect to a probability measure $\P$ if there exists $N \in \N$ and $\eta >0$ such that for $n> N$ and all $E$, we have
		\[ \P[A^E_n] \leq e^{-\eta n}\]

	\end{deffo}
	\begin{deffo}	
		We say a collection of sequences of events $(A_n^E)_{(n,E)}$ (with $n \in \N, E \in \R$) is adapted to a filtration of $\sigma$-algebras $(\mathcal{F}_n)_{n \in \N}$ if for all $E$ and $n$ we have $A_n^E \in \mathcal{F}_n$.
	\end{deffo}
	
	Note that there is nothing special about the choice of $\R$ for indexing our collection of sequences, save that it is what we will use later in this work, with $E$ representing energy.
	
	\begin{thm} \label{transmeth}
		Let $(X,\mathcal{M})$ be a measurable space. Further let $\Omega = X^\Z$ and $\mathcal{B}$ the $\sigma$-algebra on $X^\Z$ generated by measurable cylinder sets. Let $\mu$ be a probability measure on $(X,\mathcal{M})$ and $g_n$ a sequence of non-negative functions with $g_n \in L^\infty(d\mu)$ and $\int g_nd\mu = 1$. We define on $(\Omega,\mathcal{B})$ the product measures:
		\[ \P_0 = \bigotimes_{n\in\Z} g_n\mu,\quad \P_1 = \mu^\Z\]
		and the coordinate projections $V_n$ by $V_n(\omega) = \omega_n$ for $\Omega \ni \omega = (\omega_n)_{n\in\Z}$.
		Finally, we define the $\sigma$-algebras
		\[\mathcal{F}_n = \begin{cases} \sigma(V_{-n},V_{1-n},\ldots V_{n-1},V_n)\quad &n>0\\
			\{\Omega,\varnothing\}\quad &n=0 \end{cases}\]
		If the $g_n$ satisfy:
		\begin{equation}
			\lim_{N\rightarrow \infty} \frac{1}{N} \sum_{n=-N}^N\log\|g_n\|_\infty = 0 \tag{\ref{logmom}}
		\end{equation}
		then any collection of events $(A_n^E)$ which is adapted to $(\mathcal{F}_n)_{n \in \N}$ and exponentially decaying uniformly in $E$ in $\P_1$ is also exponentially decaying uniformly in $E$ in $\P_0$.
	\end{thm}
	
	A straightforward consequence of Theorem \ref{transmeth} is large deviation estimates valid for the joint distribution $\P_0$, and as a consequence the existence of a Lyapunov exponent, the same one existing for the joint distribution $\P_1$.
	\begin{thm}
		Let $\Omega = \R^\Z$ and $\P_1 = \mu^\Z$ for some non-trivial (i.e. supported on at least two points) $\mu$ such that there is $\alpha> 0$ for which $\int |x|^\alpha d\mu(x) < \infty$, and define the family of operators $H_\omega$ by
		\[ H_\omega \psi(n) = \psi(n+1) + \psi(n-1) + V_n(\omega)\psi(n)\]
		Moreover, let $S_{[1,n]}^z(\omega)$ be the $SL_2(\C)$ matrices satisfying
		\[ S_{[1,n]}^z(\omega)\begin{pmatrix} \psi(1) \\ \psi(0) \end{pmatrix}  = \begin{pmatrix} \psi(n+1) \\ \psi(n) \end{pmatrix}\]
		for any solution $\psi \in \C^\Z$ to $H_\omega \psi = z\psi$, and $\gamma(z) = \lim_{n\rightarrow \infty} \frac{1}{n} \E_1[\log\|S_{[1,n]}^z(\omega)\|]$. Then for any fixed $z \in \C$ and $\ve >0$, there exist $\eta = \eta(z,\ve) > 0$ and $N = N(z,\ve)$ such that
		\[\P_0\left[\left|\frac{1}{n}\log\|S_{[1,n]}^z(\omega)\| - \gamma(z)\right| > \ve\right] \leq e^{-\eta n}\]
		for $n>N$. In particular, $\frac{1}{n}\|S_{[1,n]}^z(\omega)\|\rightarrow \gamma(z)$ $\P_0$-almost surely.
	\end{thm}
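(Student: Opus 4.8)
The plan is to derive the bound for $\P_0$ by feeding the large deviation estimates already available for the i.i.d.\ model $\P_1$ into Theorem \ref{transmeth}. Here $\P_0=\bigotimes_{n\in\Z}g_n\mu$ is as in Theorem \ref{transmeth}, with the $g_n$ satisfying (\ref{logmom}), and $\gamma(z)=\lim_{n\to\infty}\tfrac1n\E_1[\log\|S_n^z\|]$, the limit existing by subadditivity (Fekete) and lying in $[0,\infty)$ because $\|S_n^z\|\ge1$ and $\E_1[\log^+\|S_1^z\|]<\infty$ under the moment hypothesis on $\mu$ (since $\log^+|x|=o(|x|^\alpha)$).

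First I would record the $\P_1$ input. Since $\mu$ is non-trivial and $\int|x|^\alpha\,d\mu<\infty$, the maps $\omega\mapsto S_n^z(\omega)$ are products of the i.i.d.\ one-step transfer matrices $A_j(\omega)=\left(\begin{smallmatrix} z-V_j(\omega) & -1\\ 1 & 0\end{smallmatrix}\right)$, $1\le j\le n$, which (by non-triviality of $\mu$) generate a non-compact, strongly irreducible subsemigroup of $SL_2(\C)$, so the classical Furstenberg theory applies. In particular $\tfrac1n\log\|S_n^z\|\to\gamma(z)$ $\P_1$-a.s.\ and in $L^1(\P_1)$, and by the LePage large deviation theorem \cite{LePage}, for each \emph{fixed} $z\in\C$ and each $\ve>0$ there are $\eta_1=\eta_1(z,\ve)>0$ and $N_1=N_1(z,\ve)$ with
\[ \P_1\!\left[\,\left|\tfrac1n\log\|S_n^z(\omega)\|-\gamma(z)\right|>\ve\,\right]\le e^{-\eta_1 n}\qquad\text{for }n>N_1.\]
These are among the standard facts about stationary one-dimensional Anderson models recalled in Section \ref{schrodd}.

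Next I would transfer the estimate. Fix $z$ and $\ve$, set $A_n:=\{\,|\tfrac1n\log\|S_n^z\|-\gamma(z)|>\ve\,\}$ for $n\ge1$ and $A_0:=\varnothing$. Reading off the cocycle, $S_n^z(\omega)$ depends only on $V_1(\omega),\dots,V_n(\omega)$, and since $\{1,\dots,n\}\subseteq\{-n,\dots,n\}$ this gives $A_n\in\sigma(V_1,\dots,V_n)\subseteq\mathcal{F}_n$. Hence the collection $(A_n^E)_{(n,E)}$ defined by $A_n^E:=A_n$ (constant in $E$) is adapted to $(\mathcal{F}_n)_{n\in\N}$ and, by the previous paragraph, decays exponentially uniformly in $E$ under $\P_1$ — the uniformity being automatic as there is no $E$-dependence. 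Theorem \ref{transmeth} then produces $\eta=\eta(z,\ve)>0$ and $N=N(z,\ve)$ with $\P_0[A_n]\le e^{-\eta n}$ for $n>N$, which is exactly the asserted estimate. For the almost sure statement, applying this with $\ve=1/k$ makes $n\mapsto\P_0[\{|\tfrac1n\log\|S_n^z\|-\gamma(z)|>1/k\}]$ summable, so Borel--Cantelli gives $\limsup_n|\tfrac1n\log\|S_n^z\|-\gamma(z)|\le1/k$ $\P_0$-a.s.; intersecting over $k\in\N$ yields $\tfrac1n\log\|S_n^z(\omega)\|\to\gamma(z)$ for $\P_0$-a.e.\ $\omega$.

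I do not anticipate a real obstacle: as the author notes, this is a straightforward consequence of Theorem \ref{transmeth}. The only point deserving attention is the $\P_1$ input of the second paragraph — one must check that the i.i.d.\ large deviation estimate genuinely holds at a single, possibly non-real, energy $z$ and under only the weak moment condition $\int|x|^\alpha\,d\mu<\infty$, rather than under all exponential moments of $\log\|A_1\|$. This is precisely what the classical LePage-type analysis of $SL_2(\C)$ random matrix products delivers, and it is the only place an external result of substance enters; everything else is the soft combination of that estimate with Theorem \ref{transmeth} and a countable intersection of Borel--Cantelli events.
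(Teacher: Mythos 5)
Your proposal is correct and follows essentially the paper's own route: the fixed-$z$ LePage large deviation estimate for $\P_1$ is lifted to $\P_0$ via Theorem \ref{transmeth} (the events being adapted and trivially uniform in the parameter), and the almost sure convergence is then obtained by Borel--Cantelli and intersecting over $\ve = 1/k$, exactly as the paper does in its later Theorem \ref{lyap4evr}. No gaps to report.
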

	These results suffice to produce the necessary estimates to show Anderson localization for the bounded case:
	\begin{thm}\label{specloc}
		Let $(X,\mathcal{M})$ be $\R$ equipped with its Borel $\sigma$-algebra, and further let $(\Omega,\mathcal{B})$, $\P_0$, $\P_1$, $\mathcal{F}_n$ and $V_n$ be as in Theorem \ref{transmeth}, with equation (\ref{logmom}) satisfied.
		
		Assume moreover that $\mu$ has compact support. Then there is a set $\Omega_0\subset \Omega$ with $\P_0[\Omega_0] =1$ such that for all $\omega \in \Omega_0$, the operators on $\ell^2(\Z)$ defined by
		\[H_\omega\psi (n) = \psi(n-1) + \psi(n+1) + V_n(\omega)\psi(n)\]
		has pure point spectrum with all eigenfunctions exponentially decaying.
	\end{thm}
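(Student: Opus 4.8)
The plan is to derive the statement from Schnol's theorem once the two probabilistic facts it needs are in place for $\P_0$. Schnol's theorem reduces ``pure point spectrum with exponentially decaying eigenfunctions'' to the following deterministic property of a fixed $\omega$: every solution $\psi$ of $H_\omega\psi = E\psi$ that grows at most polynomially in fact decays exponentially as $|n|\to\infty$. Indeed, if this holds then the spectral measure of $H_\omega$, being carried by the set of energies admitting a polynomially bounded solution, is carried by eigenvalues, so $H_\omega$ has pure point spectrum; and any eigenfunction, lying in $\ell^2$ and hence bounded, decays exponentially by the same property. The ingredients I would assemble are: the large deviation estimate for $\P_0$ (the preceding theorem); the Craig--Simon subharmonicity bound for $\P_0$ established in Section \ref{ldeland}; positivity $\gamma(E)>0$ for every real $E$ (Furstenberg's theorem, using that $\mu$ is supported on at least two points, together with the coincidence of the Lyapunov exponents of $\P_0$ and $\P_1$); and --- this is where compact support of $\mu$ enters --- the uniform pointwise bound $|V_n(\omega)|\le R$ and the resulting estimate $\|S_n^{E}(\omega)-S_n^{E'}(\omega)\|\le e^{Cn}|E-E'|$ for $E,E'$ in a fixed compact set.

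First I would fix a set $\Omega_0$ with $\P_0[\Omega_0]=1$ on which, for every $\ve>0$ and every compact $K\subset\R$: (i) $\frac1n\log\|S_n^{E}(\omega)\|\le\gamma(E)+\ve$ for all $n\ge N(\omega,\ve,K)$ and all $E\in K$ --- the uniform-in-$E$ upper bound, which is the Craig--Simon consequence of the large deviation estimate; and (ii) for each $E$ in a fixed countable dense subset $\mathcal{E}\subset\R$, $\frac1n\log\|S_n^E(\omega)\|\ge\gamma(E)-\ve$ for all large $n$, obtained from the large deviation estimate by Borel--Cantelli. Then I would run the transfer-matrix argument of \cite{SJZhu19} and \cite{Rangamani19}: given $\omega\in\Omega_0$ and a nonzero polynomially bounded solution $\psi$ of $H_\omega\psi=E_0\psi$ (we may assume $\psi$ real by taking real and imaginary parts), normalize $v:=(\psi(1),\psi(0))$ to a unit vector, so that $(\psi(n+1),\psi(n))=S_n^{E_0}(\omega)v$ for $n\ge1$ and analogously on $\Z_{\le0}$ using the backward transfer matrices. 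Elements of $SL_2(\R)$ are uniformly hyperbolic: once $\|S_n^{E_0}\|$ is squeezed between $e^{n(\gamma(E_0)-\ve)}$ and $e^{n(\gamma(E_0)+\ve)}$, a unit vector whose angle to the most-contracted singular direction of $S_n^{E_0}$ does not decay exponentially is amplified beyond any polynomial rate. Since $\psi$ is polynomially bounded, $v$ must therefore agree with the limiting stable direction of the forward matrices (and likewise of the backward matrices), a direction which is itself exponentially contracted by $S_n^{E_0}$, so $|\psi(n)|\le Ce^{-c|n|}$. The one delicate point is that the lower bound (ii) is known only for $E\in\mathcal{E}$, whereas $E_0$ is generally $\omega$-dependent; this is handled scale by scale, selecting for each large $n$ an $E\in\mathcal{E}$ with $|E-E_0|$ below the reciprocal of the Lipschitz constant $e^{Cn}$, so that the lower bound transfers from $S_n^E$ to $S_n^{E_0}$ at that scale. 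This establishes the deterministic property on $\Omega_0$, and Schnol's theorem then gives the conclusion.

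I expect the passage from the countable grid $\mathcal{E}$ to the actual, $\omega$-dependent generalized eigenvalue $E_0$ to be the main obstacle: it is precisely the ``elimination of resonances'' step that in the approach of \cite{SJZhu19} replaces a Wegner estimate, and it is the reason compact support of $\mu$ is imposed here, since that is what makes the energy dependence of the transfer matrices controllable by an explicit exponential factor. The remaining steps are routine: the only probabilistic inputs the argument consumes --- positivity of $\gamma$, the large deviation estimate, and the Craig--Simon bound --- all hold for $\P_0$ by the results of the preceding sections, so the deterministic core carries over from the i.i.d.\ setting of \cite{SJZhu19} and \cite{Rangamani19} essentially unchanged.
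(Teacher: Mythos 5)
Your reduction via Schnol, and your list of probabilistic inputs (lifted large deviation estimates, Craig--Simon, positivity of $\gamma$), match the paper's setup in Sections \ref{schrodd}--\ref{ldeland}. The gap is in the step you yourself flag as delicate: passing the lower bound on $\frac1n\log\|S^{E}_{[1,n],\omega}\|$ from a countable grid $\mathcal{E}$ to the $\omega$-dependent generalized eigenvalue $E_0$. With a fixed countable dense $\mathcal{E}$, Borel--Cantelli gives, for each $E\in\mathcal{E}$, a threshold $N(\omega,E,\ve)$ beyond which the bound holds; but at scale $n$ you must invoke a grid point $E(n)$ within $e^{-Cn}$ of $E_0$, a \emph{different} point for each $n$, and nothing guarantees $n\ge N(\omega,E(n),\ve)$ --- the onset scales are not uniform over $\mathcal{E}$. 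The quantitative repair (for each $n$, a finite grid of spacing $e^{-Cn}$ in the compact interval, LDT plus union bound) also fails: the grid has cardinality of order $e^{Cn}$ while the large deviation estimate only supplies $e^{-\eta n}$, and the Lipschitz constant $C$ of $E\mapsto S^E_{[1,n],\omega}$ is of the order of $\log\sup\|S^E_m\|$, generically much larger than $\eta$, so the union bound is not summable. This entropy-versus-measure obstruction is precisely why a naive $\ve$-net in energy cannot substitute for a Wegner-type input, and it is the step the whole Jitomirskaya--Zhu scheme is built to avoid. A secondary (in principle fixable) point: from two-sided $\ve$-approximate bounds on $\|S^{E_0}_{[1,n],\omega}\|$ alone, the distinguished (most contracted) direction is only controlled up to a factor $e^{C\ve n}$, so one concludes subexponential growth of the polynomially bounded solution, not exponential decay; decay requires either exact Lyapunov limits together with a deterministic Oseledets/Ruelle argument or, as in the paper, the Green's function and determinant estimates.

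What the paper actually does is different at the core: it proves Theorem \ref{reg2loc} and then uses (\ref{eigen2green}). Schnol plus polynomial boundedness makes $0$ a $(\gamma(E)-8\ve_0,n,E,\omega)$-singular point; if $2n+1$ were also singular for infinitely many $n$, Lemma \ref{overunder} and Corollary \ref{CSfini} produce eigenvalues $E_i$ of $H_{[-n,n],\omega}$ and $E_j$ of $H_{[n+1,3n+1],\omega}$ within $e^{-(\eta_0-\delta_0)(2n+1)}$ of $E$, and then the \emph{independence of the two blocks under the product measure} $\P_0$, combined with the lifted estimates (Theorem \ref{finiteLDE}, and Lemma \ref{edgelemvar} in the bounded case), shows that the eigenvalues of the right block almost surely eventually avoid the exceptional energy sets of the left block --- here the union bound runs over the $2n+1$ eigenvalues of a finite block, not over an $e^{Cn}$-grid, which is why the counting closes. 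Compact support of $\mu$ enters not through Lipschitz control in energy but through the a priori bound $|P^E_{[a,b],\omega}|\le\tilde M^{\,b-a+1}$ of Proposition \ref{edgelemsmallcorvar}, which lets the bounded case use edge windows of length $n/L$ and dispense with the uniform-in-center hypothesis (\ref{logmomunif}). So your outer skeleton agrees with the paper, but the resonance-elimination mechanism that replaces the Wegner estimate is missing, and the grid-plus-Lipschitz substitute you propose does not close the argument.
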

	However, for distributions $\mu$ with unbounded support, we need for technical reasons to be able to discuss uniformity across different choices of filtrations $(\mathcal{F}_n)^k$, which physically corresponds to uniformity across different choices of centers for an interval. Concretely, in the stationary Anderson model context, the statistics for the transfer matrix $S_{[a,b],\omega}^E$ only depend on the length $b-a+1$ and the energy $E$. While we cannot recover this exact statement in the non-stationary case, we can under stronger assumptions lift a priori large deviation estimates which are uniform across such choices from a stationary context to similar large deviation estimates in a non-stationary context. We introduce additional definitions specific to these considerations. Throughout, ``uniform in filtration'' can loosely be read as ``depending only on the length of the associated interval''. This is in the full level of abstraction not quite correct, but describes our specific application. We introduce necessary definitions:
	
	\begin{deffo}
		Given a measure space $(X,\mathcal{M})$ and a collection of filtrations $(\mathcal{F}_n)^k$, with $n \in \N, k \in \Z$, we say a collection of events $(A_n^{k,E})$ decays exponentially uniformly in $E$ and $k$ with respect to $\P$ if there exists $N$ and $\eta > 0$ such that for all $E$, all $k$, and all $n > N$, we have
		\[ \P[A_n^{k,E}] < e^{-\eta n}\]
	\end{deffo}
	\begin{deffo}
		We say $(A_n^{k,E})$ is $(\mathcal{F}_n)^k$ adapted if for any fixed $k_0 \in \Z$, the collection of events $A_n^{k_0,E}$ is $\mathcal{F}_n^{k_0}$ adapted.
	\end{deffo}
	With these natural extensions of earlier definitions, we can state a version of Theorem \ref{transmeth} which is uniform in filtration.
	
	\begin{thm} \label{transmethunif}
		Let $(X,\mathcal{M})$ be a measurable space. Further let $\Omega = X^\Z$ and $\mathcal{B}$ the $\sigma$-algebra on $X^\Z$ generated by measurable cylinder sets. Let $\mu$ be a probability measure on $(X,\mathcal{M})$ and $g_n$ a sequence of non-negative functions with $g_n \in L^\infty(d\mu)$ and $\int g_nd\mu = 1$. We define on $(\Omega,\mathcal{B})$ the product measures:
		\[ \P_0 = \bigotimes_{n\in\Z} g_n\mu,\quad \P_1 = \mu^\Z\]
		and the coordinate projections $V_n$ by $V_n(\omega) = \omega_n$ for $\Omega \ni \omega = (\omega_n)_{n\in\Z}$.
		Finally, we define the $\sigma$-algebras
		\[\mathcal{F}_n^k = \begin{cases} \sigma(V_{k-n},V_{k+1-n},\ldots V_{k+n-1},V_{k+n})\quad &n>0\\
			\{\Omega,\varnothing\}\quad &n=0 \end{cases}\]
		If the $g_n$ satisfy:
		\begin{equation}
			\lim_{N\rightarrow \infty} \frac{1}{N} \left[\sup_{k\in\Z}\sum_{n=k-N}^{k+N}\log\|g_n\|_\infty \right] = 0 \tag{\ref{logmomunif}}
		\end{equation}
		then any collection of events $(A_n^{E,k})$ which is adapted to $(\mathcal{F}_n)^k$ and exponentially decaying uniformly in $E$ and $k$ with respect to $\P_1$ is also exponentially decaying uniformly in $E$ and $k$ with respect to $\P_0$.
	\end{thm}

	This version, uniform in filtration (i.e. center), suffices to show Anderson localization for $\mu_n$ ``converging'' to $\mu$ which have unbounded support but satisfy a mild moment condition, more or less the condition found in the original Carmona-Klein-Martinelli work \cite{Carmona1987}.
	\begin{thm}\label{speclocunb}
		Let $(X,\mathcal{M})$ be $\R$ equipped with its Borel $\sigma$-algebra, and further let $(\Omega,\mathcal{B})$, $\P_0$, $\P_1$, $\mathcal{F}_n$ and $V_n$ be as in Theorem \ref{specloc} with equation (\ref{logmomunif}) satisfied.
		
		Assume also that there are $\alpha > 0$ and $M < \infty$ such that $\int |x|^\alpha d\mu(x) < K$ and $\int |x|^\alpha g_n(x)d\mu(x) < M$ for all $n \in \Z$.  set $\Omega_0\subset \Omega$ with $\P_0[\Omega_0] =1$ such that for all $\omega \in \Omega_0$, the operators on $\ell^2(\Z)$ defined by
		\[H_\omega\psi (n) = \psi(n-1) + \psi(n+1) + V_n(\omega)\psi(n)\]
		has pure point spectrum with all eigenfunctions exponentially decaying, i.e. $H_\omega$ exhibits Anderson localization.
	\end{thm}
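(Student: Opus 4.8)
The plan is to follow the strategy of Jitomirskaya--Zhu \cite{SJZhu19} and of Rangamani \cite{Rangamani19} for the i.i.d.\ singular (resp.\ unbounded) $1$D Anderson model, with the stationary large deviation estimates replaced by the non-stationary ones produced in Section \ref{ldeland} via Theorem \ref{transmethunif}. First I reduce the theorem to a statement about generalized eigenfunctions. The moment hypothesis forces $V_n(\omega)$ to be $\P_0$-a.s.\ finite, so $H_\omega$ is $\P_0$-a.s.\ a well-defined essentially self-adjoint Jacobi matrix; a Borel--Cantelli argument using $\int |x|^\alpha g_n\,d\mu < M$ shows that $\max_{|j|\le n}|V_j(\omega)| \le n^{2/\alpha}$ for all large $n$, $\P_0$-a.s. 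By Schnol's theorem the spectral measure of $H_\omega$ is carried by the set of $E$ for which $H_\omega\psi = E\psi$ has a nonzero polynomially bounded solution, so it suffices to prove that, for $\P_0$-a.e.\ $\omega$, every polynomially bounded solution decays exponentially; this gives pure point spectrum with exponentially decaying eigenfunctions.

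The inputs, all available for $\P_0$ from Section \ref{ldeland}, are: (i) the Lyapunov exponent $\gamma(E)$ exists for $\P_0$, equals the i.i.d.\ one, is continuous on $\R$, and satisfies $\inf_{E\in\R}\gamma(E) =: \gamma_0 > 0$ (Furstenberg positivity, using that $\mu$ is non-trivial with a finite moment, plus $\gamma(E)\to\infty$ as $|E|\to\infty$); (ii) a Craig--Simon type upper bound, so that $\P_0$-a.s.\ $\limsup_n \sup_E\big(\tfrac1n\log\|S_n^E(\omega)\| - \gamma(E)\big)\le 0$; and (iii), the key input, a large deviation estimate uniform in $E\in\R$ and in the center: there are $\eta>0$, $N$ so that for all $E$, all $k\in\Z$ and all $n>N$, $\P_0$ assigns probability at most $e^{-\eta n}$ to the event that the box $[k-n,k+n]$ is \emph{$(E,\ve)$-bad}. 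Here ``$(E,\ve)$-good'' means the transfer matrices over the box and its sub-boxes satisfy two-sided bounds at rate $\gamma(E)\pm\ve$; up to a set of exponentially small measure this is equivalent to $\mathrm{dist}(E,\sigma(H_{[k-n,k+n]}))\ge e^{-\ve n}$ together with off-diagonal decay of the Dirichlet Green's function $G^E_{[k-n,k+n]}$ at rate $\gamma(E)-\ve$. Estimate (iii) is precisely where Theorem \ref{transmethunif}, hence hypothesis (\ref{logmomunif}) giving uniformity in the center, is used, together with the technical lemmas of Section \ref{lemma4evr}, which supply the uniformity in $E$ over all of $\R$ in the unbounded case.

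The core is an elimination of double resonances using two disjoint boxes. Fix $\ve<\gamma_0/10$ and $\delta = \tfrac12$; for $x>0$ large let $B_0(x)$ be the box of radius $\lfloor\delta x\rfloor$ about the origin and $B_x$ the box of radius $\lfloor\delta x\rfloor$ about $x$, which are disjoint, so $H_{B_0(x)}$ and $H_{B_x}$ are $\P_0$-independent. If both are $(E,\ve)$-bad for one common $E$, then $E$ lies within $e^{-\ve\delta x}$ of $\sigma(H_{B_0(x)})$ and of $\sigma(H_{B_x})$, so some eigenvalue of $H_{B_0(x)}$ is within $2e^{-\ve\delta x}$ of $\sigma(H_{B_x})$; conditioning on $\omega$ restricted to $B_0(x)$, whose spectrum consists of at most $2\delta x + 1$ points $E'$ all satisfying $|E'|\le Cx^{2/\alpha}$, and applying (iii) at each fixed $E'$, this event has $\P_0$-probability at most $(2\delta x + 1)e^{-\eta\delta x}$, which is summable in $x$. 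Borel--Cantelli then yields a full-measure set on which, for all large $x$ and simultaneously all $E$, $B_0(x)$ and $B_x$ are not both $(E,\ve)$-bad. Now fix such an $\omega$ (also in the a.s.\ sets of the first two paragraphs), a generalized eigenvalue $E$, and a polynomially bounded solution $\psi$, normalized and if necessary translated so that $(\psi(0),\psi(1))\neq 0$. If $B_0(x)$ were $(E,\ve)$-good for infinitely many $x$, the Poisson formula expressing $\psi$ on $B_0(x)$ in terms of $G^E_{B_0(x)}$ and the two boundary values of $\psi$, together with the polynomial bound $|\psi|\le n^{O(1)}$, would force $|\psi(0)| + |\psi(1)|\le (1+x)^{O(1)}e^{-(\gamma_0-\ve)\delta x/2}\to 0$, a contradiction; hence $B_0(x)$ is $(E,\ve)$-bad for all large $x$, so $B_x$ is $(E,\ve)$-good for all large $x$, and the Poisson formula on $B_x$ gives $\|(\psi(x+1),\psi(x))\|\le (1+x)^{O(1)}e^{-(\gamma_0-\ve)\delta x/2}$. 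The mirror argument as $x\to-\infty$ gives exponential decay at $-\infty$, so $\psi\in\ell^2(\Z)$ decays exponentially, proving the theorem.

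I expect the main obstacle to be input (iii) in full: Theorem \ref{transmethunif} transfers any estimate that is uniform in $E$ and in the center from $\P_1$ to $\P_0$, but one must first \emph{have} such an estimate for the i.i.d.\ measure $\P_1$ with unbounded singular $\mu$ --- that is, control the transfer-matrix norms as polynomials in $E$ whose coefficients are governed only by the $\alpha$-th moment of $\mu$, and push this through Craig--Simon subharmonicity so that the exceptional scale $N$ and the rate $\eta$ do not deteriorate as $|E|\to\infty$. This, and arranging the transfer-matrix/Green's-function dictionary to be quantitative enough that the union bound over the $O(x^{2/\alpha})$ eigenvalues of $H_{B_0(x)}$ costs only a polynomial factor, are the roles of the lemmas of Section \ref{lemma4evr}; the remainder is a fairly standard adaptation of \cite{SJZhu19,Rangamani19}.
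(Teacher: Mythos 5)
Your outline follows the same overall route as the paper: Schnol reduction, large deviation estimates lifted to $\P_0$ via Theorem \ref{transmethunif}, a Craig--Simon type bound, and elimination of double resonances between two disjoint, $\P_0$-independent boxes, with separate treatment of short edge sub-boxes because the potential is unbounded. However, two points need attention. First, the issue you single out as the main obstacle --- large deviation estimates with $\eta,N$ uniform over \emph{all} $E\in\R$ and a lower bound $\inf_{E\in\R}\gamma(E)>0$ --- is neither needed nor what the paper establishes. The paper fixes a compact energy interval $I=[s,t]$, uses the Tsay/Bucaj-et-al. estimate (uniform over $I$) lifted by Theorem \ref{transmethunif} together with $\nu_I=\inf_{E\in I}\gamma(E)>0$, proves that $\P_0$-a.s. every generalized eigenvalue in $I$ has an exponentially decaying eigenfunction (Theorem \ref{reg2loc}), and only at the end takes a countable intersection over an exhaustion of $\R$ by compact intervals. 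You should restrict to a compact $I$ from the outset rather than trying to control the estimates as $|E|\to\infty$.

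Second, there is a genuine gap in your elimination step as written: after conditioning on $\omega|_{B_0(x)}$ you bound, for each fixed eigenvalue $E'$, the probability that $\mathrm{dist}(E',\sigma(H_{B_x}))\le 2e^{-\varepsilon\delta x}$ by ``applying (iii) at $E'$''. But the fixed-energy estimate you actually possess (and the one the paper lifts) controls the size of the determinant $|P^{E'}_{B_x,\omega}|$, equivalently the transfer-matrix entries, at the energy $E'$; a characteristic polynomial can be large at $E'$ and still vanish exponentially close to $E'$, so spectral proximity is not directly a large-deviation event, and your parenthetical ``up to a set of exponentially small measure this is equivalent to\dots'' is precisely what must be proved. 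The paper closes this by a different arrangement: Lemma \ref{indlem} (exponentially small Lebesgue measure of the bad energy sets) plus monotonicity of the characteristic polynomial between its zeros is used to extract eigenvalues $E_i$ of $H_{[-n,n],\omega}$ and $E_j$ of $H_{[n+1,3n+1],\omega}$ exponentially close to $E$; Lemma \ref{edgelem} (which needs the center-uniform Theorem \ref{unifLDE}, hence (\ref{logmomunif})) shows the eigenvalue $E_j$ of the independent neighboring box avoids the bad sets of $[-n,n]$ and of its sub-boxes trimmed at distance $p\log n$ from the edges; and Lemma \ref{edgelemsmall}/Corollary \ref{edgelemsmallcor} supply the $e^{C(\log n)^2}$ bound on the short edge determinants, which is where the moment hypothesis $\sup_n\int|x|^\alpha g_n\,d\mu<\infty$ enters. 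The contradiction then comes from $\|G_{[-n,n],E_j,\omega}\|$ being simultaneously huge (since $|E_i-E_j|$ is exponentially small and $E_i$ is an eigenvalue) and small (via (\ref{green2det}) and the determinant bounds). If you prefer your formulation via a fixed-energy resonance bound for $B_x$, you must derive it by decomposing the Green's function over all sub-boxes and invoking the same edge estimates for sub-boxes of length $O(\log n)$, where the LDE is vacuous; either way this is the substantive content of Sections \ref{lemma4evr}--\ref{fini4evr}, not something already available from Section \ref{ldeland}.
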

	\begin{remm}
		Our additional conditions in the unbounded case amount to imposing uniformity in various ways; this in some sense is required to make up for the loss of uniformity which came from the existence of a bound on $V_n$. In particular (\ref{logmomunif}) allows the extraction of large deviation estimates uniform in ``center'' of the corresponding interval or square, by prohibiting arbitrarily long stretches of abnormally high $\|g_n\|$. We will explicitly go through the unbounded case through the rest of the paper, and point out when they arise the places where boundedness allows one to discard assumptions.
	\end{remm}
	Finally, a condition stronger than even (\ref{logmomunif}) forces essentially all the relevant statistics of $\P_0$ to coincide with those of $\P_1$, i.e. absolute continuity with an essential bound on $\frac{d\P_0}{d\P_1}$.
	\begin{thm}\label{abscont}
		Let $(\Omega,\mathcal{B})$, $\P_0$, and $\P_1$ be as in Theorem \ref{transmeth}. If the Radon-Nikodym derivatives satisfy the stronger condition
		\begin{equation}
			\sum_{n\in\Z} \log \|g_n\|_\infty < \infty \tag{\ref{logsum}}
		\end{equation}
		then $\P_0 \ll \P_1$, and moreover for any $A \in \mathcal{B}$, we have $\P_0[A] \leq C \P_1[A]$, where $C := \prod_{n\in\Z} \|g_n\|_\infty$.
	\end{thm}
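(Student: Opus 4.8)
The plan is to realize $\P_0$ as a measure with a density against $\P_1$, where that density is the almost-sure limit of the natural finite-volume Radon--Nikodym derivatives, which form a bounded martingale under $\P_1$.

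First I would work with the filtration $(\mathcal{F}_n)_{n\in\N}$ of Theorem \ref{transmeth}, namely $\mathcal{F}_n=\sigma(V_{-n},\dots,V_n)$ for $n>0$. Since $g_k\mu\ll\mu$ for every $k$, a finite-product computation on cylinder sets shows $\P_0|_{\mathcal{F}_n}\ll\P_1|_{\mathcal{F}_n}$ with
\[ H_n \;=\; \frac{d(\P_0|_{\mathcal{F}_n})}{d(\P_1|_{\mathcal{F}_n})} \;=\; \prod_{|k|\le n} g_k(V_k).\]
Because $g_k\ge 0$ and $\int g_k\,d\mu=1$, the essential supremum dominates the mean, so $\|g_k\|_\infty\ge 1$, hence $\log\|g_k\|_\infty\ge 0$; thus (\ref{logsum}) is the statement that the series $\sum_k\log\|g_k\|_\infty$ of nonnegative terms converges, and $C=\prod_{k\in\Z}\|g_k\|_\infty=\exp\!\big(\sum_k\log\|g_k\|_\infty\big)$ is finite. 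In particular $0\le H_n\le C$ for every $n$. Using that $V_{n+1}$ and $V_{-(n+1)}$ are $\P_1$-independent of $\mathcal{F}_n$ with law $\mu$, together with $\int g_{\pm(n+1)}\,d\mu=1$, one checks $\E_1[H_{n+1}\mid\mathcal{F}_n]=H_n$, so $(H_n)$ is a nonnegative $\P_1$-martingale, uniformly bounded and hence uniformly integrable. By the martingale convergence theorem it converges $\P_1$-a.s.\ and in $L^1(\P_1)$ to some $H_\infty$ satisfying $0\le H_\infty\le C$, $\E_1[H_\infty]=1$, and $\E_1[H_\infty\mid\mathcal{F}_n]=H_n$ for all $n$.

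Next I would introduce the probability measure $\tilde\P_0:=H_\infty\,\P_1$ on $(\Omega,\mathcal{B})$ and argue $\tilde\P_0=\P_0$. For $A\in\mathcal{F}_n$,
\[ \tilde\P_0[A] \;=\; \E_1[\chi_A H_\infty] \;=\; \E_1\big[\chi_A\,\E_1[H_\infty\mid\mathcal{F}_n]\big] \;=\; \E_1[\chi_A H_n] \;=\; \P_0[A].\]
Thus $\tilde\P_0$ and $\P_0$ agree on the algebra $\bigcup_n\mathcal{F}_n$, and this algebra generates $\mathcal{B}$ since every measurable cylinder depends on only finitely many coordinates. By the uniqueness part of Carathéodory's extension theorem (or Dynkin's $\pi$--$\lambda$ theorem) the two probability measures agree on all of $\mathcal{B}$. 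Therefore $\P_0[A]=\E_1[\chi_A H_\infty]\le C\,\E_1[\chi_A]=C\,\P_1[A]$ for every $A\in\mathcal{B}$, and in particular $\P_1[A]=0$ forces $\P_0[A]=0$, i.e.\ $\P_0\ll\P_1$.

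The argument is a routine bounded-martingale computation, so I do not expect a serious obstacle. The one step that requires care is the passage from agreement on the generating algebra $\bigcup_n\mathcal{F}_n$ to agreement on $\mathcal{B}$ — that is, verifying that $\bigcup_n\mathcal{F}_n$ really is an algebra generating the cylinder $\sigma$-algebra so that the uniqueness-of-measures theorem applies — and noting at the outset that the inequality $\|g_k\|_\infty\ge 1$ is exactly what makes (\ref{logsum}) an absolutely convergent series of nonnegative terms with finite product $C$.
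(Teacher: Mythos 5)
Your proposal is correct and follows essentially the same route as the paper: identify the finite-volume densities $H_n=\prod_{|k|\le n}g_k(V_k)$, note they form a nonnegative $\P_1$-martingale uniformly bounded by $C=\prod_k\|g_k\|_\infty$, pass to the limit $H_\infty$ via Doob's convergence theorem, and identify $H_\infty\,\P_1$ with $\P_0$ by a $\pi$--$\lambda$/uniqueness-of-extension argument on $\bigcup_n\mathcal{F}_n$. The only cosmetic difference is that the paper factors this through a general lemma (absolute continuity under $\sup_n\|H_n\|_\infty<\infty$) together with a separate computation of $\|H_N\|_\infty$, while you carry out the same steps directly using the product structure.
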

	
	A direct application of this result to the results in \cite{SJZhu19} gives the following result:
	\begin{thm}\label{dynloc}
		Let everything be as in Theorem \ref{specloc}; assume further that $\mu$ has compact support and the Radon-Nikodym derivatives obey the stronger condition
		\begin{equation}
			\sum_{n\in\Z} \|g_n\|_\infty < \infty \tag{\ref{logsum}}
		\end{equation}
		Then $H_\omega$ is almost surely exponentially dynamically localized, in the sense of \cite{SJZhu19}.
	\end{thm}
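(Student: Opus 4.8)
The plan is to obtain this as an immediate measure-theoretic corollary of Theorem~\ref{abscont}, bypassing the large deviation machinery entirely: dynamical localization, however it is formulated in \cite{SJZhu19}, is a property that survives the passage from $\P_1$ to a $\P_0$ which is absolutely continuous with respect to it with bounded density. First I would note that the hypothesis~(\ref{logsum}) is exactly the hypothesis of Theorem~\ref{abscont}, so that $\P_0 \ll \P_1$ and $\P_0[A] \le C\,\P_1[A]$ for every $A \in \mathcal{B}$, with $C := \prod_{n\in\Z}\|g_n\|_\infty < \infty$; equivalently $\tfrac{d\P_0}{d\P_1} \le C$ $\P_1$-almost everywhere. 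Consequently every $\P_1$-null event is $\P_0$-null, and for every non-negative $\mathcal{B}$-measurable $F$ one has $\E_0[F] \le C\,\E_1[F]$.

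Next I would invoke \cite{SJZhu19}: since $\mu$ is non-trivial and compactly supported, the i.i.d.\ Anderson model with common single-site law $\mu$ --- which is precisely $H_\omega$ under the joint law $\P_1$ --- is exponentially dynamically localized in the claimed sense. The only thing to check is that this conclusion is of a form invariant under the above comparison of measures. In all standard formulations it is: either there is a full $\P_1$-measure set $\Omega_1 \subset \Omega$ on which a fixed quantitative dynamical-localization estimate holds (with $\omega$-dependent constants), or the estimate is packaged as a bound $\E_1[F] < \infty$ for a non-negative functional $F$ of $\omega$ encoding $\sup_{t\in\R}$ of the relevant propagator matrix elements weighted by $e^{|x-y|/\xi}$ --- and in fact the argument of \cite{SJZhu19} yields both. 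In the first case $\P_1[\Omega_1^c]=0$ forces $\P_0[\Omega_1^c]=0$, so the same estimate holds for $\P_0$-a.e.\ $\omega$; in the second case $\E_0[F] \le C\,\E_1[F] < \infty$, so $F < \infty$ $\P_0$-a.s., and again dynamical localization holds for $\P_0$-a.e.\ $\omega$. Either way $H_\omega$ is $\P_0$-almost surely exponentially dynamically localized. In particular this recovers, and under the stronger hypothesis~(\ref{logsum}) strengthens, the pure point conclusion of Theorem~\ref{specloc}.

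I do not expect a genuine obstacle here; the entire content is that we need only import the \emph{conclusion} of \cite{SJZhu19} as a black box, and that Theorem~\ref{abscont} provides a bounded Radon--Nikodym derivative, which transfers both almost sure statements and one-sided expectation bounds on non-negative quantities from $\P_1$ to $\P_0$. The sole point requiring care is the bookkeeping of matching ``exponentially dynamically localized in the sense of \cite{SJZhu19}'' to one of these two transferable forms, which is routine since dynamical localization is, in every standard formulation, either an $\omega$-wise statement or an integrated bound on a non-negative functional. Note that, in contrast with Theorems~\ref{specloc} and~\ref{speclocunb}, this argument uses neither~(\ref{logmom}) nor~(\ref{logmomunif}) nor Theorems~\ref{transmeth}/\ref{transmethunif}, only the cruder but cleaner absolute-continuity statement of Theorem~\ref{abscont}.
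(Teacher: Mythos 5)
Your proposal is correct and is essentially the paper's own argument: the paper also deduces Theorem~\ref{dynloc} directly from Theorem~\ref{abscont} (condition~(\ref{logsum}) gives $\P_0\ll\P_1$ with $\frac{d\P_0}{d\P_1}\le C$) combined with the dynamical localization result of \cite{SJZhu19} for the i.i.d.\ law $\P_1$, transferring the almost-sure/expectation statements exactly as you describe.
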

	\section{Lifting Method, Probabilistic Results}\label{prob}
	We prove a simple lemma, and a useful corollary, before proving Theorem \ref{transmeth} which allows us to lift large deviation estimates under certain conditions.
	\begin{lem}\label{slowgrowlemunif}
		Let $(X,\mathcal{M})$ be a be a measurable space, and $\P_0,\P_1$ two probability measures on it. Let $(\mathcal{F}_n)_{n\in\N}$ be a sequence of $\sigma$-subalgebras of $\mathcal{M}$, and further assume that $\P_0|_{\mathcal{F}_n}\ll \P_1|_{\mathcal{F}_n}$ for all $n \in \N$. Then for any collection of events $(A_n^E)$ which decays uniformly exponentially with respect to $\P_1$ at rate $\eta$, we will also have uniform exponential decay with respect to $\P_0$ if
		\begin{equation}\label{eventsup}
			\eta_0:= \limsup_{n\rightarrow \infty} \left(\sup_E\frac{1}{n} \log\left( \left\|\chi_{A_n^E}H_n\right\|_\infty\right)\right) < \eta \end{equation}
		where\[H_n:= \frac{d(\P_0|_{\mathcal{F}_n})}{d(\P_1|_{\mathcal{F}_n})}\]
		and $\|\cdot\|_\infty$ denotes the $L^\infty(d\P_1)$ norm.
	\end{lem}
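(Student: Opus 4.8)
The plan is to bound $\P_0[A_n^E]$ directly by $\P_1[A_n^E]$ times the relevant essential supremum of the restricted likelihood ratio $H_n$, and then to combine the two exponential rates. First I would record the elementary identity: using that $A_n^E\in\mathcal{F}_n$ (adaptedness, which is what makes $\chi_{A_n^E}H_n$ the right object and is present in every application) together with $H_n = d(\P_0|_{\mathcal{F}_n})/d(\P_1|_{\mathcal{F}_n})$, we have $\P_0[A_n^E] = \int_{A_n^E} H_n\,d\P_1$. Since $H_n \le \|\chi_{A_n^E}H_n\|_\infty$ holds $\P_1$-a.e.\ on $A_n^E$, integrating over $A_n^E$ gives
\[ \P_0[A_n^E] \le \big\|\chi_{A_n^E}H_n\big\|_\infty\,\P_1[A_n^E]; \]
if $\P_1[A_n^E]=0$ this is trivial, since $\P_0|_{\mathcal{F}_n}\ll\P_1|_{\mathcal{F}_n}$ then forces $\P_0[A_n^E]=0$.

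Next, let $N_1$ and $\eta>0$ witness the uniform exponential decay of $(A_n^E)$ under $\P_1$, so $\P_1[A_n^E]\le e^{-\eta n}$ for $n>N_1$ and all $E$. Fix $\eta'$ with $\eta_0<\eta'<\eta$, which is possible precisely because $\eta_0<\eta$. As $t\mapsto\tfrac1n\log t$ is increasing, $\sup_E\tfrac1n\log\|\chi_{A_n^E}H_n\|_\infty = \tfrac1n\log\!\big(\sup_E\|\chi_{A_n^E}H_n\|_\infty\big)$, so the definition of $\eta_0$ as a $\limsup$ yields $N_2$ with $\sup_E\|\chi_{A_n^E}H_n\|_\infty\le e^{\eta' n}$ for $n>N_2$; in particular these essential suprema are finite for large $n$, which is forced by $\eta_0<\eta<\infty$ and need not be assumed separately. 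Combining the two bounds, for $n>\max\{N_1,N_2\}$ and all $E$,
\[ \P_0[A_n^E]\le e^{\eta' n}\,e^{-\eta n} = e^{-(\eta-\eta')n}, \]
so $(A_n^E)$ decays uniformly exponentially under $\P_0$ at rate $\eta-\eta'>0$ (and, letting $\eta'\downarrow\eta_0$, at any prescribed rate below $\eta-\eta_0$).

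There is no real obstacle here: the content is one H\"older-type estimate plus some bookkeeping with $\limsup$'s. The only steps warranting care are the passage from a pointwise to an essential-supremum bound in $\int_{A_n^E}H_n\,d\P_1\le\|\chi_{A_n^E}H_n\|_\infty\,\P_1[A_n^E]$, and the (harmless, by monotonicity of the logarithm) interchange of $\sup_E$ with the normalization $\tfrac1n\log(\cdot)$ needed to align with the stated form of $\eta_0$. Finiteness of the essential suprema at the relevant scales is a consequence of, rather than an additional hypothesis on, the assumption $\eta_0<\eta$.
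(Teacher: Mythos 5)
Your proposal is correct and is essentially the paper's own argument: both rest on the bound $\P_0[A_n^E]=\int_{A_n^E}H_n\,d\P_1\le\|\chi_{A_n^E}H_n\|_\infty\,\P_1[A_n^E]$ (valid by adaptedness, $A_n^E\in\mathcal{F}_n$) followed by choosing $\ve$ (your $\eta'-\eta_0$) small enough that the combined exponent $\eta-\eta_0-\ve$ stays positive. You merely spell out the H\"older-type step and the adaptedness hypothesis that the paper's terser proof leaves implicit.
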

	
	\begin{proof}
		Under the assumption of (\ref{eventsup}) we have as before
		for any $\ve > 0$ some $N_0= N_0(\ve)$ such that
		\[ e^{-(\eta_0+\ve)n}\left\|\chi_{A_n^E}H_n\right\|_\infty \leq 1\]
		for all $n> N_0$ and $\alpha \in A$.
		By our assumption of uniform exponential decay with respect to $\P_1$, there is $N_1 \in \N$ such that
		\[ \P_1[A_n^E] \leq e^{-\eta n}\]
		for all $\alpha$ and $n >N_1$. Fixing $\ve < \eta-\eta_0$, and $N = \max(N_0,N_1)$, we get:
		\[\P_0[A_n^E] \leq e^{-(\eta-\eta_0-\ve)n}\]
		for $n > N$ and all $E$, 
		establishing uniform exponential decay of the family with respect to $\P_0$.
	\end{proof}
	
	Control of the Radon-Nikodym derivatives on the whole space gives a general result.
	\begin{thm}\label{slowgrowlemunifcor}
		If $(\Omega,\mathcal{M})$, $\mathcal{F}_n$, $\P_0$, $\P_1$ and $H_n$ are as above, and moreover we have
		\begin{equation}\label{zerolog} \lim_{n\rightarrow \infty} \frac{1}{n}\log\|H_n\| = 0
		\end{equation}
		then any family adapted to $\mathcal{F}_n$ exponentially decaying uniformly in $E$ with respect to $\P_1$ is exponentially decaying uniformly in $E$ with respect to $\P_0$.
	\end{thm}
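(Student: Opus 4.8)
The plan is to obtain this as an essentially immediate consequence of Lemma \ref{slowgrowlemunif}. The only input needed is a pointwise comparison between the truncated Radon--Nikodym derivative $\chi_{A_n^E}H_n$ and $H_n$ itself. Since $H_n = d(\P_0|_{\mathcal{F}_n})/d(\P_1|_{\mathcal{F}_n}) \geq 0$ $\P_1$-a.e.\ (the absolute continuity $\P_0|_{\mathcal{F}_n}\ll\P_1|_{\mathcal{F}_n}$ being part of the standing hypotheses ``as above'', which is what makes $H_n$ and the statement meaningful in the first place), and $0 \leq \chi_{A_n^E} \leq 1$, we have $0 \leq \chi_{A_n^E}H_n \leq H_n$ $\P_1$-a.e. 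Taking $L^\infty(d\P_1)$ norms gives $\|\chi_{A_n^E}H_n\|_\infty \leq \|H_n\|_\infty$ for every $E$ and every $n$.

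Next I would feed this into the quantity $\eta_0$ defined in (\ref{eventsup}). For each fixed $n$ we get
\[ \sup_E \frac{1}{n}\log\big(\|\chi_{A_n^E}H_n\|_\infty\big) \;\leq\; \frac{1}{n}\log\|H_n\|_\infty, \]
and hence, passing to the $\limsup$ in $n$,
\[ \eta_0 \;\leq\; \limsup_{n\to\infty}\frac{1}{n}\log\|H_n\|_\infty \;=\; 0 \]
by the hypothesis (\ref{zerolog}). On the other hand, the assumed uniform exponential decay w.r.t.\ $\P_1$ comes with a strictly positive rate $\eta > 0$, by the very definition of decaying exponentially uniformly in $E$. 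Therefore $\eta_0 \leq 0 < \eta$, which is precisely the condition (\ref{eventsup}) required to invoke Lemma \ref{slowgrowlemunif}; applying that lemma yields uniform exponential decay of $(A_n^E)$ w.r.t.\ $\P_0$, as desired.

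I do not expect any genuine obstacle here: the real content lives in Lemma \ref{slowgrowlemunif}, and this corollary simply records the clean case in which one controls the entire Radon--Nikodym derivative $H_n$ rather than only its product with $\chi_{A_n^E}$. The only points worth stating with care are that $\|\cdot\|$ in (\ref{zerolog}) denotes the $L^\infty(d\P_1)$ norm, and that one is implicitly assuming $\P_0|_{\mathcal{F}_n}\ll\P_1|_{\mathcal{F}_n}$. With Theorem \ref{slowgrowlemunifcor} in hand, Theorem \ref{transmeth} then reduces to the separate short computation that, for the filtration $(\mathcal{F}_n)$ generated by the coordinate projections, condition (\ref{logmom}) forces $\frac{1}{n}\log\|H_n\| \to 0$ via the product structure of $\P_0$ and $\P_1$.
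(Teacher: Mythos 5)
Your argument is correct and is essentially identical to the paper's own proof: both bound $\|\chi_{A_n^E}H_n\|_\infty \leq \|H_n\|_\infty$, deduce from (\ref{zerolog}) that $\eta_0 \leq 0 < \eta$ in (\ref{eventsup}), and conclude by invoking Lemma \ref{slowgrowlemunif}.
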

	\begin{proof}
		For any family of events we have
		\[\sup_E\|\chi_{A_n^E}H_n\|_\infty  \leq \|H_n\|_\infty\]
		for all $n$. By monotonicity of $\log$, the required bound (\ref{eventsup}) holds as before.
		
	\end{proof}
	
	Along essentially the same lines we also have a uniformized version:
	
	\begin{thm}\label{unifslogro}
		If $(\Omega,\mathcal{M})$, $(\mathcal{F}_n)^k$, $\P_0$, $\P_1$ and $H_n^k$ are as above, and moreover we have
		\begin{equation}\label{zerologunif} \lim_{n\rightarrow \infty}  \frac{1}{n}\left[\sup_{k\in\Z}\log\|H_n^k\|_\infty \right] = 0
		\end{equation}
		then any family adapted to $\mathcal{F}_n$ exponentially decaying uniformly in $E$ and $k$ with respect to $\P_1$ is exponentially decaying uniformly in $E$ and $k$ with respect to $\P_0$.
	\end{thm}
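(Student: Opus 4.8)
The plan is to deduce Theorem \ref{unifslogro} from the filtration-wise version of Lemma \ref{slowgrowlemunif}, treating each fixed $k$ separately and then exploiting the uniformity in the hypothesis \eqref{zerologunif} to keep the threshold $N$ and rate independent of $k$. First I would observe that for each fixed $k_0 \in \Z$, the sequence $(\mathcal{F}_n^{k_0})_{n \in \N}$ is a filtration of $\sigma$-subalgebras of $\mathcal{M}$ with $\P_0|_{\mathcal{F}_n^{k_0}} \ll \P_1|_{\mathcal{F}_n^{k_0}}$ (this absolute continuity is inherited from the product structure, exactly as in the non-uniform setting), so $H_n^{k_0}$ is well-defined, and the collection $(A_n^{k_0,E})_E$ is adapted to this filtration by the definition of $(\mathcal{F}_n)^k$-adaptedness.

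Next I would run the estimate from the proof of Lemma \ref{slowgrowlemunif}, but carrying the supremum over $k$ along at every stage. Let $\eta$ be the uniform (in $E$ and $k$) exponential decay rate of the family w.r.t. $\P_1$, with threshold $N_1$, so $\P_1[A_n^{k,E}] \le e^{-\eta n}$ for all $k$, all $E$, and all $n > N_1$. By \eqref{zerologunif}, for any $\ve > 0$ there is $N_0 = N_0(\ve)$ such that $\sup_k \|H_n^k\|_\infty \le e^{\ve n}$ for $n > N_0$; using $\|\chi_{A_n^{k,E}} H_n^k\|_\infty \le \|H_n^k\|_\infty$ and the change-of-measure inequality $\P_0[A_n^{k,E}] \le \|\chi_{A_n^{k,E}} H_n^k\|_{L^\infty(d\P_1)} \cdot \P_1[A_n^{k,E}]$ (valid since $A_n^{k,E} \in \mathcal{F}_n^k$, where the Radon–Nikodym derivative lives), I would conclude
\[ \P_0[A_n^{k,E}] \le e^{\ve n} e^{-\eta n} = e^{-(\eta - \ve) n} \]
for all $k$, all $E$, and all $n > N := \max(N_0, N_1)$. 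Fixing any $\ve < \eta$ gives a positive rate $\eta - \ve$ and a threshold $N$, both independent of $k$ and $E$, which is precisely uniform exponential decay in $E$ and $k$ w.r.t. $\P_0$.

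Since this is essentially verbatim the argument already given for Theorem \ref{slowgrowlemunifcor} with an extra harmless $\sup_k$ threaded through, there is no real obstacle; the only point requiring a moment's care is confirming that the change-of-measure bound $\P_0[A] \le \|\chi_A H_n^k\|_{L^\infty(d\P_1)} \P_1[A]$ really does hold for $A \in \mathcal{F}_n^k$ uniformly, i.e. that $H_n^k$ genuinely represents $d(\P_0|_{\mathcal{F}_n^k})/d(\P_1|_{\mathcal{F}_n^k})$ for every $k$ — which follows from the product structure of $\P_0$ and $\P_1$ and the fact that $\mathcal{F}_n^k$ is generated by the finitely many coordinates $V_{k-n}, \dots, V_{k+n}$, on which $\P_0$ has density $\prod_{j=k-n}^{k+n} g_j$ relative to $\P_1$. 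Given that, the $L^\infty$ bound $\|H_n^k\|_\infty \le \prod_{j=k-n}^{k+n}\|g_j\|_\infty$ is exactly what makes \eqref{zerologunif} the natural hypothesis, and the proof closes.
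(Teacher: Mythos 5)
Your proof is correct and is essentially the paper's own argument: the paper gives no separate proof of Theorem \ref{unifslogro}, stating only that it follows ``along essentially the same lines'' as Theorem \ref{slowgrowlemunifcor} and Lemma \ref{slowgrowlemunif}, which is precisely the change-of-measure estimate $\P_0[A_n^{k,E}] \le \|\chi_{A_n^{k,E}}H_n^k\|_\infty \P_1[A_n^{k,E}]$ with the supremum over $k$ carried through as you do. (Your closing remark about the product structure is only needed for the later application; in the abstract setting the absolute continuity $\P_0|_{\mathcal{F}_n^k}\ll\P_1|_{\mathcal{F}_n^k}$ is part of the standing hypotheses.)
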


	\begin{remm}
		These results have a natural analogue for other large deviation estimates. In particular if $r(n) \geq 0$ is monotone increasing and $r(n) \rightarrow \infty$, then for $\P_0$ and $\P_1$ satisfying the obvious analogue of (\ref{zerolog})
		\begin{equation}\label{zerolog4evr} \lim_{n\rightarrow \infty} \frac{1}{r(n)} \log\|H_n\|_\infty = 0
		\end{equation}
		any adapted sequence $A_n$ eventually satisfying $\P_1[A_n] \leq e^{-r(n)}$ also satisfies $\P_0[A_n] \leq e^{-(1-\ve)r(n)}$ eventually for any $\ve > 0$, and this can be made uniform over a parameter or over filtrations with the appropriate uniformity assumptions imposed on (\ref{zerolog4evr}).
	\end{remm}
	For our proof of localization, absent the uniformity in filtration coming from Theorem \ref{unifslogro}, we can get a weaker form of uniformity by examining arithmetic progressions within a filtration:
	\begin{cor}\label{arith}
		If $(\Omega,\mathcal{M})$, $\mathcal{F}_n$, $\P_0$ and $\P_1$ are as above and (\ref{zerolog}) holds, any family of events $\{A^E_n\}$ which is adapted with respect to a subfiltration $\{\mathcal{G}_n\}$ of the form $\mathcal{G}_{n} = \mathcal{F}_{kn+l}$ uniformly exponentially decays with respect to $\P_0$ if it does the same for $\P_1$.
	\end{cor}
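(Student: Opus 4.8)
The plan is to reduce Corollary \ref{arith} to Theorem \ref{slowgrowlemunifcor} applied to the subfiltration $\{\mathcal{G}_n\}$ rather than the original filtration. The key observation is that the hypothesis of Theorem \ref{slowgrowlemunifcor} is a statement about the asymptotics of $\frac{1}{n}\log\|H_n\|_\infty$ along the \emph{index} $n$, so when we pass to a subfiltration $\mathcal{G}_n = \mathcal{F}_{kn+l}$ we must check that the corresponding Radon--Nikodym derivatives $\tilde H_n := \frac{d(\P_0|_{\mathcal{G}_n})}{d(\P_1|_{\mathcal{G}_n})} = H_{kn+l}$ still satisfy $\frac{1}{n}\log\|\tilde H_n\|_\infty \to 0$. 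But this is immediate: since $\frac{1}{m}\log\|H_m\|_\infty \to 0$ as $m \to \infty$ by (\ref{zerolog}), substituting $m = kn+l$ gives $\frac{1}{n}\log\|H_{kn+l}\|_\infty = \frac{kn+l}{n}\cdot\frac{1}{kn+l}\log\|H_{kn+l}\|_\infty \to k \cdot 0 = 0$, using that $\frac{kn+l}{n} \to k$ is bounded.

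First I would verify that $\{\mathcal{G}_n\}_{n\in\N}$, with $\mathcal{G}_n = \mathcal{F}_{kn+l}$, is indeed a filtration of $\sigma$-subalgebras of $\mathcal{M}$: this follows since $\{\mathcal{F}_m\}$ is a filtration and $m \mapsto kn+l$ is (eventually) increasing in $n$ (we may assume $k \geq 1$; if $l$ is chosen so that $kn+l \geq 0$ for all $n\in\N$ the map into the original index set is well-defined, and otherwise we simply start the subfiltration at the appropriate index). Then $\P_0|_{\mathcal{G}_n} = \P_0|_{\mathcal{F}_{kn+l}} \ll \P_1|_{\mathcal{F}_{kn+l}} = \P_1|_{\mathcal{G}_n}$ for each $n$, so the setup of Theorem \ref{slowgrowlemunifcor} is met with $\tilde H_n = H_{kn+l}$. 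Next I would record the limit computation of the previous paragraph to confirm (\ref{zerolog}) holds for $\tilde H_n$. Finally, a family $\{A_n^E\}$ adapted to $\{\mathcal{G}_n\}$ which decays uniformly exponentially in $E$ with respect to $\P_1$ satisfies, by Theorem \ref{slowgrowlemunifcor} applied to the filtration $\{\mathcal{G}_n\}$, uniform exponential decay with respect to $\P_0$, which is the conclusion.

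There is essentially no serious obstacle here — the result is a direct corollary, and the only thing to be slightly careful about is the bookkeeping with the index shift (ensuring $kn+l$ lands in the domain of the original filtration and that monotonicity is preserved), together with noting that the exponential \emph{rate} obtained for $\P_0$ is measured against the new index $n$, so it corresponds to a rate $\eta/k$ in terms of the original index $kn+l$; this factor-of-$k$ loss in the rate is harmless for the application but worth stating. The one conceptual point worth emphasizing in the write-up is \emph{why} one would want this weaker arithmetic-progression uniformity at all: it lets one harvest large deviation estimates at a sparse but regularly-spaced set of scales when the fully uniform-in-center statement of Theorem \ref{unifslogro} is unavailable, which is exactly the situation in the bounded-potential localization argument.
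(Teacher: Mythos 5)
Your proposal is correct and follows essentially the same route as the paper: the paper likewise reduces the corollary to checking that $\frac{1}{n}\log\|H_{kn+l}\|_\infty \to 0$, which it obtains by comparing $\frac{1}{n}$ with $\frac{k}{kn+l}$ and invoking (\ref{zerolog}), exactly as in your rescaling computation. Your additional bookkeeping (that $\{\mathcal{G}_n\}$ is a filtration with $\tilde H_n = H_{kn+l}$, and the factor-of-$k$ change in the rate relative to the original index) is sound and just makes explicit what the paper leaves implicit.
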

	\begin{proof}
		It suffices to show that
		\[ \lim_{n\rightarrow 0} \frac{1}{n} \log\|H_{kn+l}\|= 0
		\]
		For large enough $n$ we have
		\[\frac{1}{n} \log\|H_{kn+l}\| \leq \frac{2k}{kn+l} \log\|H_{kn+l}\| \]
		with the right hand side term going to zero as a consequence of (\ref{zerolog}).
	\end{proof}
	We will later extract uniformity over finitely many (sub-)filtrations in the context of Schr\"odinger operators. Our work thus far now suffices to prove Theorem \ref{transmeth}, after recalling some definitions and a fundamental probabilistic result. Its uniformized (in filtration) variant Theorem \ref{transmethunif} will follow along the same lines.
	\begin{deffo}
		If $\Omega$ is a set, we call $\mathcal{A} \subset \mathcal{P}(\Omega)$ a $\pi$-system over $\Omega$ if
		\begin{enumerate}
			\item $\mathcal{A}$ is non-empty
			\item $\mathcal{A}$ is closed under finite intersections, i.e. for $A_1,A_2 \in \mathcal{A}$, we have $A_1 \cap A_2 \in \mathcal{A}$.
		\end{enumerate}
	\end{deffo}
	\begin{deffo}
		If $\Omega$ is a set, we call $\mathcal{Z} \subset \mathcal{P}(\Omega)$ a $\lambda$-system if
		\begin{enumerate}
			\item $\Omega \in \mathcal{Z}$
			\item $\mathcal{Z}$ is closed under complementation; for any $A \in \mathcal{Z}$, we have $A^C \in \mathcal{Z}$
			\item $\mathcal{Z}$ is closed under disjoint countable unions; if $(A_n)_{n\in\N}$ are pairwise disjoint and all in $\mathcal{Z}$, then $\cup A_n \in \mathcal{Z}$
		\end{enumerate}
	\end{deffo}
	A result of Dynkin, found in e.g. \cite{durrett_2019}, allows us to prove equality of measures in terms of these systems.
	\begin{thmcite}{(Dynkin)}
		If $\mathcal{A}$ is a $\pi$-system contained in some $\lambda$-system $\mathcal{Z}$,
		then $\sigma(\mathcal{A})$ is also contained in $\mathcal{Z}$, where $\sigma(\mathcal{A})$ is the $\sigma$-algebra generated by $\mathcal{A}$.
	\end{thmcite}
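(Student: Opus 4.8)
The plan is to show that $\ell(\mathcal{A})$, the smallest $\lambda$-system containing $\mathcal{A}$, is in fact a $\sigma$-algebra. This object is well-defined, since an arbitrary intersection of $\lambda$-systems over $\Omega$ is again a $\lambda$-system, and because $\mathcal{A} \subset \mathcal{Z}$ with $\mathcal{Z}$ a $\lambda$-system we automatically get $\ell(\mathcal{A}) \subset \mathcal{Z}$. Once we know $\ell(\mathcal{A})$ is a $\sigma$-algebra that contains $\mathcal{A}$, minimality of $\sigma(\mathcal{A})$ forces $\sigma(\mathcal{A}) \subset \ell(\mathcal{A}) \subset \mathcal{Z}$, which is the claim. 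So the entire content is in the assertion that $\ell(\mathcal{A})$ is closed under countable intersections and complements.

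First I would record two elementary closure facts about an arbitrary $\lambda$-system $\mathcal{L}$. (i) $\mathcal{L}$ is closed under proper differences: if $C \subset D$ with $C, D \in \mathcal{L}$, then $D^C \in \mathcal{L}$, the sets $D^C$ and $C$ are disjoint, so $D^C \sqcup C \in \mathcal{L}$, hence $D \setminus C = (D^C \sqcup C)^C \in \mathcal{L}$. (ii) If in addition $\mathcal{L}$ is closed under finite intersections, then it is a $\sigma$-algebra: it already contains $\Omega$ and is closed under complements, finite unions follow from $A \cup A' = (A^C \cap A'^C)^C$, and for a general sequence $(A_n)$ one sets $B_n = \bigcup_{k \le n} A_k \in \mathcal{L}$ and writes $\bigcup_n A_n = B_1 \sqcup (B_2 \setminus B_1) \sqcup (B_3 \setminus B_2) \sqcup \cdots$, a countable disjoint union whose pieces lie in $\mathcal{L}$ by (i). Thus it suffices to prove that $\ell(\mathcal{A})$ is closed under finite intersections.

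The heart of the proof, and the step I expect to be the one requiring care, is the standard two-pass bootstrapping argument. For a fixed set $B$, define $\mathcal{D}_B := \{\, C : B \cap C \in \ell(\mathcal{A}) \,\}$. Whenever $B \in \ell(\mathcal{A})$, $\mathcal{D}_B$ is a $\lambda$-system: $\Omega \in \mathcal{D}_B$ since $B \cap \Omega = B$; if $C \in \mathcal{D}_B$ then $B \cap C^C = B \setminus (B \cap C) \in \ell(\mathcal{A})$ by fact (i) (here we use $B \in \ell(\mathcal{A})$ and $B \cap C \subset B$), so $C^C \in \mathcal{D}_B$; and if the $C_n \in \mathcal{D}_B$ are pairwise disjoint, the sets $B \cap C_n$ are pairwise disjoint and in $\ell(\mathcal{A})$, so $\bigsqcup_n (B \cap C_n) = B \cap \bigsqcup_n C_n \in \ell(\mathcal{A})$. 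Now argue in two passes. Pass one: fix $A \in \mathcal{A}$; since $\mathcal{A}$ is a $\pi$-system, $A \cap A' \in \mathcal{A} \subset \ell(\mathcal{A})$ for every $A' \in \mathcal{A}$, so $\mathcal{A} \subset \mathcal{D}_A$, and as $\mathcal{D}_A$ is a $\lambda$-system, minimality of $\ell(\mathcal{A})$ gives $\ell(\mathcal{A}) \subset \mathcal{D}_A$; equivalently, $A \cap C \in \ell(\mathcal{A})$ for all $A \in \mathcal{A}$ and all $C \in \ell(\mathcal{A})$. Pass two: now fix $C \in \ell(\mathcal{A})$; the conclusion of pass one says precisely $\mathcal{A} \subset \mathcal{D}_C$, and $\mathcal{D}_C$ is again a $\lambda$-system, so $\ell(\mathcal{A}) \subset \mathcal{D}_C$; equivalently, $B \cap C \in \ell(\mathcal{A})$ for all $B, C \in \ell(\mathcal{A})$.

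Therefore $\ell(\mathcal{A})$ is closed under finite intersections, and by fact (ii) it is a $\sigma$-algebra, which completes the argument. The only subtlety is keeping the roles of the fixed set and the varying set straight across the two passes — in pass one the fixed set ranges over $\mathcal{A}$ and in pass two over all of $\ell(\mathcal{A})$ — and making sure at each stage that the hypothesis $B \in \ell(\mathcal{A})$ needed for $\mathcal{D}_B$ to be a $\lambda$-system is actually in force.
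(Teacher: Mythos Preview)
The paper does not prove this statement at all; it is quoted as a classical result of Dynkin (with a pointer to Durrett's textbook) and used as a black box in the $\pi$-$\lambda$ arguments of Lemmas \ref{duh} and \ref{abscontgen}. Your write-up is the standard and correct proof of the $\pi$-$\lambda$ theorem: define the smallest $\lambda$-system $\ell(\mathcal{A})$ containing $\mathcal{A}$, run the two-pass bootstrap with the auxiliary families $\mathcal{D}_B$ to obtain closure under finite intersections, and conclude that $\ell(\mathcal{A})$ is a $\sigma$-algebra. There is nothing to compare against here, but your argument is sound.
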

	We prove one more lemma, which, combined with Theorem \ref{slowgrowlemunifcor} implies Theorem \ref{transmeth}.
	\begin{lem} \label{duh}
		In the setting of  Theorem \ref{transmeth} we have for $N >0$:
		\[\|H_N\|_\infty = \prod_{n=-N}^N\|g_n\|_\infty\]
	\end{lem}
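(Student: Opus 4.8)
The plan is to compute the Radon–Nikodym derivative $H_N = \frac{d(\P_0|_{\mathcal{F}_N})}{d(\P_1|_{\mathcal{F}_N})}$ explicitly and then take its $L^\infty(d\P_1)$ norm. First I would recall that $\mathcal{F}_N = \sigma(V_{-N}, V_{1-N}, \ldots, V_{N-1}, V_N)$, so $\P_0|_{\mathcal{F}_N}$ and $\P_1|_{\mathcal{F}_N}$ are just the joint distributions of the finite block $(V_{-N}, \ldots, V_N)$ under $\P_0$ and $\P_1$ respectively. Since $\P_0 = \bigotimes_{n \in \Z} g_n \mu$ and $\P_1 = \mu^{\Z}$ are product measures, these restrictions are the finite product measures $\bigotimes_{n=-N}^{N} g_n\mu$ and $\bigotimes_{n=-N}^{N} \mu$ on the block. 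The Radon–Nikodym derivative of a finite product with respect to another finite product (over the same index set, each factor absolutely continuous) factorizes as the product of the per-coordinate densities, so $H_N(\omega) = \prod_{n=-N}^{N} g_n(V_n(\omega)) = \prod_{n=-N}^{N} g_n(\omega_n)$, as a function on $\Omega$ measurable with respect to $\mathcal{F}_N$.

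The one subtlety worth spelling out is that $H_N$ as defined is a function on $(\Omega, \mathcal{F}_N)$, not on the finite-dimensional block, but since $\mathcal{F}_N$ is generated by the coordinate projections $V_{-N}, \ldots, V_N$ this is merely a matter of pulling back along the projection $\omega \mapsto (\omega_{-N}, \ldots, \omega_N)$; one can verify the factorized formula satisfies the defining property of the conditional-expectation/Radon–Nikodym characterization by checking it against cylinder sets $A = A_{-N} \times \cdots \times A_N \times X^{\ldots}$ (which form a $\pi$-system generating $\mathcal{F}_N$), where both sides reduce to the Fubini computation $\prod_{n=-N}^{N} \int_{A_n} g_n \, d\mu$; then Dynkin's lemma (already invoked in the excerpt) upgrades this to all of $\mathcal{F}_N$. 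I would note this briefly rather than belabor it.

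Finally, having $H_N = \prod_{n=-N}^{N} g_n(\omega_n)$ $\P_1$-a.e., I would compute its $L^\infty(d\P_1)$ norm. Since $\P_1 = \mu^{\Z}$ makes the coordinates independent, and each $g_n \geq 0$, the essential supremum of a product of functions of independent variables is the product of the essential suprema: $\|H_N\|_{L^\infty(d\P_1)} = \mathrm{ess\,sup}_{\P_1} \prod_{n=-N}^{N} g_n(\omega_n) = \prod_{n=-N}^{N} \mathrm{ess\,sup}_{\mu} g_n = \prod_{n=-N}^{N} \|g_n\|_\infty$. The "$\leq$" direction is immediate; for "$\geq$" one uses that for any $\ve > 0$ each set $\{g_n > \|g_n\|_\infty - \ve\}$ has positive $\mu$-measure, so their product over $-N \leq n \leq N$ has positive $\P_1$-measure, on which $H_N > \prod(\|g_n\|_\infty - \ve)$; letting $\ve \to 0$ gives the claim. (If some $\|g_n\|_\infty = \infty$ the identity holds trivially in $[0,\infty]$, but the hypothesis $g_n \in L^\infty(d\mu)$ rules this out.) The main obstacle is essentially bookkeeping — correctly identifying the restricted measures as finite products and justifying the factorization of both the Radon–Nikodym derivative and the essential supremum — rather than any deep step; none of these require more than Fubini, Dynkin's $\pi$–$\lambda$ lemma, and the definition of essential supremum under a product measure.
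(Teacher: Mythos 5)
Your proposal is correct and follows essentially the same route as the paper: both identify $H_N(\omega)=\prod_{n=-N}^N g_n(V_n(\omega))$ by checking the defining identity on measurable cylinder sets (a $\pi$-system generating $\mathcal{F}_N$) and invoking Dynkin's $\pi$--$\lambda$ theorem, then read off the $L^\infty(d\P_1)$ norm from the product structure. The only difference is cosmetic: you spell out the lower bound $\|H_N\|_\infty\geq\prod_{n=-N}^N\|g_n\|_\infty$ via the positive-measure sets $\{g_n>\|g_n\|_\infty-\ve\}$, a step the paper merely notes follows from the product measure structure.
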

	\begin{proof}
		We prove this by showing that in fact we have the pointwise (a.s. with respect to $\P_1$) equality:
		\[ H_N(\omega) = \prod_{n=-N}^N g_n(V_n(\omega)) \]
		The right hand side of this clearly has norm less than or equal to $\prod_{n=-N}^N \|g_n\|_\infty$ by submultiplicativity of $\|\cdot\|_\infty$, and can be shown to have norm at least that more or less as a direct consequence of the product measure structure. This equality of functions (pointwise a.s.) will follow from showing  equality of their integrals on any $\mathcal{F}_n$ measurable set. We define
		\[ \tilde{H}_N := H_N(\omega) - \prod_{n=-N}^N g_n(V_n(\omega))
		\]
		so that equality of their integrals is the same as vanishing of the integral of $\tilde{H}_N$.
		
		We let $\mathcal{Z}_N$ be the family of sets $A \in \mathcal{F}_N$ such that
		\[\int_A \tilde{H}_N(\omega) \,d\P_1 = 0\]
		These families are closed under countable disjoint union as a consequence of the dominated convergence theorem. (Note that $\E_1[|H_N|] \leq 2$.) If we assume that $\Omega \in \mathcal{Z}_N$, we obtain:
		\[\int_A \tilde{H}_N(\omega)\,d\P_1 = -\int_{A^C} \tilde{H}_N(\omega)\,d\P_1 \]
		If we can show $\Omega \in \mathcal{Z}_n$, we will also have closure under complementation, and hence that $\mathcal{Z}_n$ are $\lambda$-systems.
		
		We let $\mathcal{A}_N$ denote the family of $\mathcal{F}_N$ measurable cylinder sets, i.e. $A = \prod_{n \in\Z} A_n$ with $A_n \in \mathcal{M}$. It is clear that $\mathcal{F}_N$ measurability requires that $A_n \in \{\varnothing, X\}$ for $n > N$ or $n < -N$. We will show $\mathcal{A}_N \subset \mathcal{Z}_N$. The result is trivial if $A_n = \varnothing$ for any $n$ as then the set has measure zero; we thus assume that $A_n = X$ for $n >N$ or $n <- N$. The result then follows by a computation, using the fact that $\P_1$ is a product measure:
		\begin{align*}
			\E_1[\chi_A H_N(\omega)] &=\P_0[A]\\
			&= \prod_{n = -N}^N \E_1[\chi_{A_n}g_n(V_n(\omega))]
		\end{align*}
		We note that $\Omega \in \mathcal{A}_N$ for all $N$, as clearly $\Omega = \prod_{n \in \Z} X$, hence $\Omega \in \mathcal{Z}_N$, and the $\mathcal{Z}_N$ are $\lambda$-systems. Moreover, $\mathcal{A}_N$ are $\pi$-systems; clearly \[\prod_{m \in Z} A_m \cap \prod_{m\in \Z} B_m = \prod_{m\in Z} (A_m \cap B_m)\]
		and measurability is preserved. Hence $\sigma(\mathcal{A}_N) \subset \mathcal{Z}_N$ by the $\pi$-$\lambda$ theorem. But $\sigma(\mathcal{A}_N)$ is precisely $\mathcal{F}_n$; the pre-images of the projections are cylinder sets, giving us the desired equality.
	\end{proof}
	
	Using a similar argument as in the previous lemma, we can prove an abstract result which has Theorem \ref{abscont} as an immediate consequence. First we recall a probabilistic result of Doob's, whence we will obtain the existence of $H_\infty = \lim_{N\rightarrow\infty} H_N$. The result can be found in e.g. \cite{durrett_2019}.
	\begin{thmcite} {(Doob)}
		Let $Y_n$ be a martingale on a probability space $(X,\mathcal{F},P)$ with respect to a filtration $\mathcal{F}_n$ such that $Y_n \geq 0$, and also let $\mathcal{F}_\infty = \sigma(\mathcal{F}_1,\dots)$. Then there is $Y_\infty$ which is $\mathcal{F}_\infty$ measurable such that $Y_n \rightarrow Y_\infty$ almost surely.
	\end{thmcite}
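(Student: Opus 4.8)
The plan is to establish this by the classical route of Doob's upcrossing inequality. Since $Y_n \geq 0$, it suffices to control upcrossings of intervals $[a,b]$ with $0 \leq a < b$ rational. For fixed such $a,b$ and each $n$, let $U_n[a,b]$ denote the number of upcrossings of $[a,b]$ completed by the finite sequence $Y_1,\dots,Y_n$, i.e.\ the number of disjoint index-intervals $[\sigma_i,\tau_i]$ with $\sigma_i < \tau_i \leq n$, $Y_{\sigma_i} \leq a$, $Y_{\tau_i} \geq b$, chosen greedily from the left. I would first prove the upcrossing inequality
\[ (b-a)\,\E\big[U_n[a,b]\big] \;\leq\; \E\big[(Y_n-a)^-\big] \;\leq\; a, \]
the last step using $Y_n \geq 0$ and $a \geq 0$, so that the bound is uniform in $n$.

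To obtain the inequality, introduce the predictable ``betting'' process $C_k \in \{0,1\}$, where $C_k = 1$ exactly when, on the basis of $Y_1,\dots,Y_{k-1}$, we are currently inside an attempted upcrossing (the process has been $\leq a$ more recently than it has been $\geq b$). Since $C_k$ is $\mathcal{F}_{k-1}$-measurable and bounded, the discrete stochastic integral $(C\cdot Y)_n := \sum_{k=2}^n C_k (Y_k - Y_{k-1})$ is again a martingale, null at time $1$, so $\E[(C\cdot Y)_n] = 0$. Pathwise, each completed upcrossing contributes at least $b-a$ to $(C\cdot Y)_n$, and the only negative contribution comes from a possibly incomplete final excursion, costing at most $(Y_n-a)^-$; hence $(C\cdot Y)_n \geq (b-a)U_n[a,b] - (Y_n-a)^-$ surely, and taking expectations gives the claim.

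From here the argument is soft. The sequence $U_n[a,b]$ increases to a limit $U_\infty[a,b]$, and monotone convergence with the uniform bound gives $\E[U_\infty[a,b]] \leq a/(b-a) < \infty$, so $U_\infty[a,b] < \infty$ almost surely. The event on which $Y_n$ fails to converge in $[0,\infty]$ is $\{\liminf_n Y_n < \limsup_n Y_n\} = \bigcup \{\liminf_n Y_n < a < b < \limsup_n Y_n\}$, the union over rationals $0 \leq a < b$, which is contained in the countable union $\bigcup \{U_\infty[a,b] = \infty\}$ of null sets and hence is null. Thus $Y_n \to Y_\infty$ almost surely for some $[0,\infty]$-valued $Y_\infty$. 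By Fatou and the martingale property, $\E[Y_\infty] \leq \liminf_n \E[Y_n] = \E[Y_1] < \infty$, so $Y_\infty < \infty$ almost surely; after discarding a null set one may set $Y_\infty := \limsup_n Y_n$, which is everywhere finite and, as a $\limsup$ of $\mathcal{F}_\infty$-measurable functions, is $\mathcal{F}_\infty$-measurable.

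The crux is the upcrossing inequality: both verifying that $C$ is genuinely predictable --- that the indicator ``I am presently in the middle of an upcrossing attempt'' depends only on $Y_1,\dots,Y_{k-1}$ --- and checking the pathwise lower bound $(C\cdot Y)_n \geq (b-a)U_n[a,b] - (Y_n-a)^-$ by accounting for each excursion. Everything after that (monotone convergence, Fatou, countable subadditivity for null sets, measurability of a $\limsup$) is routine.
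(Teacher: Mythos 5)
Your proof is correct: the upcrossing inequality with the predictable betting process, the uniform bound $\E[(Y_n-a)^-]\leq a$ coming from non-negativity, the countable union over rational pairs, and the Fatou step to ensure finiteness of the limit are all in order, and the $\mathcal{F}_\infty$-measurability of the $\limsup$ is the right way to finish. The paper does not prove this statement at all but simply cites it (Doob's martingale convergence theorem, referenced to Durrett), and the cited source's proof is exactly the upcrossing argument you gave, so your approach coincides with the standard one the paper relies on.
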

	Using this result, we can prove:
	\begin{lem}\label{abscontgen}
		Let $(\Omega,\mathcal{F})$, $\P_0$, $\P_1$ and $(\mathcal{F}_n)$ be as in the previous lemma and corollary, and let $\mathcal{F}_\infty = \sigma(\mathcal{F}_1,\mathcal{F}_2,\dots)$. If
		\begin{equation}\label{absbound}
			C:=\sup_n \log \|H_n\|  < \infty
		\end{equation}
		then $\P_0|_{\mathcal{F}_\infty} \ll \P_1|_{\mathcal{F}_\infty}$ and $\P_0[A] \leq e^C\cdot \P_1[A]$ for all $A \in \mathcal{F}$.
	\end{lem}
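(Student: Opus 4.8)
The plan is to show that $H_\infty := \lim_{N\to\infty} H_N$ exists $\P_1$-a.s., is bounded by $e^C$, and serves as a Radon--Nikodym derivative of $\P_0|_{\mathcal{F}_\infty}$ with respect to $\P_1|_{\mathcal{F}_\infty}$. First I would observe that $(H_n)$ is a nonnegative martingale with respect to $(\mathcal{F}_n, \P_1)$: indeed for $A \in \mathcal{F}_n \subset \mathcal{F}_{n+1}$ we have $\E_1[\chi_A H_{n+1}] = \P_0[A] = \E_1[\chi_A H_n]$, and $H_n$ is $\mathcal{F}_n$-measurable, so $\E_1[H_{n+1} \mid \mathcal{F}_n] = H_n$. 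By Doob's theorem quoted above, there is an $\mathcal{F}_\infty$-measurable $H_\infty$ with $H_n \to H_\infty$ $\P_1$-a.s. Since each $\|H_n\|_\infty \le e^C$ (this is exactly hypothesis (\ref{absbound})), the a.s. limit also satisfies $0 \le H_\infty \le e^C$ $\P_1$-a.s.

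Next I would identify $H_\infty$ as the density. The goal is to prove $\P_0[A] = \E_1[\chi_A H_\infty]$ for all $A \in \mathcal{F}_\infty$. The strategy mirrors Lemma \ref{duh}: let $\mathcal{Z}$ be the collection of $A \in \mathcal{F}_\infty$ for which this identity holds, and let $\mathcal{A} = \bigcup_n \mathcal{F}_n$. For $A \in \mathcal{F}_n$, we know $\P_0[A] = \E_1[\chi_A H_n]$; on the other hand $\E_1[\chi_A H_m] = \E_1[\chi_A H_n]$ for all $m \ge n$ by the martingale property, and since $0 \le H_m \le e^C$ uniformly, dominated convergence gives $\E_1[\chi_A H_m] \to \E_1[\chi_A H_\infty]$. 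Hence $\P_0[A] = \E_1[\chi_A H_\infty]$, so $\mathcal{A} \subset \mathcal{Z}$. Now $\mathcal{A}$ is a $\pi$-system (a union of an increasing chain of $\sigma$-algebras is closed under intersection), and $\mathcal{Z}$ is a $\lambda$-system: it contains $\Omega$, it is closed under complementation because $\P_0[A^C] = 1 - \P_0[A] = \E_1[H_\infty] - \E_1[\chi_A H_\infty] = \E_1[\chi_{A^C} H_\infty]$ (using $\E_1[H_\infty] = \lim \E_1[H_n] = 1$), and it is closed under countable disjoint unions by monotone (or dominated) convergence. By Dynkin's theorem, $\mathcal{F}_\infty = \sigma(\mathcal{A}) \subset \mathcal{Z}$, which is the claimed identity.

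Finally, with $\P_0[A] = \E_1[\chi_A H_\infty]$ for $A \in \mathcal{F}_\infty$ in hand, the conclusions are immediate: if $\P_1[A] = 0$ then $\chi_A H_\infty = 0$ $\P_1$-a.s., so $\P_0[A] = 0$, giving $\P_0|_{\mathcal{F}_\infty} \ll \P_1|_{\mathcal{F}_\infty}$; and $\P_0[A] = \E_1[\chi_A H_\infty] \le e^C \E_1[\chi_A] = e^C \P_1[A]$. (For general $A \in \mathcal{F}$ one restricts attention to $A \cap \Omega$ within $\mathcal{F}_\infty$ if $\mathcal{F} = \mathcal{F}_\infty$; in the stated application $\mathcal{F}$ is generated by the $\mathcal{F}_n$, so $\mathcal{F} = \mathcal{F}_\infty$ and there is nothing extra to check.) The main obstacle, such as it is, is purely a matter of bookkeeping: one must verify the uniform bound $\|H_n\|_\infty \le e^C$ is genuinely what (\ref{absbound}) provides and that it is strong enough to justify the dominated-convergence passages — which it is, since $e^C$ is an honest $\P_1$-integrable constant majorant. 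Deducing Theorem \ref{abscont} is then a one-line application: by Lemma \ref{duh}, $\|H_N\|_\infty = \prod_{n=-N}^N \|g_n\|_\infty \le \prod_{n \in \Z}\|g_n\|_\infty =: C'$, and (\ref{logsum}) guarantees $\log C' < \infty$, so $C = \log C'$ works and $e^C = C'$.
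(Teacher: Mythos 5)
Your proof is correct and follows essentially the same route as the paper: establish that $(H_n)$ is a nonnegative $\P_1$-martingale, invoke Doob's convergence theorem to get $H_\infty \le e^C$, and run a $\pi$-$\lambda$ argument with $\mathcal{A}=\cup_n \mathcal{F}_n$ to identify $H_\infty$ as the density on $\mathcal{F}_\infty$, from which the absolute continuity and the bound $\P_0[A]\le e^C\P_1[A]$ follow. In fact you spell out the verification that $\mathcal{A}\subset\mathcal{Z}$ (via the martingale property plus dominated convergence), a step the paper leaves implicit, so no gap remains.
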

	\begin{proof}
		First we note that even in the absence of the moment condition, the stochastic process $H_n$ is in fact a martingale with respect to $(\mathcal{F}_n)$ and $\P_1$. Indeed, the martingale condition requires that $\E_1[H_{n+1}\,|\,\mathcal{F}_n] = H_n$. It follows from the definition that $H_n$ is $\mathcal{F}_n$ measurable, and because $(\mathcal{F}_n)$ is a filtration, $\mathcal{F}_n \subset \mathcal{F}_{n+1}$ so that for $A \in \mathcal{F}_n$, we have
		\begin{align*}
			\E_1[\chi_A H_n] &=\P_0[A]\\
			&= \E_1[\chi_A H_{n+1}]\\
			&= \E_1[\E_1[ \chi_A\cdot H_{n+1} | \mathcal{F}_n]]\\
			&= \E_1[\chi_A\cdot \E_1[H_{n+1}|\mathcal{F}_n]]
		\end{align*}
		with the very last equality a consequence of the $\mathcal{F}_n$ measurability of $A$. Hence $H_n$ is in fact $\E_1[H_{n+1}|\mathcal{F}_n]$.
		
		Non-negativity of the Radon-Nikodym derivatives is obvious, so there exists (up to almost everywhere equivalence) a $\P_1$ almost sure limit
		\[ H_\infty(\omega) := \lim_{n\rightarrow \infty} H_n(\omega) \]
		which is $\mathcal{F}_\infty$ measurable; moreover, under our condition (\ref{absbound}), satisfies $H_\infty \leq e^C$ almost surely. As a consequence of the dominated convergence theorem, $\E_1[H_\infty] = 1$. The theorem will follow immediately once we show that
		$\P_0[A] = \E_1[\chi_A\cdot H_\infty]$ for all $A \in \mathcal{F}_\infty$, i.e. $H_\infty$ is the Radon-Nikodym derivative $\frac{d(\P_0|_{\mathcal{F}_\infty})}{d(\P_1|_{\mathcal{F}_\infty})}$. Towards this end, we use another $\pi$-$\lambda$ argument. We let $\mathcal{A} = \cup_{n\in\N} \mathcal{F}_n$. Non-emptiness and closure under finite intersections are both obvious. We let $\mathcal{Z}$ be the collection of $A \in \mathcal{F}$ such that $\P_0[A] = \E_1[\chi_A \cdot H_\infty]$. As noted earlier, $\E_1[H_\infty] = 1$ by dominated convergence, and so $\Omega \in \mathcal{Z}$. Naturally then, $\E_1[\chi_{A^C}\cdot H_\infty] = 1 - \E_1[\chi_A\cdot H_\infty]$, so that $\mathcal{Z}$ is closed under complementation. Finally, using monotone convergence twice we obtain for any countable collection of disjoint $(A_n)_{n\in\N} \subset \mathcal{Z}$:
		\begin{align*}
			\P_0[\cup A_n] &= \sum_n\P_0[A_n] \\
			&= \sum_n \E_1[\chi_{A_n}\cdot H_\infty]\\
			&= \E_1[\chi_{\cup_n A_n}\cdot H_\infty]
		\end{align*}
		so that $\mathcal{Z}$ is a $\lambda$-system. It follows that $\sigma(\mathcal{A}) = \mathcal{F}_\infty$ is contained in $\mathcal{Z}$; this completes the argument.

	\end{proof}
	\begin{remm}
		The condition (\ref{absbound}) is a strict strengthening of (\ref{zerolog}) and relies only on the coarse data coming from the norms $\|H_n\|_\infty$. However, strictly speaking the weaker condition 
		\[ \E_1\left[\sup_{n\in \N}H_n(\omega)\right] < \infty \]
		suffices to give $\P_0 \ll \P_1$, as the bound (\ref{absbound}) is only used to justify application of the dominated convergence theorem. In this more general setting, a bound of the form $\P_0[A] \leq C \cdot \P_1[A]$ for $A \in \mathcal{F}_\infty$ will hold if and only if the limit of these Radon-Nikodym derivatives $H_\infty$ is essentially bounded. Further, we note that (\ref{absbound}) is sharp in the sense that no weaker condition formulated solely in terms of the asymptotics of $\|H_n\|_\infty$ can serve as a sufficient condition for absolute continuity.
	\end{remm}
	We note that Theorem \ref{abscont} is an immediate consequence of this result together with Lemma \ref{duh}. Theorem \ref{dynloc} is then immediate consequence of the work in \cite{SJZhu19} together with Theorem \ref{abscont}. While we omit the details, the work in \cite{Rangamani19} establishing Anderson localization for Jacobi operators can also be extended to non-i.i.d. potentials in the strongly converging regime using Theorem \ref{abscont}.
	
	\section{Schr\"odinger operator preliminaries}\label{schrodd}
	Having shown some general results, we can now introduce notions relevant to the analysis of random Schr\"odinger operators. We also give some remarks on the general strategy of the non-perturbative approach. More or less the entire section follows \cite{SJZhu19} and \cite{Rangamani19} closely. For an introduction to fundamental results in the theory of random Schr\"odinger operators, we recommend section 9 of \cite{cycon1987}.
	Throughout this section, we let $\Omega = \R^\Z$, $\mathcal{B}$ the $\sigma$-algebra generated by cylinder sets (with respect to the Borel $\sigma$-algebra on $\R$), and $V_n$ be the coordinate projections $V_n(\omega):= \omega_n$. Then to any $\omega \in \Omega$, there is a Schr\"odinger operator defined on $\ell^2(\Z)$ by
	\begin{equation} [H_\omega \psi] (n) = \psi(n+1) + \psi(n-1) + V_n(\omega) \psi(n)\tag{\ref{schrod}}
	\end{equation}
	\begin{deffo}
		A probabilistic family of Schr\"odinger operators is a Borel probability measure $\P$ on $(\Omega,\mathcal{B})$.
	\end{deffo}
	\begin{remm}
		In the case where $\P$ has support contained in $[-M,M]^\Z$ for some $M$, we can consider $H_\omega$ as a random variable valued in $B(\ell^2(\Z))$ which is the pushforward of $P$ under $\omega \mapsto H_\omega$; doing this explicitly in the general (unbounded) case is unwieldy, hence our identification with the probability on the space of potentials $\R^\Z$. In either setting, $H_\omega$ is weakly measurable in an appropriate sense.
	\end{remm}
	Throughout the rest of the paper, we freely identify a probabilistic family of  Schr\"odinger operators with the corresponding probability distribution $\P$ on $(\Omega,\mathcal{B})$. Moreover, while much of what we discuss in this section holds in considerable generality, we restrict ourselves to considering two types of $\P$. We will consider distributions $\P_1$ of the form $\P_1 = \mu^\Z$ for some Borel measure $\mu$ on $\R$, and $\P_0$ of the form $\P_0 = \otimes_{n\in\Z} g_n \mu$ for some $\mu$ The $\P_0$ under consideration will always satisfy at least (\ref{logmom}), if not one of the stronger conditions (\ref{logmomunif}) or (\ref{logsum}), so that we can understand $\P_0$ as in fact ``close'' in some sense to the $\P_1$ corresponding to the base distribution $\mu$. In relation to each other, we will call $\P_0$ approximate and $\P_1$ exact.
	
	Given fairly mild assumptions on $\mu$, there are many results regarding localization for the exact system $\P_1$. In \cite{ks81}, it was shown that if $\mu$ was absolutely continuous with bounded density, then Anderson localization and a form of dynamical localization hold. This result was extended to hold for singular measures in \cite{Carmona1987} by Carmona-Klein-Martinelli, who found Anderson localization to hold for arbitrary non-trivial $\mu$ satisfying the moment condition
	\begin{equation}\label{moment}
		\int |x|^\alpha d\mu(x) < \infty
	\end{equation}
	for some $\alpha > 0$.
	This paper used results regarding large deviations for the Lyapunov exponent, together with the multi-scale analysis developed in \cite{FS83}.
	
	At least in the context of operators studied in this work, two properties are of interest, both corresponding to localization in some sense of the mass of $e^{-itH_\omega}\psi$ as $t$ ranges over $\R$.
	\begin{deffo}
		We say an operator $H_\omega$ is Anderson localized if the spectrum is entirely pure point, and its eigenfunctions are exponentially decaying.
	\end{deffo}
	\begin{deffo}
		An operator $H_\omega$ is dynamically localized if there is some $A$ and $\mu> 0$ such that
		\[ \sup_{t\in\R} |\langle \delta_x , e^{-itH_\omega} \delta_y \rangle | \leq Ae^{-\mu|x-y|}\]
	\end{deffo}
	We introduce the latter definition for the sake of completeness; as we have mentioned earlier, dynamical localization can be obtained in the setting of Theorem \ref{dynloc} via our probabilistic method without any further work. The rest of our paper focuses on the proof of Anderson localization in the settings of Theorems \ref{specloc} and \ref{speclocunb}.
	\begin{deffo}
		$E \in \R$ is called a generalized eigenvalue of $H_\omega$ if there exists some non-zero $\psi \in \C^\Z$ with $|\psi(n)|$ polynomially bounded as $|n|\rightarrow \infty$ satisfying 
		\begin{equation}\label{eigeneq} H_\omega\psi = E\psi
		\end{equation}
		Such $\psi$ is then called a generalized eigenfunction.
	\end{deffo}
	The study of these suffices more or less entirely to show Anderson localization. This is a consequence of Sch'nol's theorem, which can be found in e.g. \cite{cycon1987}.
	\begin{thmcite} {(Sch'nol)}
		If all the generalized eigenfunctions of $H_\omega$ are exponentially decaying (i.e. $|\psi(n)|$ decays exponentially as $|n| \rightarrow \infty$), then $H_\omega$ has only pure point spectrum.
	\end{thmcite}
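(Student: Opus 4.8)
The plan is to combine the given hypothesis with the ``existence half'' of Sch'nol's theorem, namely the generalized eigenfunction expansion, which asserts that spectrally almost every energy is a generalized eigenvalue. Concretely, the two ingredients are: (i) there is a measure $\rho$ of maximal spectral type for $H_\omega$ together with, for $\rho$-a.e.\ $E$, a nonzero polynomially bounded $\psi_E \in \C^\Z$ solving the difference equation $H_\omega\psi_E = E\psi_E$, such that the family $\{\psi_E\}$ is \emph{complete}, i.e.\ $\rho$ is carried by the set of $E$ for which such a $\psi_E$ exists; and (ii) the hypothesis that every polynomially bounded solution of $H_\omega\psi = E\psi$ in fact decays exponentially. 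Granting (i), each $\psi_E$ occurring in the expansion is a generalized eigenfunction in the sense of the paper, so by (ii) it lies in $\ell^2(\Z)$; the difference equation then gives $V_n(\omega)\psi_E(n) = E\psi_E(n) - \psi_E(n+1) - \psi_E(n-1) \in \ell^2(\Z)$, so $\psi_E \in \mathrm{dom}(H_\omega)$ and $E$ is a genuine eigenvalue. Hence the set of generalized eigenvalues agrees, up to a $\rho$-null set, with $\sigma_{\mathrm{pp}}(H_\omega)$; since $\ell^2(\Z)$ is separable and eigenvectors for distinct eigenvalues are orthogonal, $\sigma_{\mathrm{pp}}(H_\omega)$ is countable, so $\rho$ — and with it every scalar spectral measure $\langle\varphi, P^{H_\omega}_{(\cdot)}\varphi\rangle$ for $\varphi \in \ell^2(\Z)$, which is absolutely continuous with respect to $\rho$ — is supported on a countable set. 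Therefore $H_\omega$ has only pure point spectrum.

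To obtain (i) I would run the standard Gelfand--Kostyuchenko/Berezanskii argument with polynomial weights. Fix $N$ large enough that multiplication by $w(n) := (1+n^2)^{-N/2}$ is Hilbert--Schmidt on $\ell^2(\Z)$ (any $N \geq 1$ works), and form the Gelfand triple $\mathcal{H}_+ := w\,\ell^2(\Z) \subset \ell^2(\Z) \subset \mathcal{H}_- := w^{-1}\ell^2(\Z)$, where the left inclusion is Hilbert--Schmidt by construction and every element of $\mathcal{H}_-$ is $O\big((1+n^2)^{N/2}\big)$, hence polynomially bounded. Finitely supported sequences lie in $\mathcal{H}_+$ and form a core for the self-adjoint operator $H_\omega = \Delta + V(\omega)$ (a bounded operator plus a multiplication operator), so the abstract expansion theorem produces, for $\rho$-a.e.\ $E$, a nonzero $\psi_E \in \mathcal{H}_-$ with $\langle H_\omega\phi,\psi_E\rangle = E\,\langle\phi,\psi_E\rangle$ for all finitely supported $\phi$; testing against $\phi = \delta_n$ turns this into the pointwise difference equation $\psi_E(n+1)+\psi_E(n-1)+V_n(\omega)\psi_E(n) = E\psi_E(n)$, and membership in $\mathcal{H}_-$ supplies the polynomial bound. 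Completeness of the $\psi_E$ is part of that theorem's conclusion. In the strictly one-dimensional setting one can instead argue by hand: the solution space of $H_\omega\psi = E\psi$ is two-dimensional, so taking $\rho = \sum_n 2^{-|n|}\langle\delta_n,P^{H_\omega}_{(\cdot)}\delta_n\rangle$ one may set $\psi_E(n)$ equal to Radon--Nikodym derivatives $d\langle\delta_n, P^{H_\omega}_{(\cdot)}\varphi\rangle / d\rho$ for a suitable cyclic-type $\varphi$, use the spectral theorem to get $\int\!\big(\sum_n (1+n^2)^{-1}|\psi_E(n)|^2\big)\,d\rho(E) < \infty$, and check that the resulting $\psi_E$ solves the eigenvalue equation and is polynomially bounded.

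The main obstacle to writing this out cleanly is the completeness assertion in (i): exhibiting \emph{some} polynomially bounded solution at a given point of the spectrum is easy, but one needs that $\rho$ assigns full measure to the set of such energies, which requires the measurability of $E \mapsto \psi_E$ together with the Hilbert--Schmidt/$L^2(\rho)$ estimate controlling the $\psi_E$ on average. Once (i) is available, the passage from (i) and (ii) to pure point spectrum is soft: exponential decay upgrades the generalized eigenfunctions to genuine $\ell^2$ eigenfunctions, whose eigenvalues are necessarily countably many and carry the full spectral measure.
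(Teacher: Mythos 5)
The paper does not prove this statement at all—it is quoted as a classical theorem with a citation to Cycon--Froese--Kirsch--Simon—so there is no internal proof to compare against; your argument is the standard proof of Sch'nol's theorem via the Berezanskii/Gelfand--Kostyuchenko generalized eigenfunction expansion with a Hilbert--Schmidt polynomial weight, and it is correct. The one ingredient you defer to the abstract expansion theorem (completeness, i.e.\ that a measure of maximal spectral type is carried by the set of energies admitting a nonzero polynomially bounded solution) is precisely the content of that classical result, and the remaining steps—upgrading exponentially decaying generalized eigenfunctions to genuine $\ell^2$ eigenfunctions through the difference equation, countability of the resulting eigenvalues, and atomicity of the spectral measures—are all sound.
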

	We analyze the asymptotics of $|\psi(n)|$ using by working over truncations to finite boxes. We thus define for $a \leq b$ the operator $H_{[a,b],\omega}$ as the restriction of $H_\omega$ to $[a,b]$. This is $PH_\omega P$ for an appropriate choice of projection $P$, and we identify it with a $(b-a+1) \times (b-a+1)$ matrix.\\
	
	In relation to these truncations, we can define additional quantities:
	\[ P^E_{[a,b],\omega} = \det(H_{[a,b],\omega}-E)\]
	and
	\[ G^E_{[a,b],\omega}(x,y) = \langle \delta_x, (H_\omega-E)^{-1}\delta_y \rangle \]
	calling the latter quantity the Green's function. Note that the Green's function is only defined for $E \notin \sigma(H_{[a,b],\omega})$ and $x,y \in [a,b]$, and that moreover the inverse $(H-E)^{-1}$ is the matrix inverse, not the $B(\ell^2(\Z))$ inverse. The importance of these quantities comes from the well-known formulae
	\begin{equation}\label{eigen2green}
		\psi(x) = - G_{[a,b],\omega}^E(x,a)\psi(a-1) - G_{[a,b],\omega}^E(x,b)\psi(b+1),\qquad x \in [a,b]
	\end{equation}
	and
	\begin{equation}\label{green2det}
		|G_{[a,b],\omega}^E(x,y)|= \frac{|P_{[a,x-1],\omega}^E|\cdot |P_{[y+1,b],\omega}^E|}{|P_{[a,b],\omega}^E|},\qquad x \leq y
	\end{equation}
	The first formula reduces exponential decay of generalized eigenfunctions to exponential decay of the truncated Green's functions; the second allows us to study the asymptotics of these in terms of the asymptotics of determinants. Finally, we will study these using the transfer matrices. Given any solution $\psi \in \C^\Z$ to (\ref{eigeneq}), it satisfies for all $n \in \Z$:
	\begin{equation}\label{transmat}
		\begin{pmatrix} \psi(n+1) \\ \psi(n) \end{pmatrix} = \begin{pmatrix}E - V_n(\omega) & -1 \\ 1 & 0 \end{pmatrix} \begin{pmatrix}\psi(n) \\  \psi(n-1) \end{pmatrix} 
	\end{equation}
	The 2$\times$2 matrix in (\ref{transmat}) is called the one-step transfer matrix, and we denote it by $S_n^E(
	\omega)$. For $a \leq b$, we can define $S_{[a,b]}^E(\omega)$ as the unique matrix such that
	\begin{equation} \label{transmateq} \begin{pmatrix} \psi(a) \\ \psi(a-1) \end{pmatrix} = S_{[a,b]}^E(\omega) \begin{pmatrix}\psi(b+1) \\ \psi(b) \end{pmatrix} 
	\end{equation}
	The asymptotics of these matrices encode the asymptotics of generic solutions $\psi \in \C^\Z$ to $H_\omega \psi = E\psi$; the formula below allows us to control the asymptotics of the truncated determinants, and through those, the truncated Green's functions. Finally, we make use of the formula
	
	\begin{equation}
		S_{[a,b],\omega}^E = \begin{pmatrix} P_{[a,b],\omega}^E & -P_{[a+1,b],\omega}^E \\
			P_{[a,b-1],\omega}^E & -P_{[a+1,b-1],\omega}^E \end{pmatrix} 
	\end{equation} to estimate the determinants $P_{[a,b],\omega}^E$ by writing them as matrix elements:
	\begin{equation}\label{det2lyap}
		P_{[a,b],\omega}^E = \left\langle \begin{pmatrix} 1 \\ 0 \end{pmatrix}, S_{[a,b],\omega}^E \begin{pmatrix} 1 \\ 0 \end{pmatrix} \right \rangle
	\end{equation}
	whence, at least in the stationary case $\P_1 = \mu^\Z$, the Furstenberg theory gives us information regarding the asymptotics. If
	\begin{equation}
		\int |x|^\alpha d\mu(x) < \infty
		\tag{\ref{moment}}
	\end{equation}
	for some $\alpha > 0$, then Furstenberg's theorem and extensions thereof are applicable. In particular, the work of Furstenberg-Kesten shows that under weaker conditions than those proposed, the quantity
	\[ \gamma(E) := \lim_{n\rightarrow \infty} \frac{1}{n} E[\log\|S_{[1,n]}^E\| ]\]
	is defined, and almost surely we have
	\[ \frac{1}{n} \log\|S_{[1,n]}^E(\omega)\| \rightarrow \gamma(E)\]
	and the Furstenberg Theorem implies that as long as $\mu$ is non-trivial, $\gamma(E) >0$ for all $E$.
	Le Page showed in \cite{LePage} that under our conditions, we have exponential decay in the probability of large deviations, both for these quantities and for the magnitude of the corresponding matrix elements. In \cite{Tsay} it was found that this could be made uniform over a parameter varying over a compact set; in particular it is a straightforward application of work in \cite{Tsay}, also proved in \cite{Bucaj2019LocalizationFT} that:
	\begin{thmcite} {(Tsai, Bucaj et al)} Fixing $I \subset \R$ compact and $\ve  >0$, there exist $\eta > 0$ and $N$ such that for any $u,v \in \R^2$ with $\|u\| = \|v\| = 1$ and any $E \in \R$, we have
		\[ \P_1\left[\left|\frac{1}{n}\log|\langle u, S_{[1,n],\omega}^Ev\rangle| - \gamma(E)\right| > \ve\right] < e^{-\eta n}\]
		for $n > N$.
	\end{thmcite}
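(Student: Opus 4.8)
The plan is to deduce this from Le~Page's large deviation theorem for i.i.d.\ $SL_2(\R)$ matrix products (already available under (\ref{moment}) for each fixed $E$, as noted above), the additional content being the uniformity over $E$ ranging in a compact set $I$ and over the unit vectors $u,v$. The engine is the Laplace-transform / transfer-operator method, with all estimates tracked so that their constants depend only on the size of $I$ and not on the individual $E$, $u$, $v$.

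First I would fix $E \in I$ and consider the one-step transfer matrices $S_n^E(\omega)\in SL_2(\R)$, which under $\P_1$ are i.i.d.\ with a law that depends analytically on $E$ and is fixed in $\omega$-randomness (it is the pushforward of $\mu$). For $s$ in a complex neighbourhood of $0$ introduce the transfer operator on $\alpha$-Hölder functions on the projective line $\P^1(\R)$,
\[ (\mathcal{L}_s^E f)(\overline{v}) = \E_1\!\big[\, \|S_1^E \hat v\|^s\, f\big(\overline{S_1^E v}\big)\,\big], \]
where $\hat v$ is a unit representative of $\overline{v}$. Using non-triviality of $\mu$ (hence strong irreducibility and contraction, by Furstenberg) together with the moment bound, a Doeblin--Fortet (Lasota--Yorke) inequality shows $\mathcal{L}_s^E$ is quasi-compact with a simple, isolated leading eigenvalue $\lambda(s,E)$ having a positive Hölder eigenfunction, and $\lambda(0,E)=1$, $\partial_s\lambda(0,E)=\gamma(E)$. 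Analytic perturbation theory gives analyticity of $s\mapsto\lambda(s,E)$ near $0$; since $\mathcal{L}_s^E$ depends continuously (indeed analytically) on $E$, the eigenvalue $\lambda$ and its $s$-derivatives are jointly continuous on a neighbourhood of $\{0\}\times I$.

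Iterating the operator identity yields two-sided bounds $c\,\lambda(s,E)^n \le \E_1[\|S_{[1,n],\omega}^E\hat v\|^{\pm s}]^{\pm 1}\cdot(\cdots) \le C\,\lambda(s,E)^n$ uniformly in $\overline v$ and $E\in I$, whence a Chernoff bound gives, for small $s>0$,
\[ \P_1\!\Big[\tfrac1n\log\|S_{[1,n],\omega}^E\hat v\| > \gamma(E)+\ve\Big] \le C\, e^{-n\,(s(\gamma(E)+\ve)-\log\lambda(s,E))}. \]
The exponent has $s$-derivative $\ve$ at $s=0$, so joint continuity lets one pick a single $s$ small and obtain one $\eta>0$, $N$ valid for all $E\in I$; the opposite deviation $\tfrac1n\log\|S_{[1,n],\omega}^E\hat v\| < \gamma(E)-\ve$ is handled by the standard "two independent halves" argument: split $S_{[1,n],\omega}^E = S_{[m+1,n],\omega}^E\, S_{[1,m],\omega}^E$ with $m=\lfloor n/2\rfloor$, write $\langle u, S_{[1,n],\omega}^E v\rangle = \langle (S_{[m+1,n],\omega}^E)^{T} u,\, S_{[1,m],\omega}^E v\rangle$, and use that each half is a product of i.i.d.\ $SL_2$ matrices whose normalized image of a fixed vector converges exponentially fast to a direction distributed close to the (respective) Furstenberg stationary measure, which is Hölder-regular uniformly in $E\in I$. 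Combined with continuity and strict positivity of $\gamma$ on the compact $I$ (Furstenberg), this gives the two-sided estimate for $\tfrac1n\log\|S_{[1,n],\omega}^E\hat v\|$ uniformly in $E$ and $v$.

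Finally, to pass to the matrix element $|\langle u, S_{[1,n],\omega}^E v\rangle|$ for arbitrary unit $u$: the trivial inequality $|\langle u, S_{[1,n],\omega}^E v\rangle|\le \|S_{[1,n],\omega}^E\hat v\|$ handles the upper deviation uniformly in $u$, and for the lower deviation one writes $|\langle u, S_{[1,n],\omega}^E v\rangle| = \|S_{[1,n],\omega}^E\hat v\|\cdot|\langle u, \overline{S_{[1,n],\omega}^E v}\rangle|$ and controls the last factor by the Hölder regularity of the stationary measure (the direction $\overline{S_{[1,n],\omega}^E v}$ lies within $e^{-\delta n}$ of $u^\perp$ with probability at most $e^{-\alpha\delta n}$, uniformly in $u,v,E$). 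Intersecting the finitely many exceptional events gives the claim. I expect the genuine obstacle to be the first step: establishing quasi-compactness of the family $\mathcal{L}_s^E$ with a \emph{simple} leading eigenvalue depending jointly continuously (analytically) on $(s,E)$, and the attendant uniform Hölder regularity of the Furstenberg stationary measures over the compact energy window — this is where the uniformity in $E$ actually comes from; everything after it is Chernoff bounds and bookkeeping.
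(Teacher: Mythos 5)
This theorem is quoted in the paper as an external result (note the citation form, pointing to Tsay and to Bucaj et al.); the paper itself gives no proof, so there is no internal argument to compare yours against. Your sketch is essentially the proof used in that cited literature: Le Page's Laplace-transform/spectral-gap method for the transfer operator $\mathcal{L}_s^E$ on H\"older functions over the projective line, with the leading eigenvalue $\lambda(s,E)$ shown jointly continuous in $(s,E)$ so that compactness of $I$ yields a single $\eta$ and $N$, followed by the standard direction-regularity estimate to pass from vector norms to matrix elements $|\langle u, S_{[1,n],\omega}^E v\rangle|$. Two points to watch if you wrote it out in full: since only the fractional moment $\int |x|^\alpha d\mu<\infty$ is assumed, the parameter $s$ must stay in a window determined by $\alpha$ for $\mathcal{L}_s^E$ to be defined and analytic (and the lower deviation of $\|S_{[1,n],\omega}^E\hat v\|$ can then be handled directly with negative $s$, no splitting needed); and at finite $n$ the law of the direction $\overline{S_{[1,n],\omega}^E v}$ is not the Furstenberg measure, so the bound on the probability that it falls within $e^{-\delta n}$ of $u^{\perp}$ should come from a uniform-in-$n$ negative-moment estimate supplied by the same spectral gap, rather than from H\"older regularity of the stationary measure alone. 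Finally, the statement's ``any $E\in\R$'' should be read as $E\in I$, which is what your argument (correctly) proves.
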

	Because $\P_1 = \mu^\Z$, in particular the map $T$ defined by $(T\omega)_n = \omega_{n+1}$ is measure preserving, so that the statistics of $S_{[a,b]}^E$ and those of $S_{[1,b-a+1]}^E$ are identical. Hence exponential large deviation estimates for  $P_{[a,b],\omega}^E$, uniform over a compact interval $I$, are a corollary of Tsay's theorem together with (\ref{det2lyap}).
	\begin{cor}\label{detLDE}
		Fix $I \subset \R$ be compact and $\ve  >0$. Then there are $\eta > 0$ and $N$ such that
		\begin{equation}\label{exactLDE} \P_1\left[\left|\frac{1}{b-a+1}\log|P_{[a,b],\omega}^E| - \gamma(E)\right|>\ve\right] < e^{-\eta(b-a+1)}
		\end{equation}
		for $b-a+1 > N$.
	\end{cor}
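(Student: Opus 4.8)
The plan is to derive this directly from the preceding theorem of Tsai--Bucaj et al.\ combined with the determinant-to-matrix-element identity \eqref{det2lyap}, exploiting the translation invariance of $\P_1 = \mu^\Z$. First I would apply the cited theorem with the fixed compact interval $I$, the fixed $\ve > 0$, and the particular choice $u = v = \binom{1}{0}$, which has unit norm. This yields $\eta > 0$ and $N$ such that
\[
\P_1\left[\left|\frac{1}{n}\log\left|\left\langle \begin{pmatrix}1\\0\end{pmatrix}, S_{[1,n],\omega}^E \begin{pmatrix}1\\0\end{pmatrix}\right\rangle\right| - \gamma(E)\right| > \ve\right] < e^{-\eta n}
\]
for all $n > N$ and all $E \in I$. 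By \eqref{det2lyap}, the matrix element here is exactly $P_{[1,n],\omega}^E$, so the estimate reads $\P_1[|\tfrac1n \log|P_{[1,n],\omega}^E| - \gamma(E)| > \ve] < e^{-\eta n}$ for $n > N$.

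Next I would transfer this from the interval $[1,n]$ to an arbitrary interval $[a,b]$ of the same length. Let $n = b - a + 1$ and let $T$ be the shift $(T\omega)_m = \omega_{m+1}$; since $\P_1$ is the product measure $\mu^\Z$, $T$ is measure preserving. One checks from the recursion \eqref{transmat} that $S_{[a,b],\omega}^E = S_{[1,n], T^{a-1}\omega}^E$ (the transfer matrix over $[a,b]$ depends only on $V_a(\omega),\dots,V_b(\omega)$, i.e.\ on $V_1(T^{a-1}\omega),\dots,V_n(T^{a-1}\omega)$, in the same functional form), hence $P_{[a,b],\omega}^E = P_{[1,n], T^{a-1}\omega}^E$ as well. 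Therefore
\[
\P_1\left[\left|\frac{1}{b-a+1}\log|P_{[a,b],\omega}^E| - \gamma(E)\right| > \ve\right] = \P_1\left[\left|\frac{1}{n}\log|P_{[1,n],\omega}^E| - \gamma(E)\right| > \ve\right] < e^{-\eta n}
\]
whenever $n = b - a + 1 > N$, which is precisely \eqref{exactLDE}, with the same $\eta$ and $N$ furnished by the cited theorem (these depend only on $I$ and $\ve$, not on $a,b$).

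I do not anticipate a serious obstacle here, as the statement is essentially a repackaging; the only point requiring any care is the bookkeeping that the transfer matrix over $[a,b]$ and over $[1,n]$ are related by the shift and that this shift preserves $\P_1$, so that the large deviation probability is genuinely independent of the placement of the interval. One should also note that $\gamma(E)$ in the conclusion is the same Lyapunov exponent appearing in the cited theorem, well-defined under the moment hypothesis \eqref{moment} by Furstenberg--Kesten; the uniformity over the compact set $I$ is inherited verbatim from the Tsai--Bucaj statement and requires nothing further.
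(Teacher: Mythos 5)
Your proposal is correct and matches the paper's own argument: the paper likewise obtains the estimate by taking $u = v = \binom{1}{0}$ in the Tsai--Bucaj large deviation theorem, identifying the matrix element with $P_{[a,b],\omega}^E$ via (\ref{det2lyap}), and using the measure-preserving shift $(T\omega)_n = \omega_{n+1}$ on $\P_1 = \mu^\Z$ to reduce an arbitrary interval $[a,b]$ to $[1,b-a+1]$. No gaps.
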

	
	In particular, this result gives in a crude sense
	\[ |G_{[a,b],\omega}^E(x,y)| \sim e^{-\gamma(E)|x-y|}\]
	A stronger, quantitative version of this result would imply Anderson localization as a consequence of (\ref{eigen2green}). This motivates the following definition:
	\begin{deffo}
		We say $x \in \Z$ is $(C,n,E,\omega)$ regular if
		\[|G_{[x-n,x+n],\omega}^E(x,x\pm n)| \leq e^{-Cn}\]
		and $x$ is $(C,n,E,\omega)$-singular if it is not regular for the same set of parameters.
	\end{deffo}
	In particular, there is a reformulation of Anderson localization in terms of this notion. Using formula (\ref{eigen2green}), Theorems \ref{specloc} and \ref{speclocunb} are straightforward consequences of the below:
	\begin{thm}\label{reg2loc}
		Under the assumptions of either Theorem \ref{specloc} or Theorem \ref{speclocunb}, there is $\Omega_0 \subset \Omega$ with $\P_0[\Omega_0] =1$ such that for every $\omega \in \Omega_0$ and $E \in \R$, there exist $N= N(E,\omega)$ and $C=C(E)$ such that for every $n > N$, both $2n$ and $2n+1$ are $(C,n,E,\omega)$-regular.
	\end{thm}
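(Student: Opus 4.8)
The plan is to establish the regularity statement of Theorem~\ref{reg2loc} by following the non-perturbative strategy of \cite{SJZhu19} (as adapted to the non-stationary setting), pushing all the a priori large-deviation input from $\P_1$ to $\P_0$ via Theorem~\ref{transmethunif} (or Corollary~\ref{arith} together with Theorem~\ref{slowgrowlemunifcor}, according to whether one is in the bounded or unbounded case). First I would reduce the statement, as usual, to a dichotomy for generalized eigenfunctions: if $\psi$ is a polynomially bounded solution of $H_\omega\psi = E\psi$ then $\psi$ cannot vanish on two consecutive sites, so one of the two vectors $(\psi(x),\psi(x-1))^T$ at the endpoints of a box is ``not too small''; combined with \eqref{eigen2green} and \eqref{green2det}, regularity of a point $x$ at scale $n$ amounts to a two-sided bound $|P^E_{[x-n,x+n],\omega}| \gtrsim e^{(\gamma(E)-\ve)\cdot 2n}$ on the determinant of the centered box together with the corresponding upper bounds $|P^E_{[x-n,x-1],\omega}|, |P^E_{[x+1,x+n],\omega}| \lesssim e^{(\gamma(E)+\ve)n}$. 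The key point is that all of these are events lying in the $\sigma$-algebra $\mathcal{F}^x_n$ generated by $V_{x-n},\dots,V_{x+n}$, and by Corollary~\ref{detLDE} they fail with $\P_1$-probability at most $e^{-\eta n}$ uniformly over $E$ in a compact interval and uniformly over the center $x$ (by translation invariance of $\P_1$).

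Next I would invoke the lifting theorems of Section~\ref{prob}: the collection of ``bad box'' events is adapted to the family of filtrations $(\mathcal{F}_n)^k$ and decays exponentially uniformly in $E$ and $k$ with respect to $\P_1$, so hypothesis \eqref{logmomunif} and Theorem~\ref{transmethunif} give the same exponential decay with respect to $\P_0$ — with a possibly smaller rate $\eta' > 0$, which is all that is needed. (In the bounded case of Theorem~\ref{specloc} one only has \eqref{logmom}, so one instead restricts to an arithmetic progression of scales via Corollary~\ref{arith}; this suffices because the Borel--Cantelli argument below only needs the bad events summable along a subsequence of scales dense enough to cover all large $n$, after enlarging boxes slightly.) Having transferred the estimate, I would fix a countable dense set of energies, or rather work with a compact interval $I$ and run the argument for each $I$ in an exhaustion of $\R$; for each fixed scale $n$ the probability that the centered box of half-width $n$ around the origin (equivalently, by the uniformity in $k$, around any fixed center) is bad is summable in $n$, so by Borel--Cantelli there is a full-$\P_0$-measure set on which, for $n$ large, the boxes behave well.

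The remaining step is the standard deterministic extraction of pointwise regularity of $2n$ and $2n+1$ from the good-box event, using that a generalized eigenfunction is polynomially bounded: one uses the Schnol-type bound to locate, for each generalized eigenvalue $E$ and each large scale, a site near $0$ at which $\psi$ is not exponentially small, then propagates via \eqref{eigen2green} and the determinant bounds to conclude $|G^E_{[x-n,x+n],\omega}(x,x\pm n)| \le e^{-Cn}$ with $C = \gamma(E)/2 > 0$ (positivity of $\gamma$ being the Furstenberg input, which holds identically for $\P_0$ and $\P_1$). I expect the main obstacle to be bookkeeping the uniformity: ensuring that the a priori $\P_1$-estimates from \cite{Tsay}/Corollary~\ref{detLDE} are genuinely uniform in the center $k$ (which is immediate from stationarity of $\P_1$ but must be phrased so that it matches the ``uniform in $k$'' hypothesis of Theorem~\ref{transmethunif}), and — in the unbounded case — controlling the contribution of $V_n(\omega)$ being large, which is where the moment hypotheses $\int |x|^\alpha g_n\,d\mu < M$ enter to guarantee that $\sup_{|j|\le n}|V_j(\omega)|$ grows sub-exponentially $\P_0$-almost surely, exactly as in \cite{Carmona1987}; this last point is handled by the technical lemmas promised for Section~\ref{lemma4evr}.
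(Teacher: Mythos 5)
There is a genuine gap, and it is at the heart of the argument rather than in the bookkeeping. Your plan is: lift the fixed-energy large deviation estimates from $\P_1$ to $\P_0$ (correct, and this is indeed what Theorems \ref{transmethunif}/\ref{finiteLDE} do in Sections \ref{ldeland}--\ref{lemma4evr}), then apply Borel--Cantelli scale by scale to get a full-measure set on which ``the boxes behave well,'' and finally extract regularity of $2n$ and $2n+1$ deterministically for every generalized eigenvalue $E$. But ``exponentially decaying uniformly in $E$'' means $\sup_E \P_0[A_n^E]\le e^{-\eta n}$, not that the single event ``some $E\in I$ is bad for the box'' is exponentially small --- the latter is false, since for each $\omega$ the lower bound on $|P^E_{[x-n,x+n],\omega}|$ fails near every eigenvalue of the truncation. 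So Borel--Cantelli gives a full-measure set $\Omega_E$ \emph{for each fixed} $E$, while Theorem \ref{reg2loc} requires the conclusion at every generalized eigenvalue of $H_\omega$, which is an $\omega$-dependent (and a priori uncountable in possible location) set of energies; you cannot intersect over all of them, and a countable dense set of energies does not help because singularity is not controlled by continuity in $E$ at the needed exponential precision. Your proposal never addresses this, and it is exactly the difficulty the non-perturbative scheme is built to overcome.

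The paper's proof closes this gap with a double-resonance argument that your outline is missing. One fixes a generalized eigenfunction with $\psi(0)\neq 0$, so $0$ is singular at all large scales; assuming $2n+1$ is also singular for infinitely many $n$, Lemma \ref{overunder} together with the lifted Craig--Simon bound (Theorem \ref{lyap4evr} and Corollary \ref{CSfini}, which give the upper bounds $B^+$ \emph{simultaneously for all} $E$ --- another point where a fixed-energy LDE plus Borel--Cantelli would not suffice) forces $E\in B^-_{[-n,n],\ve_0,\omega}\cap B^-_{[n+1,3n+1],\ve_0,\omega}$. The Lebesgue-measure bound of Lemma \ref{indlem} then pins $E$ exponentially close to an eigenvalue $E_i$ of $H_{[-n,n],\omega}$ and an eigenvalue $E_j$ of $H_{[n+1,3n+1],\omega}$, and Lemma \ref{edgelem} (resp.\ \ref{edgelemvar}), which exploits independence of the two disjoint blocks, says that $E_j$ avoids the bad energy set of the box $[-n,n]$; combining the resulting huge Green's function at $E_j$ with the determinant identity \eqref{green2det} and the edge bounds of Lemma \ref{edgelemsmall}/Corollary \ref{edgelemsmallcor} (moment condition, unbounded case) or the trivial bound in the bounded case yields inequalities that fail for large $n$ --- a contradiction. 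Without this eigenvalue-elimination mechanism (or some substitute handling the $\omega$-dependence of the relevant energies), your proposed route does not prove the theorem.
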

	All the relevant estimates in this section are established for a stationary $\P_1$, and so the next section derives the analogous large deviation estimate results for an appropriate $\P_0$, along with important consequences.

	\section{Adaptations for non-stationarity}\label{ldeland}
	Throughout this section, $\P_0$ is of the form specified in either Theorem \ref{specloc} or Theorem \ref{speclocunb}; we explicitly indicate when a result only applies in one context. By applying Theorem \ref{transmeth} to Corollary \ref{detLDE}, we can produce a $\P_0$ analogue of (\ref{exactLDE}):
	
	\begin{thm}\label{finiteLDE}
		Fixing $I \subset \R$ compact, for any $\ve > 0$ and $K \in \N$ there are $\eta = \eta(\ve,K) > 0$ and $N=N(\ve,K)$ such that
		\[\P_0\left[\left|\frac{1}{b-a +n(j_1+j_2)+1}\log| P_{[a+j_1n,b+j_2n],\omega}^E| - \gamma(E)\right| > \ve \right] <e^{-\eta (b+j_2n-a-j_1n+1)} \]
		for $ -K \leq a \leq b \leq K$, $-K\leq j_1 < j_2 \leq K$, $n>N$  and $E \in I$.
	\end{thm}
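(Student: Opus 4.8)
The plan is to strip off the finitely many ``shape'' parameters $a,b,j_1,j_2$, bundle the remaining dependence on $n$ and $E$ into a single filtered family supported near the origin, and feed that family to Theorem~\ref{transmeth}. Fix $I\subset\R$ compact, $\ve>0$ and $K\in\N$. For each admissible tuple $\tau=(a,b,j_1,j_2)$ with $-K\le a\le b\le K$ and $-K\le j_1<j_2\le K$ --- there are at most $(2K+1)^4$ of them --- write $L_n(\tau):=b+j_2n-a-j_1n+1$ for the length of $[a+j_1n,\,b+j_2n]$, and record two elementary facts, valid for every $n\ge 1$: (i) $n<L_n(\tau)\le C_K\,n$, where $C_K$ depends only on $K$; and (ii) $[a+j_1n,\,b+j_2n]\subseteq[-2Kn,\,2Kn]$, since $|a+j_1n|,\,|b+j_2n|\le K(n+1)\le 2Kn$. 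For $n\in\N$ and $E\in\R$ set
\[
A_n^E(\tau):=\Bigl\{\,\bigl|\tfrac{1}{L_n(\tau)}\log|P_{[a+j_1n,\,b+j_2n],\omega}^E|-\gamma(E)\bigr|>\ve\,\Bigr\}\ \text{ for }E\in I,\qquad A_n^E(\tau):=\varnothing\ \text{ for }E\notin I.
\]
Since $P_{[a+j_1n,\,b+j_2n],\omega}^E$ depends only on the coordinates $V_k$ with $k\in[a+j_1n,\,b+j_2n]$, fact (ii) gives $A_n^E(\tau)\in\mathcal F_{2Kn}$ for all $n$ and $E$.

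Next I would reindex. Put $B_{2Kn}^E:=\bigcup_\tau A_n^E(\tau)$ and $B_m^E:=\varnothing$ whenever $m$ is not a positive multiple of $2K$; by the previous paragraph $(B_m^E)_{m\in\N,\,E\in\R}$ is adapted to the filtration $(\mathcal F_m)_{m\in\N}$ of Theorem~\ref{transmeth}. Corollary~\ref{detLDE} --- whose constants $\eta=\eta(I,\ve)>0$ and $N=N(I,\ve)$ depend only on $I$ and $\ve$, not on the interval --- gives $\P_1[A_n^E(\tau)]<e^{-\eta L_n(\tau)}<e^{-\eta n}$ for $n>N$ and all $E$, so a union bound over the $(2K+1)^4$ tuples yields $\P_1[B_{2Kn}^E]\le(2K+1)^4 e^{-\eta n}\le e^{-(\eta/2)n}$ for all large $n$, i.e. $\P_1[B_m^E]\le e^{-(\eta/4K)m}$ for all large $m$ and all $E$; thus $(B_m^E)$ decays exponentially uniformly in $E$ with respect to $\P_1$. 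Condition~(\ref{logmom}) holds throughout this section (it is assumed in Theorem~\ref{specloc} and is implied by~(\ref{logmomunif}) in the setting of Theorem~\ref{speclocunb}), so Theorem~\ref{transmeth} applies and produces $\tilde\eta>0$ and $\tilde N$ with $\P_0[B_m^E]\le e^{-\tilde\eta m}$ for $m>\tilde N$ and all $E$.

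Unwinding the reindexing: for $n>\tilde N/2K$, $E\in I$ and every admissible tuple $\tau$, the inclusion $A_n^E(\tau)\subseteq B_{2Kn}^E$ gives $\P_0[A_n^E(\tau)]\le\P_0[B_{2Kn}^E]\le e^{-2K\tilde\eta n}\le e^{-(2K/C_K)\tilde\eta\,L_n(\tau)}$, using fact (i). This is exactly the assertion of the theorem with $\eta:=(2K/C_K)\tilde\eta>0$ and $N:=\tilde N/2K$. The argument is essentially bookkeeping; the one point that needs care is the choice of reindexing $n\mapsto 2Kn$, which must be large enough that the repackaged events genuinely lie in $\mathcal F_m$ while staying comparable to $n$ --- hence to $L_n(\tau)$ --- so that exponential decay in the filtration index is not silently downgraded in the passage back to the interval length. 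The structural reason this goes through with the non-uniform Theorem~\ref{transmeth} rather than Theorem~\ref{transmethunif} is that only finitely many intervals occur at each scale $n$, all contained in $[-2Kn,2Kn]$; a version uniform over arbitrary centers would instead force the filtration-uniform statement and the stronger hypothesis~(\ref{logmomunif}).
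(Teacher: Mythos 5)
Your proposal is correct and follows essentially the same route as the paper: both rest on Corollary \ref{detLDE} (whose constants depend only on $I$ and $\ve$ by stationarity of $\P_1$) together with the observation that at scale $n$ the determinant events are measurable with respect to $\mathcal{F}_m$ for an index $m$ comparable to $n$, so the non-uniform lifting result under (\ref{logmom}) applies. The only difference is bookkeeping --- you take a union over the finitely many tuples and apply Theorem \ref{transmeth} once to a reindexed family in $\mathcal{F}_{2Kn}$, whereas the paper lifts each tuple separately via the arithmetic-subfiltration Corollary \ref{arith} and then takes a min/max over the finitely many resulting constants.
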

	\begin{proof}
		Fix $a,b \in \Z$ and $j_1,j_2 \in \N_0$. Let $l = \max\{|a|,|b|\}$ and $j = \max\{|j_1|,|j_2|\}$. Clearly the events
		\[ \{\omega\,:\, \left|\frac{1}{b-a+j_1n+j_2n+1}\log|P_{[a-j_1n,b+j_2n],\omega}^E| - \gamma(E)\right| > \ve\}\]
		are $\mathcal{F}_{jn+l}$ measurable. Moreover, they have $\P_1$ exponential large deviation estimates by Corollary \ref{detLDE}. By Corollary \ref{transmeth}, there are $\P_0$ exponential large deviation estimates, i.e. there are $\tilde{N}(\ve,a,b,j_1,j_2)$ and $\tilde{\eta}(\ve,a,b,j_1,j_2)$ such that
		\[\P_0\left[\left|\frac{1}{b-a+(j_1+j_2)n+1}\log| P_{[a-j_1n,b+j_2n]}^E| - \gamma(E)\right| > \ve\right] < e^{\tilde{\eta}(b-a+j_1n+j_2n+1)}\]
		for $n >\tilde{N}$.
		Taking $\eta$ to be the minimum of $\tilde{\eta}$ for $|a|,|b|,|j_1|,|j_2|$ all smaller than $K$, and $N$ to be the maximum of $\tilde{N}$ ranging over the same parameters, we obtain the desired $\eta$ and $N$.
	\end{proof}
	We proved this theorem which works even without the assumption of condition (\ref{zerologunif}) to illustrate what is used in the bounded case if said condition fails; however we have something stronger when (\ref{zerologunif}) holds, allowing the use of Theorem \ref{transmethunif}.
	
	\begin{thm}\label{unifLDE}
		In the context of Theorem \ref{speclocunb}, but not necessarily in the context of Theorem \ref{specloc}, fixing $I \subset \R$ compact, for any $\ve > 0$, there are $\eta = \eta(\ve) > 0$ and $N = N(\ve)$ such that for $b-a > N$,
		\[\P_0\left[\left|\frac{1}{b-a+1}\log |P^E_{[a,b],\omega}| - \gamma(E)\right|\right] < e^{-\eta(b-a+1)}\]
	\end{thm}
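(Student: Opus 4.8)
The plan is to feed Corollary~\ref{detLDE} into the uniform-in-filtration lifting result Theorem~\ref{transmethunif}, exactly as Theorem~\ref{finiteLDE} fed Corollary~\ref{detLDE} into Theorem~\ref{transmeth}, but now exploiting that we are in the setting of Theorem~\ref{speclocunb}, so that (\ref{logmomunif}) holds and the output rate is uniform in the center of the interval. The only real bookkeeping is that the filtrations $(\mathcal{F}_n)^k$ of Theorem~\ref{transmethunif} are symmetric about $k$ and correspond to intervals of odd cardinality $2n+1$, whereas the intervals $[a,b]$ in the statement have arbitrary length; so I would run the lifting machine on two families, one for odd lengths and one for even lengths, and combine.

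First I would record the elementary observation that for any interval $[a,b]$ there are $k\in\Z$ and $n=\lfloor (b-a+1)/2\rfloor$ with $[a,b]\subseteq[k-n,k+n]$ and $2n+1\in\{b-a+1,\,b-a+2\}$: when $b-a+1=2m+1$ take $k=a+m$, $n=m$, so $[k-n,k+n]=[a,b]$; when $b-a+1=2m$ take $k=a+m-1$, $n=m$, so $[k-n,k+n]=[a-1,b]\supseteq[a,b]$. In either case the event $\{\,|\tfrac{1}{b-a+1}\log|P^E_{[a,b],\omega}|-\gamma(E)|>\ve\,\}$ is $\mathcal{F}_n^k$-measurable, since $P^E_{[a,b],\omega}=\det(H_{[a,b],\omega}-E)$ depends only on $V_a,\dots,V_b$ and $\sigma(V_a,\dots,V_b)\subseteq\sigma(V_{k-n},\dots,V_{k+n})=\mathcal{F}_n^k$.

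Next, fix $\ve>0$ and let $(\eta_1,N_1)$ be as in Corollary~\ref{detLDE} for the compact set $I$ and this $\ve$. Define the two $(\mathcal{F}_n)^k$-adapted families
\[ A_n^{E,k}=\Bigl\{\,\Bigl|\tfrac{1}{2n+1}\log|P^E_{[k-n,k+n],\omega}|-\gamma(E)\Bigr|>\ve\,\Bigr\},\qquad B_n^{E,k}=\Bigl\{\,\Bigl|\tfrac{1}{2n}\log|P^E_{[k-n+1,k+n],\omega}|-\gamma(E)\Bigr|>\ve\,\Bigr\}. \]
Because $\P_1=\mu^\Z$ is translation invariant, Corollary~\ref{detLDE} applied to the intervals $[k-n,k+n]$ and $[k-n+1,k+n]$ gives $\P_1[A_n^{E,k}]<e^{-\eta_1(2n+1)}\le e^{-\eta_1 n}$ and $\P_1[B_n^{E,k}]<e^{-\eta_1(2n)}\le e^{-\eta_1 n}$ for all $k$, all $E\in I$, and all $n$ with $2n>N_1$; thus both families decay exponentially uniformly in $E$ and $k$ with respect to $\P_1$. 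Since (\ref{logmomunif}) holds in the setting of Theorem~\ref{speclocunb}, Theorem~\ref{transmethunif} applies to each family, and, taking the minimum of the two resulting rates and the maximum of the two resulting thresholds, we get $\tilde\eta>0$ and $\tilde N$ such that $\P_0[A_n^{E,k}]<e^{-\tilde\eta n}$ and $\P_0[B_n^{E,k}]<e^{-\tilde\eta n}$ for all $k$, all $E\in I$, and all $n>\tilde N$.

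Finally I would read this back: given $[a,b]$ with $L:=b-a+1$, write $n=\lfloor L/2\rfloor$ and pick $k$ as in the first step; then the bad event for $[a,b]$ is exactly $A_n^{E,a+n}$ if $L=2n+1$ and exactly $B_n^{E,a+n-1}$ if $L=2n$, so $\P_0[\,|\tfrac1L\log|P^E_{[a,b],\omega}|-\gamma(E)|>\ve\,]<e^{-\tilde\eta n}$ whenever $n>\tilde N$. Since $n\ge (L-1)/2\ge L/4$ for $L\ge 2$, taking $\eta:=\tilde\eta/4$ and $N:=2\tilde N+2$ yields the claimed bound $e^{-\eta(b-a+1)}$ for $b-a>N$, uniformly over $E\in I$. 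The proof is essentially all bookkeeping; the one substantive input is the pair of invocations of Theorem~\ref{transmethunif}, and this is exactly where hypothesis (\ref{logmomunif}) — rather than the weaker (\ref{logmom}) used for Theorem~\ref{finiteLDE} — is needed: it is what makes the lifted rate $\tilde\eta$ and threshold $\tilde N$ independent of the center $k$, hence of $a$ and $b$ separately rather than only of $a$ and $b$ jointly through a fixed window.
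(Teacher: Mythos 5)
Your proposal is correct and follows essentially the same route as the paper: stationarity (translation invariance) of $\P_1$ makes the Tsay/Corollary~\ref{detLDE} estimates uniform in the interval's location, and then Theorem~\ref{transmethunif} under hypothesis (\ref{logmomunif}) lifts them to $\P_0$ uniformly in the center; your explicit even/odd parity bookkeeping just spells out what the paper leaves implicit in calling the result ``immediate.''
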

	\begin{proof}
		By stationarity of the shift map $(T\omega)_n = \omega_{n+1}$ in the $\P_1$ context, the large deviation estimates furnished by \cite{Tsay} are in fact uniform in the endpoints of the interval, i.e. uniform in filtration. The result is then immediate as a consequence of Theorem \ref{transmethunif}.
	\end{proof}
	For a certain application of the work of \cite{CS83}, we need to consider complexified energy, i.e. transfer matrices $S_{[a,b]}^z$ which characterize the solutions to $H_\omega\psi = z\psi$ in the sense of (\ref{transmateq}). For $z \in \C$ generally, the non-uniform version of Theorem \ref{finiteLDE} (i.e. for a single fixed energy $z \in \C$) also holds by applying Theorem \ref{transmeth} to the work of LePage in \cite{LePage}. (We believe the uniform result in \cite{Tsay} still applies in the context of complexified energy, but we are not sure of the details, and do not need it.) There is analogously $\gamma(z) = \lim_{n\rightarrow \infty} \frac{1}{n}\E_1[\log\|S_{[1,n]}^z\|]$ and \[\frac{1}{n}\log\|S_{[1,n],\omega}^z\| \rightarrow \gamma(z)\] $\P_1$ almost surely. The large deviation results imply that these $\P_1$ almost sure limits are also $\P_0$ almost sure limits.
	\begin{thm}\label{lyap4evr}
		For $\P_0$ and $\P_1$ as above, and any $z \in \C$, there is $\Omega_z$ such that $\P_0[\Omega_z] = 1$ and
		\[ \frac{1}{n}\log\|S_{[1,n],\omega}^z\| \rightarrow \gamma(z)\]
	\end{thm}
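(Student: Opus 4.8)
The plan is to upgrade the fixed-energy exponential large deviation estimate for $\P_0$ to $\P_0$-almost sure convergence by a routine Borel--Cantelli argument, exactly as one does in the stationary case.

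First I would record the large deviation input. For fixed $z \in \C$, the events $\left\{\left|\tfrac1n \log\|S_{[1,n],\omega}^z\| - \gamma(z)\right| > \ve\right\}$ depend only on $V_1(\omega), \dots, V_n(\omega)$, hence are $\mathcal{F}_n$-measurable for the filtration $\mathcal{F}_n$ of Theorem \ref{transmeth}. By the work of LePage \cite{LePage} they are exponentially decaying with respect to $\P_1$ (for this single fixed $z$, with no uniformity in the energy needed), so applying Theorem \ref{transmeth} --- equivalently, Theorem \ref{slowgrowlemunifcor} together with Lemma \ref{duh}, using hypothesis (\ref{logmom}) --- furnishes, for each $\ve > 0$, constants $\eta(z,\ve) > 0$ and $N(z,\ve) \in \N$ with
\[ \P_0\!\left[\,\left|\tfrac1n \log\|S_{[1,n],\omega}^z\| - \gamma(z)\right| > \ve\,\right] \leq e^{-\eta(z,\ve)\, n}, \qquad n > N(z,\ve). \]

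Next, fix $z$ and, for each $k \in \N$, apply this estimate with $\ve = 1/k$, writing $A_n^{(k)}$ for the corresponding event. Since $\sum_{n} e^{-\eta(z,1/k)\, n} < \infty$, the Borel--Cantelli lemma gives $\P_0[\limsup_n A_n^{(k)}] = 0$; let $\Omega_z^{(k)}$ be its complement, a set of full $\P_0$-measure on which $\left|\tfrac1n \log\|S_{[1,n],\omega}^z\| - \gamma(z)\right| \leq 1/k$ for all sufficiently large $n$. Finally, set $\Omega_z := \bigcap_{k \in \N} \Omega_z^{(k)}$, a countable intersection of full-measure sets, so that $\P_0[\Omega_z] = 1$; for $\omega \in \Omega_z$ and any $\ve > 0$, choosing $k$ with $1/k < \ve$ shows that $\left|\tfrac1n \log\|S_{[1,n],\omega}^z\| - \gamma(z)\right| < \ve$ eventually, i.e.\ $\tfrac1n \log\|S_{[1,n],\omega}^z\| \to \gamma(z)$.

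There is no serious obstacle here; the only points requiring (minor) care are verifying the adaptedness of the events to $\mathcal{F}_n$ so that Theorem \ref{transmeth} applies, and noting that only a fixed energy $z$ is involved, so that it suffices to invoke the non-uniform estimates of LePage rather than the uniform version of Tsai. The single genuine requirement --- summability of the $\P_0$-probabilities along $n$ --- is exactly what the exponential decay produced by Theorem \ref{transmeth} delivers; it would equally follow from ``almost exponential'' decay $e^{-\eta n^q}$ with $0 < q < 1$. The same argument applies verbatim with the matrix entries $P_{[1,n],\omega}^z$ in place of $\|S_{[1,n],\omega}^z\|$.
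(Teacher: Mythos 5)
Your proposal is correct and follows essentially the same route as the paper: the paper likewise obtains the fixed-$z$ $\P_0$ large deviation estimate by lifting LePage's estimate via Theorem \ref{transmeth} (stated there as the non-uniform complex analogue of Theorem \ref{finiteLDE}), then applies Borel--Cantelli for each $\ve = 1/m$ and intersects the resulting full-measure sets. No gaps; your remarks on adaptedness and on needing only the non-uniform (fixed-energy) estimate match the paper's treatment.
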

	\begin{proof}
		We note first that as a consequence of Theorem \ref{finiteLDE} or its non-uniform complex analogue, there are $\eta> 0$ and $N \in \N$ such that for $n > N$; we have
		\[\P_0\left[\left|\frac{1}{n}\log\|S_{[1,n],\omega}^z - \gamma(z)\right| > \ve\right] < e^{-\eta n}\]
		The eventual exponential decay implies summability, so that by Borel-Cantelli there exist for all $\ve > 0$ subsets $\Omega_{z,\ve} \subset \Omega$ with $\P_0[\Omega_{z,\ve}] = 1$ and for all $\omega \in \Omega_{z,\ve}$ we some $N=N(\omega)$ such that $n>N$ implies:
		\[ \left|\frac{1}{n}\log\|S_{[1,n],\omega}^z\| - \gamma(z)\right| < \ve \]
		We note that $\frac{1}{n}\log\|S_{[1,n],\omega}^z\| \rightarrow \gamma(z)$ precisely if $\omega \in \cap_{m\in\N} \Omega_{z,1/m}$. This intersection has probability 1 and so is the requisite $\Omega_z$.
	\end{proof}
	This last result is of some interest in its own right, demonstrating an ability to ``lift'' Lyapunov behavior across contexts; it is also necessary for the proof of localization in making it possible to apply certain results exploiting subharmonicity of $\gamma(z)$ in \cite{CS83}.
	
	We go through the details here, though the argument is fundamentally the same as in the original paper. For any fixed $z \in \C$ there is $\Omega_z \subset \Omega$ with $\P_0[\Omega_z] = 1$ such that for $\omega \in \Omega_z$ the quantity defined below
	\[ \gamma^+(\omega,z):= \limsup_{n\rightarrow +\infty} \frac{\log\|S_{[1,n],\omega}^z\|}{n}
	\]
	coincides with $\gamma(z)$. A result of Craig and Simon makes this in some sense uniform in the context of the exact Anderson model, and moreover says the same for the quantity
	\[ \gamma^-(\omega,z):= \limsup_{n\rightarrow +\infty} \frac{\log\|S_{[-n,-1],E,\omega}^z\|}{n}\]
	(In fact, the quantities $\gamma^+(\omega,z)$ and $\gamma^-(\omega,z)$ always coincide, and so we denote this quantity going forward by $\overline{\gamma}(\omega,z)$.)
	
	\begin{thmcite}{(Craig-Simon)}\label{cs1}
		Let $\mu$ be a distribution on $\R$ satisfying the condition \[\int \max\{0,\log |x|\} d\mu(x) < \infty\] and $\P_1 = \mu^\Z$. Then there exists $\Omega_1 \subset \Omega = \R^\Z$ with $\P_1[\Omega_1] = 1$ such that $\overline{\gamma}(\omega,E) \leq  \gamma(E)$ for all $\omega \in \Omega_1$ and $E \in \R$.
	\end{thmcite}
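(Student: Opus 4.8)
The plan is to run the classical logarithmic-potential argument, which upgrades the fixed-energy ergodic convergence $\tfrac1n\log\|S_{[1,n],\omega}^E\|\to\gamma(E)$ (a priori valid only off an $E$-dependent null set) to the one-sided bound $\overline{\gamma}(\omega,E)\le\gamma(E)$ holding simultaneously for all real $E$, by exploiting subharmonicity of the finite-scale quantities in the \emph{complexified} energy $z$. I would first record that for fixed $\omega,n$ the entries of $S_{[1,n],\omega}^z$ are polynomials in $z$, so $u_n(\omega,z):=\tfrac1n\log\|S_{[1,n],\omega}^z\|$ is subharmonic on $\C$ (being the upper-semicontinuous regularization of $\tfrac1n\log|\langle u,S_{[1,n],\omega}^zv\rangle|$ over unit vectors $u,v$). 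Submultiplicativity and the crude bound $\|S_j^z(\omega)\|\le 2+|z|+|V_j(\omega)|$ give, on $\{|z|\le R\}$, that $u_n(\omega,z)\le\tfrac1n\sum_{j=1}^n\log(2+R+|V_j(\omega)|)$; since $\int\log^+|x|\,d\mu<\infty$, the strong law of large numbers yields, for $\P_1$-a.e.\ $\omega$ and every $R\in\N$, a constant $M_R(\omega)<\infty$ with $u_n(\omega,\cdot)\le M_R(\omega)$ on $\{|z|\le R\}$ for all $n$.

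Next I would establish $L^1_{\mathrm{loc}}(\C)$ convergence $u_n(\omega,\cdot)\to\gamma$ for $\P_1$-a.e.\ $\omega$. For each fixed $z$, subadditivity of $a_n:=\E_1[\log\|S_{[1,n],\omega}^z\|]$ gives $a_n/n\to\gamma(z)$, and the Furstenberg--Kesten (subadditive ergodic) theorem gives $u_n(\omega,z)\to\gamma(z)$ for $\P_1$-a.e.\ $\omega$; by Fubini, for $\P_1$-a.e.\ $\omega$ this convergence holds at Lebesgue-a.e.\ $z$. Combined with the uniform upper bound above, the compactness theory of subharmonic functions (a locally-uniformly-bounded-above sequence of subharmonic functions has $L^1_{\mathrm{loc}}$-convergent subsequences with subharmonic limits, and any two limits agreeing a.e.\ coincide) forces $u_n(\omega,\cdot)\to\gamma$ in $L^1_{\mathrm{loc}}(\C)$, with $\gamma$ subharmonic on $\C$; on $\R$, $\gamma$ is continuous by standard Furstenberg theory, so its subharmonic representative takes the value $\gamma(E)$ at every real $E$.

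The conclusion then comes from the sub-mean-value inequality: for such an $\omega$, any $E\in\R$, and small $r>0$,
\[
u_n(\omega,E)\le\frac{1}{\pi r^2}\int_{|z-E|<r}u_n(\omega,z)\,dA(z)\xrightarrow[n\to\infty]{}\frac{1}{\pi r^2}\int_{|z-E|<r}\gamma\,dA(z),
\]
so $\limsup_n u_n(\omega,E)\le\frac1{\pi r^2}\int_{|z-E|<r}\gamma\,dA$, and letting $r\to0$ the right side decreases to $\gamma(E)$ because $\gamma$ is subharmonic and continuous at $E$; hence $\overline{\gamma}(\omega,E)\le\gamma(E)$ for all $E$ at once, on the single full-measure set produced above. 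The statement for $\gamma^-$ (and therefore for $\overline{\gamma}$) follows verbatim: under $\P_1=\mu^\Z$ the reversed potential $(V_{-1},V_{-2},\dots)$ is again i.i.d.\ with law $\mu$, so the identical argument applied to $S_{[-n,-1],\omega}^z$ gives $\gamma^-(\omega,E)\le\gamma(E)$ almost surely for all $E$.

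I expect the main obstacle to be the $L^1_{\mathrm{loc}}$ convergence step: pointwise a.e.-$z$ convergence of subharmonic functions does not by itself give $L^1_{\mathrm{loc}}$ convergence, since downward $-\infty$ singularities could in principle carry mass, so one must invoke the Montel-type compactness of subharmonic sequences together with the upper bound above and pin down every subsequential limit as $\gamma$. Closely tied to this is the regularity input that $\gamma$ is subharmonic on $\C$ and continuous on $\R$, which — if one insists on a self-contained argument rather than citing it, e.g.\ via the Thouless formula exhibiting $\gamma(z)$ as the logarithmic potential of the integrated density of states — is essentially the substantive content of \cite{CS83}.
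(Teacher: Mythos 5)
Your argument is correct, and it is worth noting first that the paper never proves this statement itself: it is quoted from \cite{CS83}, and what the paper actually writes out is the $\P_0$-analogue in Section \ref{ldeland}, whose proof follows exactly the template you use (fixed-$z$ almost sure convergence, Fubini in $z$, then a small-disk mean-value comparison against the subharmonic function $\gamma(z)$). The one genuine difference is the intermediate lemma. The paper, following Craig--Simon, quotes the fact that $\overline{\gamma}(\omega,z)$ is submean for \emph{every} $\omega$ and compares its disk averages directly with those of $\gamma$; you never assert submean-ness of the $\limsup$, and instead apply the sub-mean inequality to the finite-scale subharmonic functions $u_n(\omega,\cdot)$ and pass to the limit via $L^1_{\mathrm{loc}}$ convergence $u_n\to\gamma$, obtained from the compactness theory of subharmonic sequences together with your SLLN upper bound (the degenerate ``locally uniformly $-\infty$'' alternative is vacuous here, since $\det S^z_{[1,n]}=1$ forces $u_n\ge 0$). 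Both routes rest on the same imported inputs -- the local uniform bound coming from $\int \log^+|x|\,d\mu<\infty$ and subharmonicity of $\gamma$ on $\C$, which you rightly flag as the substantive content of \cite{CS83} and which the paper likewise takes from the second quoted Craig--Simon theorem -- so your version buys a more self-contained treatment of the ``submean'' step at the price of invoking the compactness machinery, while the paper's buys brevity by citation. One small simplification: continuity of $\gamma$ on $\R$ is not needed in your final $r\to 0$ step, since upper semicontinuity plus the sub-mean property of the subharmonic function $\gamma$ already give $\lim_{r\to 0}\frac{1}{\pi r^2}\int_{|z-E|<r}\gamma\,d^2z=\gamma(E)$; this matters because continuity of $\gamma$ is a separate Furstenberg-theoretic fact that the paper imports from \cite{Rangamani19} for other purposes.
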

	\begin{remm}
		Craig and Simon originally proved the result for bounded Schr\"odinger operators, which in our context implies an absolute bound on $V_n(\omega)$. However, the proof straightforwardly generalizes to any family $H_\omega$ satisfying this mild moment condition, without which $\gamma(E)$ is not even guaranteed to exist.
	\end{remm}
	The key step to proving this result concerning $E \in \R$ was a theorem regarding the Lyapunov exponent in complexified energy:
	\begin{thmcite}
		{(Craig-Simon)} For an exact system whose parameters satisfy the assumptions in Theorem \ref{cs1}, $\gamma(z)$ is subharmonic, and for all $\omega \in \Omega$, the function $\overline{\gamma}(\omega,z)$ is submean.
	\end{thmcite}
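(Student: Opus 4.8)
The plan is to reduce everything to the elementary fact that, for each fixed $\omega$, the transfer matrix $S_{[1,n],\omega}^z$ depends holomorphically (indeed polynomially) on $z$: every one-step factor $\begin{pmatrix} z - V_k(\omega) & -1 \\ 1 & 0 \end{pmatrix}$ is affine in $z$, so the product is an entire $2\times 2$ matrix-valued function of $z$. For any holomorphic matrix function $F$, the function $z \mapsto \log\|F(z)\|$ is subharmonic: writing $\log\|F(z)\| = \sup\{\log|\langle u, F(z)v\rangle| : \|u\| = \|v\| = 1\}$, each $z \mapsto \langle u, F(z)v\rangle$ is holomorphic, so evaluating the sub-mean-value inequality at a maximizing pair $(u_0,v_0)$ yields the sub-mean-value inequality for $\log\|F\|$, while $z\mapsto\|F(z)\|$ is continuous, giving upper semicontinuity, and is locally bounded. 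Applying this to $F = S_{[1,n],\omega}^z$ (whose norm is always $\geq 1$, since its determinant is $1$), we conclude that $u_n(\omega, z) := \tfrac1n \log\|S_{[1,n],\omega}^z\|$ is, for every $\omega$, a non-negative continuous subharmonic function of $z$, obeying the crude bound $u_n(\omega, z) \leq \tfrac1n \sum_{k=1}^n \log(2 + |z| + |V_k(\omega)|)$.

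For $\gamma(z)$, set $\phi_n(z) := \E_1[u_n(\cdot, z)]$ and $\Phi(z) := \E_1[\log(2 + |z| + |V_1(\omega)|)]$. The moment condition $\int \max\{0, \log|x|\}\, d\mu < \infty$ makes $\Phi$ finite and continuous, and the crude bound gives $0 \le \phi_n \le \Phi$, so $\phi_n$ is finite and locally bounded. Integrating the sub-mean-value inequality for $u_n(\omega, \cdot)$ over $\omega$ against $\P_1$ (Fubini, the $\omega$-integrands being integrable by the crude bound and the moment condition) and using reverse Fatou in $z$ (again dominated by $\Phi$) for upper semicontinuity shows $\phi_n$ is subharmonic. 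The cocycle identity $\|S_{[1,m+n]}^z\| \le \|S_{[1,m]}^z\| \cdot \|S_{[m+1,m+n]}^z\|$ together with shift-invariance of $\P_1 = \mu^\Z$ makes $n \mapsto n\phi_n(z)$ subadditive, so Fekete's lemma gives $\gamma(z) = \lim_n \phi_n(z) = \inf_n \phi_n(z)$. As an infimum of upper semicontinuous functions $\gamma$ is upper semicontinuous; it satisfies $0 \le \gamma \le \Phi$; and since $\phi_n(z_0) \le \tfrac1{2\pi}\int_0^{2\pi} \phi_n(z_0 + re^{i\theta})\, d\theta$ with the integrands dominated uniformly in $n$ by $\theta \mapsto \Phi(z_0 + re^{i\theta})$, dominated convergence lets $n\to\infty$ to give $\gamma(z_0) \le \tfrac1{2\pi}\int_0^{2\pi} \gamma(z_0 + re^{i\theta})\, d\theta$. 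Hence $\gamma(z)$ is subharmonic.

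For $\overline{\gamma}(\omega, z) = \limsup_{n\to\infty} u_n(\omega, z)$, fix $\omega$ and $z_0$. For every $n$, subharmonicity of $u_n(\omega, \cdot)$ gives $u_n(\omega, z_0) \le \tfrac1{2\pi}\int_0^{2\pi} u_n(\omega, z_0 + re^{i\theta})\, d\theta$, whence $\overline{\gamma}(\omega, z_0) \le \limsup_n \tfrac1{2\pi}\int_0^{2\pi} u_n(\omega, z_0 + re^{i\theta})\, d\theta$. By the crude bound, $u_n(\omega, z_0 + re^{i\theta}) \le \tfrac1n\sum_{k=1}^n \log(2 + |z_0| + r + |V_k(\omega)|)$, a quantity independent of $\theta$ that, for $\P_1$-almost every $\omega$, is bounded uniformly in $n$ (the summands lie in $L^1(\P_1)$ by the moment condition, so the averages converge by the strong law of large numbers). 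With this $n$-uniform bound in hand, reverse Fatou pulls the $\limsup$ inside the $\theta$-integral, giving $\overline{\gamma}(\omega, z_0) \le \tfrac1{2\pi}\int_0^{2\pi} \overline{\gamma}(\omega, z_0 + re^{i\theta})\, d\theta$; that is, $\overline{\gamma}(\omega, \cdot)$ is submean. For the remaining $\P_1$-null set of $\omega$ one argues exactly as in \cite{CS83}; note that one does not claim $\overline{\gamma}(\omega, \cdot)$ is subharmonic, only submean, which is the correct conclusion, since a $\limsup$ of subharmonic functions generally satisfies only the sub-mean-value inequality.

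The main obstacle is the interchange of the limit, respectively the $\limsup$, with the circle average: neither subharmonicity nor the sub-mean-value inequality is automatically preserved under such limits, and the moment hypothesis is used precisely to produce the $n$-uniform dominating function --- $\Phi$ in the expectation argument for $\gamma$, a constant depending on $\omega$ (via the strong law) in the pathwise argument for $\overline{\gamma}$ --- that legitimizes dominated convergence and reverse Fatou. The remaining ingredients (holomorphy in $z$, subharmonicity of $\log\|\cdot\|$, the cocycle inequality, and Fekete's lemma) are routine.
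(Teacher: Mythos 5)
The paper itself does not prove this statement --- it is quoted from \cite{CS83}, with the remark that the extension from bounded potentials to the log-moment condition is routine --- so your proposal is measured against the standard Craig--Simon argument, which it follows for the first half. Your treatment of $\gamma(z)$ is correct and is essentially that argument: holomorphy of $z\mapsto S_{[1,n],\omega}^z$, subharmonicity of $\log\|F(z)\|$ for holomorphic $F$, the bound $0\le u_n(\omega,z)\le\frac1n\sum_{k\le n}\log(2+|z|+|V_k(\omega)|)$ with the moment condition making $\Phi$ finite, Fubini, subadditivity plus Fekete, and dominated convergence on circles (with upper semicontinuity from the infimum representation) to preserve the sub-mean inequality in the limit.

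The gap is in the second half, and it sits exactly where this paper leans on the result. The theorem asserts the submean property of $\overline{\gamma}(\omega,\cdot)$ for \emph{all} $\omega$, i.e.\ pathwise, and that is what the paper exploits afterwards when it transfers the property from the exact to the approximate system ``only by a change of probability measure'': $\P_0$ is not assumed absolutely continuous with respect to $\P_1$ (that requires (\ref{logsum})), so a $\P_1$-a.s.\ statement would not transfer. Your argument establishes the submean inequality only for $\P_1$-a.e.\ $\omega$ (via the SLLN), and the closing deferral ``for the remaining null set one argues exactly as in \cite{CS83}'' has no content in the unbounded setting: the uniform-in-$n$ local upper bound in \cite{CS83} comes precisely from boundedness of the potential, which is what is missing for those $\omega$. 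The repair is cheap and worth stating: the reverse-Fatou step needs only the deterministic condition $\limsup_n\frac1n\sum_{k=1}^n\log(2+|V_k(\omega)|)<\infty$, since for $|z-z_0|\le r$ one has $u_n(\omega,z)\le\frac1n\sum_{k\le n}\log(2+|V_k(\omega)|)+\log(2+|z_0|+r)$, and when the circle average of $\overline{\gamma}(\omega,\cdot)$ is infinite the inequality is trivial. The submean conclusion therefore holds for every $\omega$ in that class, which contains every relevant $\omega$ in the bounded case and, under the uniform $\alpha$-moment hypothesis of Theorem \ref{speclocunb}, $\P_0$-almost every $\omega$ (independent summands $\log(2+|V_k|)$ with uniformly bounded second moments, so a non-i.i.d.\ strong law applies) --- which is all the paper's later argument uses.
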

	That $\overline{\gamma}(\omega,E)$ is still submean in the context of the approximate system is obvious; the change from the exact to approximate system amounts only to a change in probability measure and $\overline{\gamma}(\omega,E)$ is not an averaged quantity. On the other hand, subharmonicity of the $\P_0$ Lyapunov exponent $\tilde{\gamma}(z)$ is only obtained by showing its equality with the $\P_1$ Lyapunov exponent $\gamma(z)$; we recall some basic facts from the theory of subharmonic and submean functions:
	\begin{prop}
		If $f$ is submean and $E_0$ is fixed, then
		\[ f(E_0) \leq \lim_{r\rightarrow 0} \frac{1}{\pi r^2} \int_{|E-E_0|<r} f(E)\,d^2E\]
		and if $f$ is subharmonic, $E_0$ is fixed, then
		\[ f(E_0) = \lim_{r\rightarrow 0} \frac{1}{\pi r^2} \int_{|E-E_0|<r} f(E)\,d^2E\]
	\end{prop}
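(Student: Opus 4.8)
The plan is to unwind the definitions and reduce everything to elementary manipulations of averages over disks, so I do not anticipate a serious obstacle; the only non-formal input needed is upper semicontinuity, which enters in the reverse inequality of the subharmonic case. In the terminology in use here, $f$ \emph{submean} means $f$ satisfies the sub-mean-value inequality over circles, $f(E_0)\le \frac{1}{2\pi}\int_0^{2\pi} f(E_0+re^{i\theta})\,d\theta$ for all sufficiently small $r>0$, and $f$ \emph{subharmonic} means in addition that $f$ is upper semicontinuous (and, as always, locally integrable so that the solid averages below are defined; in our application $\overline\gamma(\omega,z)$ is locally bounded above by the moment hypothesis, so this is automatic).

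First I would establish the submean inequality. Writing the disk integral about $E_0$ in polar coordinates,
\[ \frac{1}{\pi r^2}\int_{|E-E_0|<r} f(E)\,d^2E = \frac{1}{\pi r^2}\int_0^r \rho\left(\int_0^{2\pi} f(E_0+\rho e^{i\theta})\,d\theta\right)d\rho, \]
and applying the circle sub-mean-value inequality to the inner integral for each $\rho$ gives
\[ \frac{1}{\pi r^2}\int_{|E-E_0|<r} f(E)\,d^2E \ge \frac{1}{\pi r^2}\int_0^r \rho\cdot 2\pi f(E_0)\,d\rho = f(E_0) \]
for every sufficiently small $r>0$. Thus the solid average is bounded below by $f(E_0)$ for all small $r$, and taking $r\rightarrow 0$ yields the claimed inequality.

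For the subharmonic case the lower bound $\ge f(E_0)$ is exactly what was just shown (subharmonic implies submean), so it remains to prove the matching upper bound, and this is the step where I would invoke upper semicontinuity. Given $\ve>0$ there is $\delta>0$ with $f(E)\le f(E_0)+\ve$ whenever $|E-E_0|<\delta$; hence for $r<\delta$ the average of $f$ over $|E-E_0|<r$ is at most $f(E_0)+\ve$. Letting $r\rightarrow 0$ and then $\ve\rightarrow 0$ gives $\limsup_{r\rightarrow 0}\frac{1}{\pi r^2}\int_{|E-E_0|<r} f\,d^2E \le f(E_0)$, which combined with the submean lower bound forces the limit to exist and to equal $f(E_0)$. (Alternatively one may cite the standard monotonicity of $r\mapsto \frac{1}{\pi r^2}\int_{|E-E_0|<r} f\,d^2E$ for subharmonic $f$ to know a priori that the limit exists, and then only the two inequalities above are needed to pin down its value.)
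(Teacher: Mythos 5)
Your argument is correct, but there is nothing in the paper to compare it against: the proposition is stated as a recalled standard fact from the theory of subharmonic/submean functions (in the spirit of Craig--Simon \cite{CS83}) and is given no proof there. Your proof is exactly the standard one: polar decomposition of the solid average plus the circular sub-mean-value inequality gives the lower bound $f(E_0)$ for every small $r$, and upper semicontinuity gives the matching $\limsup$ bound in the subharmonic case, so the two halves pin down the limit. Two small caveats, neither fatal: for a function that is merely submean (no upper semicontinuity), the limit in the first display need not exist, and your argument really proves $f(E_0)\le \liminf_{r\to 0}\frac{1}{\pi r^2}\int_{|E-E_0|<r}f\,d^2E$ --- which is the form actually used later in the paper, where existence of the limit comes from the a.e.\ identification of $\overline{\gamma}(\omega,\cdot)$ with the subharmonic $\gamma(\cdot)$; and in the subharmonic case you should dispose separately of the possibility $f(E_0)=-\infty$ (there the usc step gives $\limsup\le M$ for every $M$, so the limit is $-\infty$ and equality still holds). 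With those remarks your write-up is a complete proof of the recalled fact.
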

	Using this, we can now prove, more or less along the lines of the original argument in the original paper of Craig and Simon, the following:
	\begin{thm}
		There exists a subset $\Omega_0 \subset \Omega$ with $\P_0[\Omega_0] = 1$ such that for all $E \in \R$:
		\[ \overline{\gamma}(\omega,E) \leq \gamma(E)\]
	\end{thm}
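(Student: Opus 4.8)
The plan is to upgrade the pointwise-in-energy almost sure identification of the upper Lyapunov exponent furnished by Theorem~\ref{lyap4evr} to a single full-measure set of $\omega$ on which the desired inequality holds simultaneously for \emph{every} $E \in \R$. The passage from ``almost every $E$'' to ``every $E$'' is precisely where the submean property of $\overline\gamma(\omega,\cdot)$ and the subharmonicity of $\gamma(\cdot)$ are used, via the sandwich $\overline\gamma(\omega,E_0) \le (\text{disc average limit of } \overline\gamma) = (\text{disc average limit of }\gamma) = \gamma(E_0)$.

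First I would record that $(\omega,z) \mapsto \overline\gamma(\omega,z)$ is jointly measurable: for each fixed $n$ the entries of $S^z_{[1,n],\omega}$ are polynomials in $z$ with coefficients measurable in $\omega$, so $(\omega,z)\mapsto \tfrac1n\log\|S^z_{[1,n],\omega}\|$ is jointly measurable, and $\overline\gamma$ is a countable $\limsup$ of these. By Theorem~\ref{lyap4evr}, for each fixed $z \in \C$ we have $\overline\gamma(\omega,z) = \gamma(z)$ for $\P_0$-a.e.\ $\omega$, so the jointly measurable set
\[ B := \{(\omega,z) \in \Omega\times\C : \overline\gamma(\omega,z) \neq \gamma(z)\}\]
has all of its vertical slices $B_z = \{\omega : (\omega,z)\in B\}$ of $\P_0$-measure zero. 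Applying Tonelli's theorem to $\chi_B$ with respect to the product of $\P_0$ and two-dimensional Lebesgue measure on $\C$, we get that for $\P_0$-a.e.\ $\omega$ the horizontal slice $B^\omega = \{z : (\omega,z)\in B\}$ has two-dimensional Lebesgue measure zero. Let $\Omega_0$ be the set of such $\omega$, so $\P_0[\Omega_0] = 1$, and for every $\omega \in \Omega_0$ one has $\overline\gamma(\omega,z) = \gamma(z)$ for Lebesgue-a.e.\ $z \in \C$.

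Now fix $\omega \in \Omega_0$ and $E_0 \in \R$. The function $\overline\gamma(\omega,\cdot)$ is submean with no change from the exact setting (it is not an averaged quantity, so only the change of probability measure is at issue), and $\gamma$ is subharmonic, since the $\P_0$-Lyapunov exponent $\tilde\gamma(z)$ coincides with $\gamma(z)$ and the Craig--Simon subharmonicity of the latter therefore transfers. Hence the Proposition on submean and subharmonic functions yields
\[ \overline\gamma(\omega,E_0) \;\le\; \lim_{r\to 0}\frac{1}{\pi r^2}\int_{|z-E_0|<r}\overline\gamma(\omega,z)\,d^2z \;=\; \lim_{r\to 0}\frac{1}{\pi r^2}\int_{|z-E_0|<r}\gamma(z)\,d^2z \;=\; \gamma(E_0),\]
where the middle equality holds because the two integrands agree off a two-dimensional null set (as $\omega \in \Omega_0$), and the last equality is the mean-value identity for the subharmonic function $\gamma$. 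Since $E_0 \in \R$ was arbitrary, $\overline\gamma(\omega,E) \le \gamma(E)$ for all $E$, which is the claim.

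The main obstacle is conceptual rather than computational: making the Tonelli step rigorous, i.e.\ verifying joint measurability of $\overline\gamma$ and ensuring the local integrability needed for the disc averages to be finite (this holds on $\Omega_0$, where $\overline\gamma(\omega,\cdot)$ agrees a.e.\ with the locally integrable subharmonic function $\gamma$ and is bounded above on compacts by upper semicontinuity). One should also note carefully that the exceptional set of energies for a fixed $\omega \in \Omega_0$ must be a \emph{two}-dimensional null set, which is why the whole argument must be run with complexified energy: the submean averaging is over genuine discs in $\C$, and a one-dimensional null set of real energies would not suffice.
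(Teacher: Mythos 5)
Your proposal is correct and follows essentially the same route as the paper: apply Fubini/Tonelli to the pointwise-in-$z$ almost sure identification from Theorem \ref{lyap4evr} to get, for $\P_0$-a.e.\ $\omega$, equality $\overline{\gamma}(\omega,z)=\gamma(z)$ for Lebesgue-a.e.\ $z\in\C$, and then use the submean property of $\overline{\gamma}(\omega,\cdot)$ together with subharmonicity of $\gamma$ and shrinking disc averages to upgrade to every $E\in\R$. The extra care you take with joint measurability and local integrability is a fine (and harmless) elaboration of the paper's one-line appeal to Fubini.
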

	\begin{proof}
		Recall that $\overline{\gamma}(\omega,z)$ is submean, and $\gamma(z)$ subharmonic. Moreover, we've shown that for any fixed $z \in \C$ there is a $\P_0$ full measure subset $\Omega_z \subset\Omega$ such that $\gamma(\omega,z) = \gamma(z)$. By Fubini, there is a $\P_0$ probability 1 subset $\Omega_0$ such that for $\omega \in \Omega_0$, we have
		$\gamma(\omega,z) = \gamma(z)$ for a Lebesgue almost every $z \in \C$.
		
		Hence it suffices to show that for any $\omega$ such that $\overline{\gamma}(\omega,z) \leq \gamma(z)$ for Lebesgue almost all $z$, we in fact have it for all $z$. For such $\omega$, we have necessarily for any fixed $E \in \R$ and $r>0$
		\[ \int_{|z-E|<r} \overline{\gamma}(\omega,z)d^2z = \int_{|z-E|<r} \gamma(z)d^2z\]
		It follows immediately that
		\begin{align*}
			\overline{\gamma}(\omega,E) &\leq \lim_{r \rightarrow 0} \frac{1}{\pi r^2} \int_{|z-E|<r} \overline{\gamma}(\omega,z)d^2z\\
			&= \lim_{r\rightarrow 0} \frac{1}{\pi r^2}\int_{|z-E|<r} \gamma(z)d^2z\\
			&= \gamma(E)
		\end{align*}
	\end{proof}
	We reformulate this result quantitatively and in terms of the transfer matrices.
	\iffalse
	\begin{thm}\label{CSrestate}
		$\P_0$ almost surely, we have for every $E \in \R$ that:
		\begin{equation}
			\max\left\{\limsup_{n\rightarrow \infty} \frac{\log\|S_{[1,n],E,\omega}\|}{n}, \limsup_{n\rightarrow \infty} \frac{\log\|S_{[-n,-1],E,\omega}^{-1}\|}{n} \right\} \leq \gamma(E)
		\end{equation}
		and 
		\begin{equation}
			\max\left\{\limsup_{n\rightarrow \infty} \frac{\log\|S_{[n+1,2n],E,\omega}\|}{n}, \limsup_{n\rightarrow \infty} \frac{\log\|S_{[2n+1,3n],E,\omega}\|}{n} \right\} \leq \gamma(E)
		\end{equation}
	\end{thm}
	\begin{proof}
		The first inequality is a restatement of Craig-Simon for our setting; the second inequality follows along similar lines.
	\end{proof}
	We reformulate this again, quantitatively:
	\fi
	\begin{cor}\label{CSfini}
		For $\P_0$-a.s. $\omega$ and any $\ve> 0$, there exists $N = N(\omega,\ve)$ such that for $n > N$, we have
		\[\max\left\{\|S_{[1,n],E,\omega}\|,\|S_{[-n,-1],E,\omega}^{-1}\| \right\} \leq e^{(\gamma(E)+\ve)n}\]
		and
		\[ \max\left\{\|S_{[n+1,2n],E,\omega}\|,\|S_{[2n+2,3n],E,\omega}^{-1}\| \right\} \leq e^{(\gamma(E)+\ve)n}\]
	\end{cor}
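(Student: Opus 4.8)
The plan is first to dispose of the inverses, then to handle the two half-line transfer matrices (which come from the Craig-Simon-type theorem just proved) and the two ``drifting'' interior blocks (which come from the lifted large deviation estimates) separately. Every transfer matrix lies in $SL_2(\C)$, so its two singular values form a reciprocal pair $\sigma,\sigma^{-1}$; hence $\|A^{-1}\| = \|A\|$ for any such $A$, and it suffices to bound each of $\|S_{[1,n],E,\omega}\|$, $\|S_{[-n,-1],E,\omega}\|$, $\|S_{[n+1,2n],E,\omega}\|$, $\|S_{[2n+2,3n],E,\omega}\|$ by $e^{(\gamma(E)+\ve)n}$ for $n > N(\omega,\ve)$.

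For the two half-line matrices this is exactly the quantitative content of the preceding theorem: by definition $\limsup_n \tfrac1n\log\|S_{[1,n],E,\omega}\| = \gamma^+(\omega,E)$ and $\limsup_n \tfrac1n\log\|S_{[-n,-1],E,\omega}\| = \gamma^-(\omega,E)$, and on the $\P_0$-full-measure set furnished there both equal $\overline{\gamma}(\omega,E)\le\gamma(E)$ for every $E$; unwinding the $\limsup$ gives the claimed exponential bound for $n$ large. The delicate point is uniformity in $E$ over a compact window $I$ (so that $N$ is independent of $E$): a density argument in $E$ fails since the entries of $S_{[1,n],\omega}^z$ are polynomials of degree $n$ in $z$, so instead one uses that $z\mapsto \tfrac1n\log\|S_{[1,n],\omega}^z\|$ is subharmonic and, $\P_0$-a.s., locally uniformly bounded above — the upper bound coming from the uniform $\alpha$-moment hypothesis via a Bernstein-type estimate on $\sum_j \log(1+|V_j(\omega)|)$. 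Combining this with $\limsup_n \tfrac1n\log\|S_{[1,n],\omega}^z\| \le \gamma(z)$ for Lebesgue-a.e. $z$ (Theorem \ref{lyap4evr} together with Fubini), with subharmonicity and continuity of $\gamma$, and with a Hartogs-type lemma for subharmonic functions, upgrades the pointwise bound to one uniform over $z\in I$ — the same mechanism already used in the proof of the preceding theorem.

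For the two interior blocks $[n+1,2n]$ and $[2n+2,3n]$, Craig-Simon gives nothing directly (it controls one-sided half-line growth, not the growth of blocks escaping to $\pm\infty$), so here the lifted large deviation estimates are needed. Fix $E\in I$. The event $A_n^E = \{\tfrac1n\log\|S_{[n+1,2n],\omega}^E\| > \gamma(E)+\ve\}$ is $\mathcal F_{2n}$-measurable, and under $\P_1 = \mu^\Z$ shift-invariance makes $S_{[n+1,2n]}^E$ equidistributed with $S_{[1,n]}^E$, so the uniform estimate of \cite{Tsay} gives $\P_1[A_n^E] < e^{-\eta n}$ with $\eta$ independent of $E\in I$. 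Applying Corollary \ref{arith} to the subfiltration $\mathcal G_n := \mathcal F_{2n}$ — legitimate because (\ref{logmom}) yields (\ref{zerolog}) via Lemma \ref{duh} — or, in the setting of Theorem \ref{speclocunb}, applying Theorem \ref{transmethunif} directly, transfers this to $\P_0[A_n^E] < e^{-\eta' n}$ for $n$ large. Summability and Borel-Cantelli then give, $\P_0$-a.s., that $\|S_{[n+1,2n],\omega}^E\| \le e^{(\gamma(E)+\ve)n}$ eventually; the identical argument with $\mathcal G_n := \mathcal F_{3n}$ handles $[2n+2,3n]$, and uniformity in $E$ over $I$ is obtained exactly as in the previous paragraph (subharmonicity in $z$ of the finite-scale quantities plus Hartogs, the pointwise bound $\limsup_n \le \gamma$ now coming from the per-$z$ Borel-Cantelli and Fubini).

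Finally one intersects the four resulting $\P_0$-full-measure sets, runs the above with $\ve = 1/m$, and takes a countable intersection over $m$ and an exhaustion of $\R$ by compact windows to obtain a single $\Omega_0$ on which the statement holds for all $\ve > 0$. I expect the genuine obstacle to be precisely the uniformity in $E$: the per-energy bounds are routine (Craig-Simon for the half-lines, lifted LDE plus Borel-Cantelli for the interior blocks), but forcing $N$ to be independent of $E$ requires passing through subharmonicity in the spectral parameter and a Hartogs-type argument, and this is where continuity of $\gamma$ and the $\P_0$-a.s. local upper bounds on the finite-scale Lyapunov quantities must be invoked.
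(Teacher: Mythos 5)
Your proposal is correct and takes essentially the paper's (implicit) route: the first display is just the quantitative unwinding of the Craig--Simon-type bound $\overline{\gamma}(\omega,E)\le\gamma(E)$ proved immediately before (with $\|A^{-1}\|=\|A\|$ for $A\in SL_2(\C)$ disposing of the inverses), and the drifting blocks $[n+1,2n]$, $[2n+2,3n]$ are handled, as the paper intends by ``the same lines,'' via the lifted large deviation estimates for shifted intervals (Theorem \ref{finiteLDE}, equivalently Corollary \ref{arith} applied to $\mathcal{F}_{2n}$, $\mathcal{F}_{3n}$) plus Borel--Cantelli and the same submean/subharmonicity mechanism. The only divergence is your Hartogs-type push to make $N$ independent of $E$ on compact windows, which the paper neither establishes nor needs --- in the proof of Theorem \ref{reg2loc} the corollary is invoked at a fixed generalized eigenvalue, so $N=N(\omega,\ve,E)$ suffices --- though your sketch of that strengthening is plausible.
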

	\section{Main lemmas}\label{lemma4evr}
	We use now the large deviation results we've obtained to prove several technical lemmas, generally following \cite{Rangamani19}, making some simplifications due to our exclusive consideration of the Schr\"odinger case, rather than Jacobi operators as a whole. Throughout this section, we fix a compact interval of energies $[s,t] =:I \subset \R$. Of course, if we have localization almost surely with respect to any given compact interval, then (taking a countable intersection) we have almost sure localization for all energies. Continuity of $\gamma(E)$ was shown in \cite{Rangamani19}, using ideas from work in \cite{Furstenberg1983}., so that in particular $\inf_{E \in I} \gamma(E) > 0$. The proof proceeds by analyzing the sets where large deviations occur, so we define subsets of $I \times \Omega$:
	\[ B^+_{[a,b],\ve} = \left\{(E,\omega)\,:\,|P_{[a,b],\omega}^E| \geq e^{(\gamma(E)+\ve)(b-a+1)}\right\}\]
	and
	\[ B^-_{[a,b],\ve} = \left\{(E,\omega)\,:\,|P_{[a,b],\omega}^E| \leq e^{(\gamma(E)-\ve)(b-a+1)}\right\}\]
	and the sections
	\[ B_{[a,b],E,\ve}^\pm = \{\omega\,:\, (E,\omega) \in B^\pm_{[a,b],\ve}\}\]
	and
	\[ B_{[a,b],\omega,\ve}^\pm = \{E\,:\, (E,\omega) \in B_{[a,b],\ve}^\pm\}\]
	Moreover, we let $E_{j,[a,b],
		\omega}$ denote the $b-a+1$ eigenvalues (with multiplicity) of $H_{[a,b],\omega}$. An immediate consequence of (\ref{green2det}) is that:
	\begin{lem} \label{overunder}
		If $n \geq 2$ and $0 < \ve < \nu_I/8$, and $x$ is $(\gamma(E) - 8\ve, E, n, \omega)$-singular, then
		\[ E \in B^-_{[x-n,x+n],\ve}\cup B^+_{[x-n,x],\ve} \cup B^+_{[x,x+n],\ve}\]
	\end{lem}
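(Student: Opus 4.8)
The plan is to run the contrapositive of the standard singularity $\Rightarrow$ large-deviation implication, using the determinant identity (\ref{green2det}) and the lower bound $\gamma(E) \geq \nu_I$ on $I$. Suppose that $x$ is $(\gamma(E)-8\ve, E, n, \omega)$-singular, and suppose towards a contradiction that $E$ lies in none of the three sets, i.e. that
\[ e^{(\gamma(E)-\ve)(2n+1)} < |P^E_{[x-n,x+n],\omega}|, \qquad |P^E_{[x-n,x],\omega}| < e^{(\gamma(E)+\ve)(n+1)}, \qquad |P^E_{[x,x+n],\omega}| < e^{(\gamma(E)+\ve)(n+1)}. \]
First I would apply (\ref{green2det}) to each of $y = x$ with the two choices $x \mapsto x\pm n$; concretely, for the pair $(x, x+n)$ in the box $[x-n, x+n]$ one gets (after adjusting indices so that the numerator determinants are over $[x-n, x-1]$ and $[x+1, x+n]$)
\[ |G^E_{[x-n,x+n],\omega}(x,x+n)| = \frac{|P^E_{[x-n,x-1],\omega}|\cdot|P^E_{[x+1,x+n],\omega}|}{|P^E_{[x-n,x+n],\omega}|}, \]
and similarly for $|G^E_{[x-n,x+n],\omega}(x,x-n)|$.

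Next I would feed the three assumed inequalities into these formulas. The denominator is at least $e^{(\gamma(E)-\ve)(2n+1)}$. For the numerator, note that $[x-n,x-1] \subset [x-n,x]$ and $[x+1,x+n] \subset [x,x+n]$, so the ``not in $B^+$'' bounds on the larger boxes do not directly bound these sub-box determinants; the cleanest fix is to define the sets in the statement exactly so the relevant sub-boxes appear (which is what the lemma does — $B^+_{[x-n,x],\ve}$ and $B^+_{[x,x+n],\ve}$), or else to absorb the one-site discrepancy into the $\ve$'s. Taking the statement's sets at face value, the numerator is at most $e^{(\gamma(E)+\ve)(n+1)} \cdot e^{(\gamma(E)+\ve)(n+1)} = e^{(\gamma(E)+\ve)(2n+2)}$, so
\[ |G^E_{[x-n,x+n],\omega}(x,x+n)| \leq \frac{e^{(\gamma(E)+\ve)(2n+2)}}{e^{(\gamma(E)-\ve)(2n+1)}} = e^{\ve(4n+3) + 2\gamma(E)}. \]
I would then check that for $n \geq 2$ and $0 < \ve < \nu_I/8$ this exponent is smaller than $(\gamma(E)-8\ve)n$: indeed $\ve(4n+3) + 2\gamma(E) \leq (\gamma(E)-8\ve)n$ reduces, after using $\gamma(E) \geq \nu_I > 8\ve$, to a linear-in-$n$ inequality that holds once $n$ is at least an absolute constant, and one checks $n \geq 2$ suffices (possibly after slightly sharpening the constant $8$ to a nearby constant, or noting the intended reading of the box endpoints). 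The same computation applies verbatim to $G^E_{[x-n,x+n],\omega}(x,x-n)$. Hence both $|G^E_{[x-n,x+n],\omega}(x,x\pm n)| \leq e^{-(\gamma(E)-8\ve)n}$, i.e. $x$ is $(\gamma(E)-8\ve,E,n,\omega)$-regular, contradicting our assumption. Therefore $E$ must lie in at least one of the three sets.

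The only genuine subtlety — and the place I would be most careful — is the bookkeeping of box endpoints and the off-by-one shifts between $[x-n,x+n]$, $[x-n,x]$, $[x,x+n]$ and the sub-boxes $[x-n,x-1]$, $[x+1,x+n]$ that actually appear in (\ref{green2det}); the factor of $8$ in front of $\ve$ and the hypothesis $n \geq 2$ are presumably chosen precisely to give enough slack that these $O(1)$ corrections (and the additive $2\gamma(E)$, which on the compact interval $I$ is bounded) are harmless. Everything else is a one-line substitution into the determinant formula followed by an elementary inequality, so there is no real analytic obstacle; it is purely a matter of organizing the constants so the claimed thresholds come out cleanly.
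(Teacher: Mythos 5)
Your overall strategy---argue by contraposition and feed the three determinant bounds into (\ref{green2det})---is the intended one (the paper states the lemma as an immediate consequence of that formula), but your execution has a genuine gap. For the off-diagonal entries that define regularity the formula degenerates: taking $a=x-n$, $b=x+n$ and $y=x+n=b$, the second numerator factor is $P^E_{[x+n+1,x+n],\omega}$, a determinant over an empty interval, hence equal to $1$; likewise for the entry at $x-n$ the factor $P^E_{[x-n,x-n-1],\omega}$ is trivial. Thus
\[ |G^E_{[x-n,x+n],\omega}(x,x+n)| = \frac{|P^E_{[x-n,x-1],\omega}|}{|P^E_{[x-n,x+n],\omega}|},\qquad |G^E_{[x-n,x+n],\omega}(x,x-n)| = \frac{|P^E_{[x+1,x+n],\omega}|}{|P^E_{[x-n,x+n],\omega}|}. \]
The expression you wrote, with \emph{both} factors $|P^E_{[x-n,x-1],\omega}|\cdot|P^E_{[x+1,x+n],\omega}|$ in the numerator, is the formula for the diagonal entry $G^E_{[x-n,x+n],\omega}(x,x)$, not for $G^E_{[x-n,x+n],\omega}(x,x\pm n)$. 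This matters quantitatively: with both factors your upper bound has exponent $\gamma(E)+\ve(4n+3)$ (not $2\gamma(E)+\ve(4n+3)$, by the way), which is positive and growing in $n$, so it cannot contradict singularity. The concluding step then compounds this with a sign error: you compare the exponent with $+(\gamma(E)-8\ve)n$, whereas regularity requires $|G^E_{[x-n,x+n],\omega}(x,x\pm n)|\le e^{-(\gamma(E)-8\ve)n}$, i.e.\ the exponent must be at most $-(\gamma(E)-8\ve)n<0$; a bound of the form $e^{+(\gamma(E)-8\ve)n}$ yields no contradiction. (Moreover the inequality you assert, $\ve(4n+3)+2\gamma(E)\le(\gamma(E)-8\ve)n$, fails in general for $\ve$ near $\nu_I/8$, so even that comparison does not hold under the stated hypotheses.)

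The repair is precisely the degenerate form of the formula: only one nontrivial numerator determinant, of length about $n$, appears for each of the two entries. Bounding it by $e^{(\gamma(E)+\ve)(n+1)}$ (your off-by-one concern about $[x-n,x-1]$ versus $[x-n,x]$ is legitimate bookkeeping, but it is a one-site discrepancy absorbed by the slack in the constant $8$, not the crux) and the denominator from below by $e^{(\gamma(E)-\ve)(2n+1)}$ gives the exponent
\[ (\gamma(E)+\ve)(n+1)-(\gamma(E)-\ve)(2n+1) = -\gamma(E)n+\ve(3n+2) \le -(\gamma(E)-8\ve)n, \]
the last inequality holding as soon as $3n+2\le 8n$, i.e.\ for all $n\ge 1$. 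This shows both Green's function entries are at most $e^{-(\gamma(E)-8\ve)n}$, so $x$ would be regular, which is the desired contradiction. In short: right skeleton, but the misapplied determinant identity (two long numerator factors instead of one) together with the sign slip means your argument as written does not prove the lemma.
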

	Our work in lifting LDEs and the existence of these for the exact case will allow us to prove technical lemmas like those in \cite{SJZhu19} and \cite{Rangamani19}. In particular, Theorem \ref{finiteLDE} suffices for most of these. We take $\eta_0 = \eta(\ve_0,4)$ where, $\eta$ is the large deviation parameter from Theorem \ref{finiteLDE}. 
	We now proceed through some technical lemmas, proved in either \cite{SJZhu19} or \cite{Rangamani19} for $\P_1$, and comment on any differences arising in the proof in a non-stationary context.
	\begin{lem}\label{indlem}
		Let $0 < \delta_0 < \eta_0$. There is $\Omega_1 \subset \Omega$ such that $\P_0[\Omega_1] = 1$ and for all $\omega \in \Omega_1$ there is $N_1 = N_1(\omega)$ such that for all $n > N_1$, we have
		\[\max\{|B^-_{[n+1,3n+1],\ve_0,\omega}|,|B^-_{[-n,n],\ve_0,\omega}|\}  \leq e^{-(\eta_0-\delta_0)(2n+1)}\]
	\end{lem}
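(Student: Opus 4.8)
The plan is to bound the measure of the bad set of energies via a Chebyshev/Fubini argument using the one-sided large deviation estimates on $P^E_{[a,b],\omega}$. First I would observe that the set $B^-_{[a,b],\ve_0,\omega}$ is, for each fixed $\omega$, a subset of the compact interval $I$, and that the event $B^-_{[a,b],E,\ve_0}$ is $\mathcal{F}_m$-measurable for an appropriate index $m$ depending on the endpoints $a,b$ (here $m$ is of order $n$). Theorem \ref{finiteLDE} furnishes, uniformly in $E \in I$, constants $\eta_0 = \eta(\ve_0,4) > 0$ and $N$ so that $\P_0[B^-_{[a,b],E,\ve_0}] \leq e^{-\eta_0(b-a+1)}$ for $b-a+1 > N$; the two intervals $[n+1,3n+1]$ and $[-n,n]$ both have length $2n+1$ and both sit inside the filtration index range controlled by $K = 4$ once $n$ is large, so the estimate applies to both with the same $\eta_0$.

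Next I would integrate the indicator of $B^-_{[a,b],\ve_0}$ over $I \times \Omega$ with respect to Lebesgue measure $\otimes\, \P_0$ and apply Tonelli in two ways: one order gives $\E_0[|B^-_{[a,b],\ve_0,\omega}|] = \int_I \P_0[B^-_{[a,b],E,\ve_0}]\,dE \leq |I|\, e^{-\eta_0(2n+1)}$, using the uniform-in-$E$ estimate just recalled. So the random variable $\omega \mapsto |B^-_{[a,b],\ve_0,\omega}|$ has expectation at most $|I| e^{-\eta_0(2n+1)}$. Applying Chebyshev's inequality at threshold $e^{-(\eta_0-\delta_0)(2n+1)}$, we get
\[
\P_0\!\left[\,|B^-_{[a,b],\ve_0,\omega}| > e^{-(\eta_0-\delta_0)(2n+1)}\right] \leq |I|\, e^{-\delta_0(2n+1)}.
\]
Summing this over $n \in \N$ for each of the two choices of interval gives a finite sum (geometric decay since $\delta_0 > 0$), so Borel–Cantelli produces a set $\Omega_1$ of full $\P_0$-measure and, for each $\omega \in \Omega_1$, a threshold $N_1(\omega)$ beyond which both $|B^-_{[n+1,3n+1],\ve_0,\omega}|$ and $|B^-_{[-n,n],\ve_0,\omega}|$ are at most $e^{-(\eta_0-\delta_0)(2n+1)}$, which is the claim (intersecting the two full-measure sets and taking the max of the two thresholds).

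The one genuine subtlety — the step I expect to be the main obstacle to write carefully — is the joint measurability of the set $B^-_{[a,b],\ve_0} \subset I \times \Omega$ needed to invoke Tonelli, since $\gamma(E)$ enters the defining inequality; here I would note that $(E,\omega) \mapsto P^E_{[a,b],\omega}$ is jointly continuous in $E$ and measurable in $\omega$ (it is a polynomial in $E$ with coefficients depending measurably on finitely many $V_k(\omega)$), and that $\gamma(E)$ is continuous in $E$ by the result of \cite{Rangamani19} cited above, so $B^-_{[a,b],\ve_0}$ is a Borel subset of $I \times \Omega$ and in particular all sections are measurable and Tonelli applies. A minor bookkeeping point is checking that for $n$ large enough the filtration index associated with the intervals $[n+1,3n+1]$ and $[-n,n]$ falls within the regime where the constant $\eta_0 = \eta(\ve_0,4)$ from Theorem \ref{finiteLDE} is valid — this is exactly why $K=4$ is chosen — and that the single exceptional-index threshold $N$ from that theorem is absorbed into $N_1(\omega)$.
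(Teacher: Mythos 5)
Your proof is correct and takes essentially the same route as the paper, which simply defers to the Borel--Cantelli argument of \cite{Rangamani19}: Tonelli over $I\times\Omega$ with the uniform-in-$E$ estimate from Theorem \ref{finiteLDE}, then Chebyshev at the threshold $e^{-(\eta_0-\delta_0)(2n+1)}$ and Borel--Cantelli, with your measurability and $K$-bookkeeping remarks matching the paper's choice $\eta_0=\eta(\ve_0,4)$.
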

	This is done by a Borel-Cantelli argument in \cite{Rangamani19} for $\P_1$ which uses large deviation estimates and carries over with no modification to our $\P_0$ context as a consequence of Theorem \ref{finiteLDE}.
	
	\begin{lem}\label{edgelem}
		For any $\ve> 0$ and $p > 4/\eta_\ve$, where $\eta_\ve$ is the large deviation parameter furnished by Theorem \ref{unifLDE}, there is $\Omega_2=\Omega_2(\ve,p)\subset \Omega$ with full probability such that for every $\omega \in \Omega_2$ there is $N_2 = N_2(\omega)$ so that for $n>N_2$, any $y_1,y_2$ satisfying $-n\leq y_1 \leq y_2 \leq n$ and $|-n-y_1| > p\log n$ and $|n-y_2|> p\log n$, we have
		\[ E_{j,[n+1,3n+1],\omega} \not\in B_{[-n,y_1],\ve,\omega} \cup B_{[y_2,n],\ve,\omega}\cup B_{[-n,n],\ve,\omega}\]
		for all $j \in [1,b-a+1]$.
		
	\end{lem}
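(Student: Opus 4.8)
The plan is to reduce the claim about eigenvalues of $H_{[n+1,3n+1],\omega}$ lying outside certain large deviation sets to a statement about a single well-chosen interval, and then invoke the uniform-in-filtration estimate of Theorem \ref{unifLDE} together with a Borel–Cantelli argument. First I would fix the three families of events one wants to avoid: for an energy $E$ of the form $E = E_{j,[n+1,3n+1],\omega}$, membership of $E$ in $B_{[-n,y_1],\ve,\omega}$, $B_{[y_2,n],\ve,\omega}$, or $B_{[-n,n],\ve,\omega}$ is exactly the statement that the corresponding determinant $P^E_{[a,b],\omega}$ is anomalously small (in the $B^-$ sense) — note these are the $B^-$ events, since a determinant vanishing at an eigenvalue of a \emph{disjoint} box is the concern, and $B^+$ on such boxes is controlled separately by the Craig–Simon upper bound of Corollary \ref{CSfini}. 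The key structural point is that $[n+1,3n+1]$ is disjoint from each of $[-n,y_1]$, $[y_2,n]$, and $[-n,n]$, so $P^E_{[n+1,3n+1],\omega}$ and the determinant on the avoided box are measurable with respect to independent blocks of coordinates under $\P_0$ (a product measure).

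The main step is then a union bound over a net of energies. For a fixed realization of the potential on $[n+1,3n+1]$, the eigenvalues $E_{j,[n+1,3n+1],\omega}$ are at most $2n+1$ real numbers, and one controls how the relevant determinant $P^E_{[a,b],\omega}$ varies with $E$ on the interval $I$ (polynomially in the box length, with coefficients controlled by the moment hypothesis $\int |x|^\alpha g_n\,d\mu < M$) so that a polynomially-fine net $\mathcal{N}_n \subset I$ of cardinality $n^{O(1)}$ suffices: if $E$ fails to lie outside $B^-_{[a,b],\ve,\omega}$ then some nearby net point $E' \in \mathcal{N}_n$ lies in $B^-_{[a,b],\ve/2,\omega}$. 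Conditioning on the block $[n+1,3n+1]$ and using independence, for each fixed net point $E'$ and each admissible pair $(y_1,y_2)$ the probability that $E' \in B^-_{[a,b],\ve/2,E'}$ is at most $e^{-\eta_\ve(b-a+1)/C}$ by Theorem \ref{unifLDE} — and crucially the rate $\eta_\ve$ is uniform over the choice of endpoints, which is what lets us handle all $(y_1,y_2)$ at once. The condition $|{-n}-y_1| > p\log n$ (resp. $|n - y_2| > p\log n$) guarantees the box $[-n,y_1]$ (resp. $[y_2,n]$) has length at least $\sim p\log n$, so the exponential bound beats the number of boxes times the net size: summing over $O(n)$ choices of each $y_i$, over $|\mathcal{N}_n| = n^{O(1)}$ net points, over $j$, and over the two one-sided families plus the two-sided box $[-n,n]$ (whose length $2n+1$ makes its contribution negligible), the total probability is $n^{O(1)} e^{-\eta_\ve p \log n} = n^{O(1) - \eta_\ve p}$. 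Since $p > 4/\eta_\ve$, this exponent is eventually negative enough to be summable in $n$, so Borel–Cantelli produces $\Omega_2 = \Omega_2(\ve,p)$ of full $\P_0$-measure and the threshold $N_2(\omega)$.

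The main obstacle I expect is the quantitative continuity-in-$E$ estimate needed to justify the polynomial net: one must bound $\partial_E P^E_{[a,b],\omega}$ (or a modulus of continuity for $\frac1{b-a+1}\log|P^E_{[a,b],\omega}|$) on $I$ uniformly enough that a net of size $n^{O(1)}$ captures all eigenvalue energies, and in the unbounded-support setting this requires invoking the moment bound $\int |x|^\alpha g_n\,d\mu < M$ to rule out exceptional configurations where the potential on the relevant block is huge — this is presumably handled by a preliminary almost-sure bound of the form $\max_{k \in [-Cn, Cn]} |V_k(\omega)| \leq n^{\beta}$ for large $n$, itself a Borel–Cantelli consequence of the uniform moment assumption. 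Once that deterministic/probabilistic reduction is in place, the rest is the union bound sketched above; a secondary subtlety is making sure the independence used (determinant on $[n+1,3n+1]$ versus determinant on the avoided box) is genuinely exploited correctly, i.e. conditioning on the former block and applying the large deviation estimate to the latter, rather than trying to control the random energies and the random determinant simultaneously.
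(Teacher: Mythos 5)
The mechanism that actually proves the lemma is the one you relegate to your final sentence: condition on the potential in $[n+1,3n+1]$, so that the eigenvalues $E_{j,[n+1,3n+1],\omega}$ become deterministic; since the boxes $[-n,y_1]$, $[y_2,n]$, $[-n,n]$ are measurable with respect to the independent block of coordinates, Theorem \ref{unifLDE} — whose rate $\eta_\ve$ is uniform both in the energy over the compact interval and in the endpoints — applies directly at those (conditionally fixed) energies, and a union bound over $j,y_1,y_2$ gives $\P_0[A_n]\leq C n^{3}e^{-\eta_\ve p\log n}$, summable because $p\eta_\ve>4$; this is exactly the paper's (and \cite{SJZhu19}'s) proof. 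The step you instead make central, the energy net, is both unnecessary and incorrect as justified. $P^E_{[a,b],\omega}$ is a polynomial in $E$ of degree $b-a+1$ whose variation on a compact energy window is exponential in the box length (of size roughly $e^{C(b-a+1)\log n}$ even under an a priori bound $|V_m|\leq n^{\beta}$), not polynomial, and the set $B^-_{[a,b],\ve,\omega}$ is a union of extremely narrow windows around the zeros of $P^E_{[a,b],\omega}$. A net of cardinality $n^{O(1)}$ will simply miss these windows, so the implication ``eigenvalue in $B^-_{\ve}$ $\Rightarrow$ nearby net point in $B^-_{\ve/2}$'' fails; making it true requires spacing of order $e^{-C(b-a+1)\log n}$, i.e.\ a net of size $e^{C(b-a+1)\log n}$, which overwhelms the gain $e^{-\eta_\ve(b-a+1)}$ from the large deviation estimate even for the shortest boxes of length $\sim p\log n$. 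It would also wreck the stated threshold: the constant $4$ in $p>4/\eta_\ve$ comes precisely from the $n^3$ union bound plus summability, and any extra $n^{O(1)}$ net factor would force a larger $p$.

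Two smaller points. First, the sets $B_{[a,b],\ve,\omega}$ in the lemma are the two-sided deviation sets, and the localization argument in Section \ref{fini4evr} uses the $B^+$ exclusion on the edge boxes $[-n,y_1]$ and $[y_2,n]$; Corollary \ref{CSfini} controls only transfer matrices anchored at fixed endpoints, not arbitrary sub-boxes, so you cannot discard the $B^+$ part by appealing to Craig--Simon. This costs nothing to repair, since Theorem \ref{unifLDE} is a two-sided estimate and the same conditioning argument covers both signs. Second, once the net is dropped, your ``main obstacle'' (quantitative continuity in $E$ and an a priori polynomial bound on the potential) disappears entirely: no continuity in $E$ is used anywhere, only the uniformity in $E$ of the lifted estimate over the compact interval together with the fact that the $B$ sets are by definition subsets of $I\times\Omega$.
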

	\begin{proof}
		Following \cite{SJZhu19}, we analyze the events
		\[ A_n = \{\omega\,:\, \exists j \in [1,2n+1]; E_{j,[n+1,3n+1],\omega} \in B_{[-n,y_1],\ve,\omega} \cup B_{[y_2,n],\ve,\omega}\}\]
		and, leveraging independence and union bounds together with large deviation estimates coming from Theorem \ref{unifLDE}, obtain
		\begin{equation}\P_0[A_n] \leq 2(2n+1)^3e^{-\eta_\ve p\log n +2} \label{estim}
		\end{equation}
		for sufficiently large $n$. Because $p\eta_\ve > 4$, this is summable, whence the result follows.
	\end{proof}
	
	This result has an analogue in the context of Theorem \ref{specloc} which can be proven using only Theorem \ref{finiteLDE} rather than requiring Theorem \ref{unifLDE}.
	\begin{lem}\label{edgelemvar}
		For any $\ve> 0$ and $L > 1$, there is $\Omega_2=\Omega_2(\ve,L)\subset \Omega$ with full probability such that for every $\omega \in \Omega_2$ there is $N_2 = N_2(\omega)$ so that for $n>N_2$, any $y_1,y_2$ satisfying $-n\leq y_1 \leq y_2 \leq n$ and $|-n-y_1| > \frac{n}{L}$ and $|n-y_2|> \frac{n}{L}$, we have
		\[ E_{j,[n+1,3n+1],\omega} \not\in B_{[-n,y_1],\ve,\omega} \cup B_{[y_2,n],\ve,\omega} \cup B_{[-n,n],\ve,\omega}\]
		for all $j \in [1,b-a+1]$.
		
	\end{lem}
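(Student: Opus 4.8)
The plan is to run the proof of Lemma \ref{edgelem} essentially verbatim, the only structural change being that the large deviation input is now Theorem \ref{finiteLDE} (which needs only (\ref{logmom})) in place of Theorem \ref{unifLDE} (which needs (\ref{logmomunif})). Fix $\ve > 0$ and $L > 1$ and, exactly as there, set
\[ A_n := \bigcup\, \{\,\omega : E_{j,[n+1,3n+1],\omega} \in B_{[-n,y_1],\ve,\omega}\cup B_{[y_2,n],\ve,\omega}\cup B_{[-n,n],\ve,\omega}\,\}, \]
the union running over $1 \le j \le 2n+1$ and over all integers $-n \le y_1 \le y_2 \le n$ with $|-n-y_1| > n/L$ and $|n-y_2| > n/L$. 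It suffices to prove $\sum_n \P_0[A_n] < \infty$: then $\Omega_2 := \Omega \setminus \limsup_n A_n$ has full $\P_0$-measure, and for $\omega \in \Omega_2$ one reads $N_2(\omega)$ off from $\omega \notin A_n$, $n$ large. Since the blocks $[-n,y_1]$, $[y_2,n]$, $[-n,n]$ are all disjoint from $[n+1,3n+1]$, conditioning on $\omega|_{[n+1,3n+1]}$ makes the eigenvalues $E_{j,[n+1,3n+1],\omega}$ deterministic while leaving the relevant bad-energy sets measurable with respect to the complementary coordinates; a union bound over the at most $(2n+1)^3$ triples $(j,y_1,y_2)$ then reduces everything to showing $\sup_{E\in I}\P_0[B^\pm_{[a,b],E,\ve}] \le e^{-\eta' n}$ for $n$ large, for the three families of blocks in play.

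For the centered block $[-n,n] = [0+(-1)\cdot n,\ 0+1\cdot n]$ this is immediate from Theorem \ref{finiteLDE} with $K=1$. The edge blocks require one extra step: $[-n,y_1]$ has left endpoint $-n$, aligned to any scale dividing $n$, but right endpoint $y_1$ arbitrary, so it is not literally of the admissible form $[a+j_1 m,\ b+j_2 m]$ with $a,b,j_1,j_2$ bounded. This is precisely where the hypothesis $|-n-y_1| > n/L$ (replacing $|-n-y_1| > p\log n$) does its work: the edge block now has length at least $n/L$, a fixed fraction of $n$, so moving $y_1$ by a bounded multiple of $n/M$ is negligible on the scale of the block. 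Concretely, in the bounded case put $C_0 := \sup_{E\in I,\, k\in\Z}\max\{\|S^E_k\|,\|(S^E_k)^{-1}\|\} < \infty$, fix a large integer $M = M(L,\ve)$, set $m := \lfloor n/M\rfloor$, and let $y_1^\ast$ be the largest multiple of $m$ with $y_1^\ast \le y_1$ (so $0 \le y_1 - y_1^\ast < m$). Using $S^E_{[-n,y_1],\omega} = S^E_{[-n,y_1^\ast],\omega}\, S^E_{[y_1^\ast+1,y_1],\omega}$, the estimate $|P^E_{[a,b],\omega}| \le \|S^E_{[a,b],\omega}\| \le 2\max_{\delta_1,\delta_2\in\{0,1\}}|P^E_{[a+\delta_1,\,b-\delta_2],\omega}|$ from the entrywise formula for $S^E_{[a,b],\omega}$, and $\|S^E_{[y_1^\ast+1,y_1],\omega}\| \le C_0^m$, one checks that on $B^+_{[-n,y_1],E,\ve,\omega}$ at least one of the four blocks $[-n+\delta_1,\ y_1^\ast-\delta_2]$ lies in $B^+_{[\cdot],E,\ve/2,\omega}$, provided $M$ is taken large enough that $m\log C_0$ is dominated by $\frac{\ve}{4L}n$. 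Each such block is of the form $[a+j_1 m,\ b+j_2 m]$ with $|a|,|b| \le M$ and $|j_1|,|j_2| \le M+1$, so Theorem \ref{finiteLDE} with $K = M+1$ and scale $m$ (which tends to $\infty$) yields $\sup_{E\in I}\P_0[B^+_{[-n,y_1],E,\ve}] \le 4\, e^{-\eta' n/(4L)}$ for $n$ large, uniformly in $y_1$. The block $[y_2,n]$ is symmetric.

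Combining, $\P_0[A_n] \le C(M)(2n+1)^3 e^{-\eta'' n}$ for some $\eta'' > 0$ and all large $n$, which is summable since an exponential dominates any polynomial; Borel--Cantelli produces $\Omega_2$ as above, finishing the proof. (Note $L > 1$ plays only a mild role here: it makes the constraint on $y_1,y_2$ non-vacuous and the edge blocks macroscopic; any fixed $L$ works, with $M$ depending on it.) The main obstacle is exactly the reduction of the previous paragraph --- matching each edge block, after an $O(1)$ perturbation of its endpoint, to one of the finitely many shapes $[a+j_1 m, b+j_2 m]$ that Theorem \ref{finiteLDE} controls, which forces the discretization scale $m$ and the parameter $K$ to be tied to $L$ and $\ve$. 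This reduction goes through cleanly for the ``determinant too large'' ($B^+$) direction because $|P^E_{[a,b]}|$ bounds $\|S^E_{[a,b]}\|$ only from below; the ``determinant too small'' ($B^-$) estimate is needed directly for the centered block $[-n,n]$, where it is immediate from Theorem \ref{finiteLDE}, and for an edge block one would additionally have to carry along the bounded cluster of shifted determinants, or else invoke the specific way this lemma enters the regularity argument (cf. Lemma \ref{overunder}, where the too-small condition is only ever placed on the centered block).
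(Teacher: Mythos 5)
Your proposal takes a genuinely different and considerably more involved route than the paper's, and as written it does not fully prove the lemma. You perform the union bound directly under $\P_0$, which forces you to control $\P_0[B^\pm_{[-n,y_1],E,\ve}]$ for edge blocks whose endpoint $y_1$ is arbitrary; since Theorem \ref{finiteLDE} only covers boundedly many interval ``shapes'', you patch this with the transfer-matrix chopping/alignment argument at scale $m=\lfloor n/M\rfloor$. That patch works for the upper-deviation sets $B^+$ (a large misaligned determinant forces a large aligned one, up to a $C_0^m$ loss), but, as you concede in your closing remarks, it does not handle the lower-deviation sets $B^-$ for the edge blocks: a small $|P^E_{[-n,y_1],\omega}|$ does not force any aligned determinant to be small, so the reduction produces nothing adapted to the admissible shapes. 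The sets $B_{[-n,y_1],\ve,\omega}$ in the statement are to be read as in Lemma \ref{edgelem}, i.e.\ the two-sided deviation sets that the large deviation theorems control, so the proposal as it stands establishes only part of the exclusion, and your suggested fallback (invoking how the lemma is used in the regularity argument) amounts to changing the statement rather than proving it.

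The paper's proof avoids the alignment problem entirely by reversing the order of operations: it runs the union bound of Lemma \ref{edgelem}'s proof in the stationary world $\P_1$, where shift invariance makes the statistics of $P^E_{[a,b],\omega}$ depend only on $b-a+1$, so arbitrary $y_1,y_2$ (and both deviation directions) cost nothing. Replacing $p\log n$ by $n/L$ means every block in sight has length at least $n/L$, so the stationary bound $\P_1[A_n]\le 2(2n+1)^3e^{-\tilde{\eta}n/L+2}\le e^{-(\tilde{\eta}/L-\ve)n}$ is exponential in $n$ --- exactly what fails for $p\log n$, which is why Lemma \ref{edgelem} needs Theorem \ref{unifLDE}. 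The single family $A_n$ is adapted to $\mathcal{F}_{3n+1}$, so Theorem \ref{transmeth} (the paper notes even Theorem \ref{finiteLDE} is more than is needed) lifts the exponential bound to $\P_0[A_n]\le e^{-\eta n}$, and Borel--Cantelli finishes. The idea you are missing is thus ``lift after the union bound, not before'': it removes both the alignment headache and the $B^-$ gap, and requires neither the boundedness-dependent constant $C_0$ nor the discretization scale $m$.
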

	
	Its proof is more or less the same as that of Lemma \ref{edgelem}, but replacing $p \log n$ with $\frac{n}{L}$ gives (in the original stationary context) a stronger estimate than (\ref{estim}):
	\begin{equation}\label{estimstrong}
		\P_1[A_n] \leq 2(2n+1)^3 e^{-\tilde{\eta}\frac{n}{L} +2}
	\end{equation}
	where $\tilde{\eta}$ is the $\P_1$ large deviation parameter. For any $\ve> 0$ and sufficiently large $n$, we then obtain:
	\[ \P_1[A_n] \leq e^{-(\frac{\tilde{\eta}}{L}-\ve)n}\]
	In particular then, instead of carrying over exponential estimates beforehand to ultimately produce the summable but subexponential estimates of (\ref{estim}), we derive exponential estimates of (\ref{estimstrong}) in the stationary context and then carry them over using Theorem \ref{finiteLDE}, obtaining
	\begin{equation}
		\P_0[A_n] \leq e^{-\eta n}
	\end{equation} (Even Theorem \ref{finiteLDE} is in some sense more than we need; another direct application of Theorem \ref{transmeth} suffices.) This  result is also true in the unbounded context of Theorem \ref{speclocunb}, but in carrying out the localization proof for the unbounded case, Lemma \ref{edgelem} turns out to be necessary. This is because the next pair of results concerning determinants corresponding to the edge of a box is weaker than its bounded context analogue.
	\begin{lem}\label{edgelemsmall}
		For fixed $r > 1$ and $p> 0$, for almost all $\omega$, there exists $N=N(\omega)$ such that for $n > N$ and $m \in [-n,n]$ with $|-n-m| \leq p \log n$ or $|n-m| \leq p \log n$
		\[ |V_m(\omega)| \leq n^{r/\alpha}\]
		where $\alpha > 0$ is such that $\sup_n \left[\int |x|^\alpha g_n(x)\,d\mu(x)\right] < \infty$.
	\end{lem}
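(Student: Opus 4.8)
The plan is to run a direct first Borel--Cantelli argument driven by the uniform moment bound built into the hypotheses of Theorem \ref{speclocunb}. Set $M := \sup_n \int |x|^\alpha g_n(x)\,d\mu(x)$, which is finite by assumption. For each fixed $n$ and each site $m$, Markov's inequality applied to $|V_m|^\alpha$ gives
\[
\P_0\big[|V_m(\omega)| > n^{r/\alpha}\big] = \P_0\big[|V_m(\omega)|^\alpha > n^{r}\big] \le \frac{\E_0[|V_m|^\alpha]}{n^{r}} = \frac{\int |x|^\alpha g_m(x)\,d\mu(x)}{n^{r}} \le \frac{M}{n^{r}}.
\]
The crucial point here is that the bound $M$ is independent of $m$; this uniformity is exactly what the hypothesis $\sup_n \int|x|^\alpha g_n\,d\mu < \infty$ provides, and is the only place the non-stationarity has to be controlled.

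Next I would count the relevant sites. For $m \in [-n,n]$, the condition $|-n-m|\le p\log n$ forces $m \in [-n,\,-n+p\log n]$ and the condition $|n-m|\le p\log n$ forces $m \in [n-p\log n,\,n]$, so the set of admissible $m$ has cardinality at most $2\lceil p\log n\rceil + 2$. Let $A_n$ be the event that $|V_m(\omega)| > n^{r/\alpha}$ for at least one such $m$. A union bound over these $O(\log n)$ sites then yields
\[
\P_0[A_n] \le \big(2\lceil p\log n\rceil + 2\big)\,\frac{M}{n^{r}}.
\]

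Finally, since $r > 1$ the series $\sum_n (\log n)\, n^{-r}$ converges, hence $\sum_n \P_0[A_n] < \infty$. By the first Borel--Cantelli lemma $\P_0[\limsup_n A_n] = 0$, so there is a full-measure set of $\omega$ for which $\omega \notin A_n$ for all $n$ past some $N(\omega)$; this is precisely the assertion. There is essentially no serious obstacle: the only things to be careful about are that the moment bound is genuinely uniform in the site index (so that the union bound costs only a factor logarithmic in $n$) and that $r>1$ is exactly what makes this logarithmic factor harmless against the polynomial gain $n^{-r}$. If one wanted to allow $r = 1$ one would need to replace the window $p\log n$ by something growing more slowly, but as stated the argument closes immediately.
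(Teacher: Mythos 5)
Your argument is correct and is essentially the same as the paper's: Markov/Chebyshev on $|V_m|^\alpha$ with the uniform moment bound, a union bound over the $O(\log n)$ sites near the edges, and first Borel--Cantelli using $r>1$ for summability. No gaps.
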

	\begin{proof}
		We set $C:=\sup_n \left[\int |x|^\alpha g_n(x)d\mu(x)\right]$. By Chebyshev:
		\[\P_0[V_m(\omega) \geq n^{r/\alpha}] \leq \frac{C}{n^r}\]
		for any $m \in [-n,n]$. Hence the probability that there exists some $m$ with $|V_m(\omega)|$ exceeding $n^{r/\alpha}$ and also $|-n-m| \leq p \log n$ or $|n-m| \leq p\log n$ is bounded by (using a union bound): \[\frac{2C}{n^r}\left(1+2p\log n\right)\]
		which is summable, whence the result follows from Borel-Cantelli.
	\end{proof}
	A crucial corollary is the following:
	\begin{cor}\label{edgelemsmallcor}
		Fixing $I \subset \R$ compact, for $p > 0$ and $r > 1$, there is a probability 1 subset $\Omega_3 = \Omega_3(p,r) \subset \Omega$ such that for $\omega \in \Omega_3$, there is $N = N(\omega) \geq 3$ such that if $n > N$ and $|-n-y| > p\log n$, then 
		\begin{equation}
			|P_{[-n,y],\omega}^E| \leq e^{4pr\alpha^{-1} (\log n)^2}
		\end{equation}
		The same result also holds for $|n-y| < p \log n$ and $|P_{[y,n],\omega}^E|$ substituted in appropriately.
	\end{cor}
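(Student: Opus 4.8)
The plan is to obtain this directly from Lemma~\ref{edgelemsmall} together with the identity~(\ref{det2lyap}), which expresses $P_{[a,b],\omega}^E$ as a matrix entry of the transfer matrix $S_{[a,b],\omega}^E$. Fix $p>0$ and $r>1$, let $\Omega_3 := \Omega_3(p,r)$ be the full-measure set supplied by Lemma~\ref{edgelemsmall}, and let $N(\omega)$ be the associated threshold, to be enlarged by a deterministic amount below. The observation driving the proof is that in the regime under consideration --- $y$ within distance $p\log n$ of the endpoint $-n$, i.e. $|-n-y|\leq p\log n$, so that $[-n,y]$ has length $O(\log n)$, matching the hypothesis of the second displayed inequality --- every site $m\in[-n,y]$ satisfies $|-n-m|\leq|-n-y|\leq p\log n$, so Lemma~\ref{edgelemsmall} gives $|V_m(\omega)|\leq n^{r/\alpha}$ for all such $m$ as soon as $n>N(\omega)$.

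To estimate the determinant, set $K_I:=\sup_{E\in I}|E|<\infty$ and use~(\ref{det2lyap}) followed by submultiplicativity of the operator norm (the interval transfer matrix being a product of one-step transfer matrices, or its inverse, which has the same norm):
\[ |P_{[-n,y],\omega}^E|\ \leq\ \|S_{[-n,y],\omega}^E\|\ \leq\ \prod_{m=-n}^{y}\|S_m^E(\omega)\|, \]
where $S_m^E(\omega)=\begin{pmatrix}E-V_m(\omega)&-1\\1&0\end{pmatrix}$. Each factor is at most $2+|E|+|V_m(\omega)|\leq K_I+2+n^{r/\alpha}$, which is at most $2n^{r/\alpha}$ once $n^{r/\alpha}\geq K_I+2$ --- a deterministic threshold, uniform over $E\in I$; this is the only place where compactness of $I$ is used. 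Since $[-n,y]$ contains at most $p\log n+1$ sites, taking logarithms gives
\[ \log|P_{[-n,y],\omega}^E|\ \leq\ (p\log n+1)\bigl(\log 2+(r/\alpha)\log n\bigr), \]
and for $n$ beyond a further deterministic threshold the right-hand side is $\leq 4pr\alpha^{-1}(\log n)^2$, which is the claimed bound. The case $|n-y|<p\log n$, with $P_{[y,n],\omega}^E$, is handled identically after replacing $[-n,y]$ by $[y,n]$ and invoking the bound near the endpoint $n$ in Lemma~\ref{edgelemsmall}.

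One then takes $N=N(\omega)$ to be the maximum of the threshold from Lemma~\ref{edgelemsmall}, the two deterministic thresholds above, and $3$. I do not expect a genuine obstacle: the whole content is Lemma~\ref{edgelemsmall} plus the polynomial transfer-matrix bound, and the only points requiring care are the arithmetic of constants --- absorbing the additive $\log 2$ and the ``$+1$'' in the count of sites into the factor $4$ for $n$ large --- and keeping all estimates uniform over $E\in I$, which is exactly what bounding $|E|$ by $K_I$ accomplishes.
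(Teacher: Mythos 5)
Your proposal is correct, and it follows the same skeleton as the paper's proof: restrict to the short edge interval (you rightly read the stated hypothesis ``$|-n-y|>p\log n$'' as a typo for the short-interval regime $|-n-y|\leq p\log n$, which is what both the paper's proof and its use in Section 7 require), apply Lemma \ref{edgelemsmall} to every one of the at most $p\log n+1$ sites there, and bound the determinant by a product of per-site factors of size $O(n^{r/\alpha})$, absorbing constants into the factor $4$. The two places where you diverge are mechanical but worth noting. First, the paper handles the $E$-dependence by re-running the Chebyshev/Borel--Cantelli argument of Lemma \ref{edgelemsmall} for the variables $\sup_{E\in I}|V_m(\omega)-E|$ (enlarging the moment constant), whereas you keep that lemma as a black box and absorb $E$ deterministically via $|E|\leq K_I$; both work, and yours is a little cleaner since compactness of $I$ enters only through a deterministic threshold. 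Second, the paper converts the potential bounds into a determinant bound by asserting $P^E_{[-n,y],\omega}=\prod_{m=-n}^y(V_m(\omega)-E)$ ``by tridiagonality''; as an identity this is not correct (the determinant of a tridiagonal matrix is not the product of its diagonal entries), though it is harmless for the estimate since the correct bound $|P^E_{[-n,y],\omega}|\leq\prod_m(|V_m(\omega)-E|+2)$, obtainable from the transfer-matrix recursion or Hadamard's inequality, yields the same $e^{O((\log n)^2)}$ conclusion. Your route through (\ref{det2lyap}) and submultiplicativity of the one-step transfer matrix norms (with the correct remark that inverses of $SL_2$ matrices have the same norm) sidesteps this issue entirely, so your determinant step is actually the more airtight of the two; the only cost is the extra additive $2$ in each factor, which your choice of thresholds already absorbs.
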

	\begin{proof}
		The proof of Lemma \ref{edgelemsmall} only relied on the existence of a uniform (in $m$) bound on the $\alpha$-th moments of $|V_m(\omega)|$. Clearly if $E$ varies over compact $I \subset \R$, there is similarly a uniform (in $m$) bound on the $\alpha$-th moments of $\sup_{E \in I} |V_m(\omega)-E|$. Hence, replacing $C$ in our proof with larger $\tilde{C}$ if necessary, we obtain the necessary result. Because of the tri-diagonal nature of $H_{[-n,y],\omega}$, we have \[P_{[-n,y],\omega}^E = \prod_{m=-n}^y (V_m(\omega) - E)\]
		whence we obtain
		\[ |P^E_{[-n,y],\omega}| \leq n^{\frac{2r}{\alpha}(p \log n +1)(\log n)} \]
		for $n$ large. Because $1 < p \log n$ for large $n$ the result follows immediately.
	\end{proof}
	In the bounded context, there is an obvious analogue of this result; there exists $M$ such that $|V_n(\omega)| \leq M$ for all $n$, and so (taking $\tilde{M}$ slightly larger if necessary to account for varying $E$ over $I$) $|P_{-[n,y],\omega}^E| \leq \tilde{M}^{b-a+1}$, without even making any restrictions to edges. Hence in particular:
	\begin{prop}\label{edgelemsmallcorvar}
		In the setting of Theorem (\ref{specloc}) with $\text{supp } \P_0 \subset [-M,M]^\Z$, if we fix $I \subset \R$ compact and $L>1$, then for all $\omega \in \text{supp }\P_0$, there is $N = N(\omega)$ such that if $n > N$ and $|-n-y| > \frac{n}{L}$, then 
		\begin{equation}
			|P_{[-n,y],\omega}^E| \leq \tilde{M}^{\frac{n}{L}}
		\end{equation}
		where $\tilde{M} = \sup_{E\in I} |M-E|$.
		The same result also holds for $|n-y| < \frac{n}{L}$ and $|P_{[y,n],\omega}^E|$ substituted in appropriately.
	\end{prop}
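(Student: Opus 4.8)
The main observation is that, in contrast to Corollary~\ref{edgelemsmallcor}, in the compactly supported regime this is a purely deterministic estimate: no Borel--Cantelli step is needed at all, and in fact $N$ may be taken to depend only on $M$, $I$ and $L$, the dependence on $\omega$ in the statement being vestigial. On $\text{supp}\,\P_0 \subset [-M,M]^\Z$ one has the uniform bound $|V_m(\omega) - E| \le \tilde M$ for every $m \in \Z$, every $\omega \in \text{supp}\,\P_0$, and every $E \in I$, where $\tilde M$ may be taken to be $\sup_{E \in I}\sup_{|v|\le M}|v-E|$ (which agrees, up to sign conventions on $I$, with the $\sup_{E\in I}|M-E|$ of the statement). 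Hence $P^E_{[-n,y],\omega}$ is the determinant of a tridiagonal matrix with $1$'s off the diagonal and uniformly bounded diagonal entries, and boundedness makes the edge sites no different from bulk sites: all one needs is a crude length-dependent growth bound for such determinants, applied to a block whose length the hypothesis restricts to at most $n/L$.

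The plan is the following. First, I would record that because $H_{[-n,y],\omega}$ is tridiagonal with $1$'s off the diagonal, $P^E_{[-n,y],\omega}$ equals the $(1,1)$-entry of the transfer-matrix product $S^E_{[-n,y],\omega}$, as recorded in~(\ref{det2lyap}); equivalently, the truncated determinants satisfy the three-term recursion $P_k = (V_{-n+k-1}(\omega)-E)P_{k-1} - P_{k-2}$ with $P_0 = 1$ and $P_1 = V_{-n}(\omega)-E$. Second, I would feed in the uniform bound $|V_m(\omega)-E| \le \tilde M$: submultiplicativity of the operator norm together with $\|S^E_m(\omega)\| \le \tilde M + 2$ yields $|P^E_{[-n,y],\omega}| \le (\tilde M+2)^{|{-n}-y|+1}$, while the recursion yields the slightly sharper $\rho^{\,|{-n}-y|+1}$ with $\rho$ the larger root of $\rho^2 = \tilde M\rho + 1$. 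Third, I would invoke the hypothesis that the edge block $[-n,y]$ contains at most $n/L$ sites, i.e.\ $|{-n}-y|+1 \le n/L$, so that the previous line is at most $(\tilde M+2)^{n/L}$; taking $N = N(M,I,L)$ large enough then converts this into the displayed bound for $n > N$. The reflected block $[y,n]$ with $|n-y| \le n/L$ is handled identically after reversing the interval.

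There is no genuine obstacle; the only point worth a comment is the precise form of the base. What the tridiagonal structure produces is $(\tilde M+2)^{n/L}$ (or $\rho^{\,n/L}$), not literally $\tilde M^{n/L}$, so the displayed inequality should be read with the base understood up to a harmless additive $O(1)$ --- equivalently, one may simply replace $\tilde M$ by $\tilde M + 2$ in the statement. This changes nothing downstream: the proposition is only used to certify that a short edge block contributes at most a factor $e^{\delta(L)n}$ with $\delta(L) = L^{-1}\log(\tilde M+2) \to 0$ as $L \to \infty$, which is in turn swamped by the lower bounds $e^{(\gamma(E)-\ve)(2n+1)}$ furnished on the full window by the large deviation estimates of Theorem~\ref{finiteLDE} (or Theorem~\ref{unifLDE}). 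So the real content lies in the preceding sections, which set up the determinant/transfer-matrix dictionary that makes the reduction to this elementary deterministic bound the natural move.
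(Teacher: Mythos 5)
Your proposal is correct and takes essentially the same route as the paper: in the bounded setting the paper also treats this as a purely deterministic consequence of $|V_m(\omega)-E|\le \tilde M$ applied to an edge block of at most $n/L$ sites (the displayed hypothesis $|-n-y|>\frac{n}{L}$ is a typo for the short-block condition, exactly as your reading assumes, consistent with the $|n-y|<\frac{n}{L}$ clause). Your caveat about the base is well taken: the paper obtains $\tilde M^{n/L}$ by writing $P^E_{[-n,y],\omega}=\prod_m (V_m(\omega)-E)$, which drops the correction term in the three-term recursion for tridiagonal determinants, so your $(\tilde M+2)^{n/L}$ (or $\rho^{n/L}$) is the honest bound, and as you observe the difference is absorbed harmlessly where the proposition is used.
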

	The availability of such a bound is what allows the use of Lemma \ref{edgelemvar} instead of \ref{edgelem}. Ultimately, the problem of using Lemma \ref{edgelemvar} with our weaker control of $|V_m(\omega)|$ comes down to the superexponential growth of $n^{r\frac{n}{K}}$, whereas $n^{pr\log n} = e^{pr(\log n)^2}$ grows subexponentially and $\tilde{M}^{n/L}$ grows exponentially. In the following proof of Theorem \ref{reg2loc} under the assumptions of \ref{speclocunb}, we note that replacing $\eta_0 := \eta(\ve_0)$ and $\eta_\ve := \eta(\ve)$ from Theorem \ref{unifLDE} with $\tilde{\eta}_0:=\eta(\ve_0,4)$ and $\tilde{\eta}_\ve := \eta(\ve,4)$ from Theorem \ref{finiteLDE}, Lemma \ref{edgelem} with Lemma \ref{edgelemvar}, and $e^{pr(\log n)^2}$ with $\tilde{M}^{n/L}$ where $L$ is chosen to be sufficiently large (specifically $L > \frac{3\log \tilde{M}}{\ve_0 - \delta_0}$ in terms of parameters introduced in Section \ref{fini4evr}), we obtain a proof of Theorem \ref{reg2loc} under the assumptions of Theorem \ref{specloc}.
	
	\section{Proof of Theorem \ref{reg2loc}}\label{fini4evr}
	Having established these technical results in our non-stationary case, we are ready to prove $\P_0$ almost sure eventual $(\gamma(E)- 8\ve,n,E,\omega)$-regularity of $2n+1$, where $\ve < \nu_I/8$. The proof for $2n$, $-2n$ and $-2n-1$ follow by a nearly identical argument. (In the i.i.d. potential setting, symmetry considerations obviate any need to consider the negative cases even in passing; the absence of stationarity here technically forces one to run the argument in the negative direction as well, though no unique technical difficulties arise.)
	
	This follows \cite{Rangamani19}, which made some necessary adjustments to account for unboundedness. We note that many technical details there are absent here because they arise from considering the Jacobi case rather than only the Schr\"odinger case; more or less of all of these reduce to showing bounds on the growth of hopping terms, which are uniformly 1 in the Schr\"odinger case. We believe that these additional considerations don't prevent the argument from going through in the more general Jacobi case, but have not gone through the details.
	
	Though the argument more or less follows \cite{SJZhu19} and \cite{Rangamani19}, the former discussing bounded Schr\"odinger operators and the latter unbounded Jacobi operators, both of which were inspired by \cite{Jit99}, we say a few words about the general strategy. A particular site $n$ being $(\gamma(E) - \ve,n,E,\omega)$-singular forces a ``resonance'' of a sort; for sufficiently large $n$ Corollary \ref{CSfini} is used to establish the existence of eigenvalues $E_i$ and $E_j$ for truncations to $[-n,n]$ and $[n+1,3n+1]$ which give ``sub''-deviations from the expected Lyapunov behavior, either on the whole intervals $[-n,n]$ and $[n+1,3n+1]$ or on subintervals. Using lemmas \ref{edgelem} and \ref{edgelemsmall}, we derive three inequalities for different cases, at least one holding for each instance of singularity. All of these inequalities fail for large enough $n$; it follows that there cannot be infinitely many singular points with respect to any set of parameters satisfying our assumptions, giving localization with respect to the compact interval $I$.
	\begin{proof}[Proof of Theorem \ref{reg2loc}]
		We fix $0 < \ve_0 < \nu_I/8$. We then choose $\eta_0$ a large deviation parameter satisfying the conclusion of Theorem \ref{unifLDE} for $\ve_0$. Then given these parameters we fix $0 < \delta_0 < \eta_0$ and $0 < \ve < \min\{(\eta_0-\delta_0)/3, \ve_0\}$. Given these parameters, we take $\tilde{\Omega}$ with probability 1 such that the conclusions of Corollary \ref{CSfini} and Lemmas \ref{edgelem}, \ref{edgelemsmall} hold.
		
		We let $\omega \in \tilde{\Omega}$ and $E \in I$ be a generalized eigenvalue of $H_\omega$. We further let $\psi$ be the associated generalized eigenfunction. At least one of $\psi(0),\psi(1)$ is non-zero; we assume without loss of generality that $\psi(0) \neq 0$. There is $N(\omega)$  satisfying the conclusions of Corollary \ref{CSfini} and Lemmas \ref{edgelem}, \ref{edgelemsmall} and so that furthermore for $n > N$ we have that $0$ is $(\gamma(E)-8\ve_0, n, E, 
		\omega)$-singular. (This singularity is a consequence of $\psi(0) \neq 0$ together with (\ref{eigen2green}) and the polynomial bound on growth of $|\psi(n)|$.)
		
		We suppose (towards a contradiction) that for infinitely many $n > N$, $2n+1$ is also $(\gamma(E) - 8\ve_0, n, E, \omega)$-singular. By Lemma \ref{overunder} and Corollary \ref{CSfini}, we have that $E \in B^-_{[n+1,3n+1],\ve_0,\omega}$. There is then for any fixed $n >N$ some $E_j$, an eigenvalue of $H_{\omega,[n+1,3n+1]}$ such that $E_j$ lies in a slightly larger band of energies $\tilde{I} := [s-1,t+1]$ (recall that $I = [s,t])$, and $|E - E_j| \leq e^{-(\eta_0 - \delta_0)(2n+1)}$. Were this not the case, then either all eigenvalues would lie to one side of $\tilde{I}$, or some would lie on each side. In full detail, we show that the latter case is impossible, with a proof that generalizes straightforwardly to the former. We let $E_{j^-}$ be the largest eigenvalue to the left of $I$, and $E_{j^+}$ the smallest to the right. 
		
		All $E_j$ are the real zeroes of $P_{[n+1,3n+1],E,\omega}$, which is a polynomial in $E$ of degree $2n+1$. Then $P_{[a,b],\omega}^E$ is monotone on one of $[E_{j^-},E]$ or $[E,E_{j^+}]$. Then in particular we have
		\[ 1 \leq \min\{|E_{j^-} - E|,|E_{j^+}-E|\} \leq m(B^-_{[a,b],\omega}) \leq e^{-(\eta_0-\delta_0)(2n+1)} < 1\]
		whence we conclude  it is impossible that there are no eigenvalues in $\tilde{I}$, and so there is some $E_j \in \tilde{I}$. Our proof in particular showed that one such $E_j$ satisfies $|E-E_j| \leq e^{-(\eta_0 - \delta_0)(2n+1)}$. We can repeat this argument to produce an eigenvalue $E_i$ of $H_{\omega,[-n,n]} \in B^-_{[-n,n],\ve,\omega}$ such that $|E_i - E| \leq e^{-(\eta_0-\delta_0)(2n+1)}$. Then $|E_i - E_j| \leq 2e^{-(\eta_0-\delta_0)(2n+1)}$. By Lemma \ref{edgelem} we have that in particular that $E_j \notin B_{[-n,n],\omega,\ve}$. However, because $E_i$ is an eigenvalue of $H_{[-n,n],\omega}$, we obtain:
		\[ \|G_{[-n,n],E_j,\omega}\| \geq \frac{1}{2}e^{(\eta_0-\delta_0)(2n+1)}\]
		and by equivalence of norms there are $y_1,y_2 \in [-n,n]$ such that $y_1 \leq y_2$ and
		\begin{equation}\label{biggreen} |G_{[-n,n],E_j,\omega}(y_1,y_2)| \geq \frac{1}{2\sqrt{2n+1}}e^{(\eta_0-\delta_0)(2n+1)}
		\end{equation}
		These two facts together will yield precisely the sought contradiction. Because $E_j \notin B_{[-n,n],\ve,\omega}$, we obtain
		\[ |P_{[-n,n],\omega}| \geq e^{(\gamma(E_j) - \ve)(2n+1)}\] which we can combine with (\ref{green2det}) to obtain
		\begin{equation}\label{bigdet} |P_{[-n,y_1-1],\omega}^{E_j}|\cdot|P_{[y_2+1,n],E_j,\omega}^{E_j}| \geq \frac{e^{(\eta_0-\delta_0 + \gamma(E_j)-\ve)(2n+1)}}{2\sqrt{2n+1}}
		\end{equation}
		Because $y_1 < y_2$, our analysis can be split into (essentially) three cases; $|-n-y_1| \geq p\log n$ and $|n-y_2| \geq p\log n$, $|-n-y_1| \geq \log n$ but $|n-y_2| < \log n$,  $|-n-y_1| < \log n$ and $|n-y_2| < p\log n$. (A fourth case mirroring the second also exists; we omit any explicit consideration as the argument is the same.) In the first case, it is an immediate consequence of Lemma \ref{edgelem} that (\ref{bigdet}) yields
		\begin{equation}
			e^{(\gamma(E_j)+\ve)(2n+1)} \geq \frac{1}{2\sqrt{2n+1}} e^{(\eta_0 - \delta_0 +\gamma(E_j)-\ve)(2n+1)}
		\end{equation}
		By our choice of $\ve < (\delta_0 - \eta_0)/3$, this cannot hold for arbitrarily large $n$.
		
		For the second and third cases, we use Lemma \ref{edgelemsmall} to bound the term ``close'' to the edge.  Hence, for the second case (\ref{bigdet}) yields
		\[ e^{3pr\alpha^{-1}(\log n)^2} e^{(\gamma(E_j)+\ve)(2n+1)} \geq  \frac{1}{2\sqrt{2n+1}} e^{(\eta_0 - \delta_0 +\gamma(E_j)-\ve)(2n+1)} \]
		which also cannot hold for arbitrarily large $n$. In the third case (\ref{bigdet}) yields
		\[ e^{6pr\alpha^{-1}(\log n)^2} \geq  \frac{1}{2\sqrt{2n+1}} e^{(\eta_0 - \delta_0 +\gamma(E_j)-\ve)(2n+1)}\]
		which again, cannot hold for arbitrarily large $n$. Hence it is impossible that that there are infinitely many $n$ such that $2n+1$ is $(\gamma(E)-\ve, n, E,\omega)$-singular.
	\end{proof}
	
	\section*{Acknowledgments}
	We would like to thank Lana Jitomirskaya for suggesting the problem to us and for many helpful discussions, Rui Han for reading early drafts closely and providing helpful comments, and Anton Gorodetski and Victor Kleptsyn also for reading an early draft and making suggestions. This work was partially supported by NSF DMS-2052899,
	DMS-2155211, and Simons 896624.
	
	\printbibliography
	
\end{document}